\newtheorem{theorem}{Theorem}
\newtheorem{lemma}[theorem]{Lemma}
\newtheorem{corollary}[theorem]{Corollary}
\newtheorem{example}[theorem]{Example}
\newtheorem{remark}[theorem]{Remark}
\begin{document}

\title{Linear Complementary Pairs of Algebraic Geometry Codes via Kummer Extensions}
\author{Junjie Huang\thanks{J. Huang, H. Chen and H. Zhang are with the School of Mathematics, Sun Yat-sen University, Guangzhou 510275, China (e-mail: huangjj76@mail2.sysu.edu.cn; e-mail: chenhj69@mail2.sysu.edu.cn; e-mail: zhanghch56@mail2.sysu.edu.cn).}, Haojie Chen, Huachao Zhang,
Chang-An Zhao$^*$\thanks{C.-A. Zhao is with the School of Mathematics, Sun Yat-sen University, Guangzhou 510275, China, and also with the Guangdong Key Laboratory of Information Security Technology, Guangzhou 510006, China (e-mail: zhaochan3@mail.sysu.edu.cn).\\
*-corresponding author}}

\maketitle

\begin{abstract}

Due to their widespread applications, linear complementary pairs (LCPs) have attracted much attention in recent years. In this paper, we determine explicit construction of non-special divisors of degree $g$ and $g-1$ on Kummer extensions with specific properties. In addition, we present several methods for constructing LCPs of algebraic geometry codes (AG Codes) via Kummer extensions. These results are applied in constructing explicit LCPs of AG Codes from subcovers of the BM curve, elliptic function fields, hyperelliptic function fields and other function fields. It is important to mention that we construct several families LCPs of MDS AG Codes from elliptic function fields and we obtain some linear complementary dual (LCD) codes from certain maximal elliptic function fields and hyperelliptic function fields.

{\bf Index terms:} Linear complementary pair, Function field, Weierstrass semigroup, Algebraic geometry code, Kummer extension.
\end{abstract}

\section{Introduction}

Let $\mathbb{F}_q$ denote the finite field with $q$ elements. A $q$-ary $[n,k,d]_q$ linear code $\mathcal{C}$ is a linear subspace of $\mathbb{F}_q^n$ with dimension $k$ and minimum distance $d$. A pair $(\mathcal{C},\mathcal{D})$ of linear codes of length $n$ over $\mathbb{F}_q$ is called a linear complementary pair (LCP) \cite{carletLinearComplementaryPairs2018} if their direct sum $\mathcal{C}\oplus\mathcal{D}$ equals to the whole space $\mathbb{F}_q^n$ (i.e. the two codes have trivial intersection and complementary dimensions). Due to their widespread applications, linear complementary pairs (LCPs) have attracted much attention in recent years. In symmetric cryptography, Direct Sum Masking (DSM) leverages the properties of LCPs to enhance security against the side-channel attacks (SCA) and the fault injection attacks (FIA) (see\cite{Orthogonal_Direct_Sum_Masking},\cite{Side-Channel_Attacks},\cite{Linear_complementary_dual_code_improvement_to_strengthen_encoded_circui_tagainst_hardware_Trojan_horses}). In this context, the minimum distance $d(\mathcal{C})$ measures the protection against FIA, while the minimum distance $d(\mathcal{D}^\perp)$ of the dual of the code $\mathcal{D}$ assesses the defense against SCA. It is shown that the level of resistance against both FIA and SCA is determined by $\min(d(\mathcal{C}),d(\mathcal{D}^\perp))$, known as the security parameter of the LCP of the codes. Note that the linear complementary dual (LCD) codes represent a specific case when $\mathcal{D} = \mathcal{C}^\perp$, in which case the security parameter is simply the minimum distance of $\mathcal{C}$. Furthermore, in the field of coding theory, LCPs and LCD codes can be employed to construct entanglement-assisted quantum error-correcting codes (EAQECCs) through the application of Theorem~4 presented in \cite{galindoEntanglementassistedQuantumErrorcorrecting2019}.

In recent years, there has been a lot of research on LCD codes (see\cite{ComplementaryDualAGC},\cite{ARAYA2019270},\cite{arayaMinimumWeightsBinary2020},\cite{arayaCharacterizationClassificationOptimal2021},\cite{ARAYA2021101925},\cite{bouyuklievaOptimalBinaryLCD2021},\cite{doi:10.1142/S0129054119500242},\cite{galvezBoundsBinaryLCD2018},\cite{haradaBinaryLinearComplementary2019},\cite{haradaConstructionBinaryLCD2021},\cite{KeitaIshizuka2023AdvancesinMathematicsCommunications},\cite{7797200},\cite{liSeveralConstructionsOptimal2024}), but there is relatively little research specifically focused on the LCPs of codes. Following the introduction of the concept of LCPs by Bhasin et~al. in~\cite{Linear_complementary_dual_code_improvement_to_strengthen_encoded_circui_tagainst_hardware_Trojan_horses}, Carlet et~al. subsequently investigated constacyclic and quasi-cyclic LCPs of codes in \cite{carletLinearComplementaryPairs2018}. Further, Borello et~al. gave a short and elementary proof of the fact that for a LCP $(\mathcal{C},\mathcal{D})$, the code $\mathcal{D}$ is uniquely determined by $\mathcal{C}$ and the dual code $\mathcal{D}^\perp$ is permutation equivalent to $\mathcal{C}$, where $\mathcal{C}$ and $\mathcal{D}$ are $2$-sided ideals in a group algebra~\cite{BORELLO2020111905}; Hu et al. proved a necessary and sufficient condition for a pairs of linear codes
over finite rings to be LCPs \cite{huLinearComplementaryPairs2021}; and Bhowmick et al. provided a necessary and sufficient condition for a pair of codes $(\mathcal{C},\mathcal{D})$ to be LCP of codes over finite non-commutative Frobenius rings~\cite{bhowmickLinearComplementaryPairs2024a}. Additionally, in 2023, Choi et al. presented infinite families of binary LCPs which are optimal in the sense that their security parameters reach the highest minimum distance $d_L$ \cite{choiOptimalBinaryLinear2023}, and Güneri proved that for every $k\ge5$ and $d\ge \lceil(k-1)/2\rceil 2^{k-1}$ there exist binary LCPs of codes of length $g(k,d)$ in the same year~\cite{10156864}. 

Algebraic geometry codes (AG Codes) have revealed their potential in numerous areas of coding theory. In 2018, Mesnager et al. conducted pioneering research on algebraic geometry LCD codes over finite fields and provided concrete examples from algebraic curves \cite{ComplementaryDualAGC}. Further, Beelen et al. 
explicitly constructed several families of MDS LCD AG Codes over rational function field \cite{8319441}. However, it was not until 2024 that the first systematic investigation into LCPs of AG Codes took place, with Bhowmick, Kumar Dalai, and Mesnager making significant contributions in this domain \cite{On_linear_complementary_pairs_of_algebraic_geometry_codes}. Bhowmick et al. pioneeringly established the characterization and construction mechanism of LCPs of AG Codes over finite fields and provided illustrative examples to demonstrate their findings. The sufficient conditions for constructing LCPs of AG Codes largely rely on the existence of non-special divisors of degree $g-1$ in the function field. In~\cite{Explicit_Non-special_Divisors_of_Small_Degree}, Moreno et al. presented an explicit construction of non-special divisors of small degree over Kummer extensions. These divisors support the construction of LCD AG Codes on some families of Kummer extensions. Subsequently, Castellanos et al. in \cite{Linear_Complementary_Dual_codes_and_Linear_Complementary_Pairs_of_AG_codes_in_function_fields} focused on the effective construction of LCPs of AG Codes and LCD AG Codes over function fields of genus $g\ge1$. Note that, in \cite{Explicit_Non-special_Divisors_of_Small_Degree} and \cite{Linear_Complementary_Dual_codes_and_Linear_Complementary_Pairs_of_AG_codes_in_function_fields}, the authors considered certain maximal function fields of particular interest in coding theory, including the Hermitian function field, elliptic function fields and so on. More generally, they considered Kummer extensions given by
\begin{equation}\label{kummer_normal}
    y^m = \prod_{i=1}^r(x - \alpha_i),
\end{equation}
over $\mathbb{F}_q$ with $(q,m) = 1$ and $(r,m)=1$. In \cite{Linear_Complementary_Dual_codes_and_Linear_Complementary_Pairs_of_AG_codes_in_function_fields}, Castellanos et al. established pairs of suitable divisors that yielding non-special divisors of degree $g-1$ in the function field defined by Equation (\ref{kummer_normal}) with $r<m$ and they constructed LCPs of AG Codes and LCD AG Codes in Kummer extensions, hyperelliptic function fields, and elliptic curves. 

In this paper, we present several methods for constructing LCPs of AG Codes via Kummer extensions with specific properties. More precisely, we consider Kummer extensions given by
\begin{equation}\label{kummer_non-normal}
    y^m = \prod_{i=1}^r(x - \alpha_i)^{\lambda_i},
\end{equation}
over $\mathbb{F}_q$ with $(q,m) = 1$ and some other properties. To achieve our goals, we determine an explicit construction of non-special divisors of degree $g$ and $g-1$ on these Kummer extensions. We build upon the seminal work of Ballet and Le Brigand \cite{BALLET2006293}, who demonstrated the existence of non-special divisors of degree $g-1$ (respectively,  $ g $) for function fields of genus $g\ge2$ over finite fields with order $q\ge4$ (respectively, $q\ge3$). In addition, our constructions are also applicable to the case of Kummer extensions defined by Equation (\ref{kummer_normal}). That is to say, we can also obtain the same LCPs of AG Codes as in \cite{Linear_Complementary_Dual_codes_and_Linear_Complementary_Pairs_of_AG_codes_in_function_fields}. Besides, our results are applied in constructing explicit LCPs of AG Codes from subcovers of the BM curve, elliptic function fields, hyperelliptic function fields and other function fields. Our method of constructing LCPs of AG Codes from elliptic function fields is completely different from that in \cite{Linear_Complementary_Dual_codes_and_Linear_Complementary_Pairs_of_AG_codes_in_function_fields}, and we also construct several families LCPs of MDS AG Codes from elliptic function fields which are optimal in the sense that their security parameters reach the Singleton upper bound. 

The paper is organized as follows. Section~\ref{pre} provides some preliminaries about the theory of function fields, Weierstrass semigroups, and the concept of LCPs of AG Codes. In the subsequent section, we present an explicit construction of non-special divisors of degree $g$ on Kummer extensions with specific properties (see Theorem \ref{non-special_divisors_cons}). Following that, we give several examples (see Example \ref{non-special_divisors_cons_ex}) about constructing non-special divisors of degree $g$ and $g-1$ on Kummer extensions defined by Equation (\ref{kummer_non-normal}). In Section~\ref{gen_con_LCP}, we provide several construction of LCPs of AG Codes over Kummer extensions (see Theorems \ref{con_w=0_1}, \ref{con_w=0_2}, \ref{con_v=0} and Remark \ref{con_w=0_rem}). In Section~\ref{LCPs_of_AG Codes}, we specifically construct LCPs of AG Codes from subcovers of the BM curve, elliptic function fields, hyperelliptic function fields and other function fields. In addition, we obtain two families of LCD codes from certain maximal elliptic function fields and hyperelliptic function fields. At last, we give a summary of the entire article in Section \ref{conclusion}. 


\section{Preliminaries}\label{pre}

In this section, we briefly recall some preliminaries on function fields, Weierstrass semigroups, algebraic geometry codes and linear complementary pairs of algebraic geometry codes. 

\subsection{Function Fields and Weierstrass Semigroups}
Let $F/\mathbb{F}_q$ be a function field of genus $g$ with the full constant field $\mathbb{F}_q$. Let $\mathbb{P}_F$ denote the set of places of $F$. The discrete valuation associated with $P\in\mathbb{P}_F$ is denoted by $v_P$. The divisor group of $F/\mathbb{F}_q$ is defined as the free abelian group which is generated by $\mathbb{P}_F$; it is denoted by $\text{Div}(F)$. Assume that $D=\sum_{P\in\mathbb{P}_F}n_PP$ is a divisor such that almost all $n_P=0$. Then the degree of $D$ is defined by $\deg D=\sum_{P\in\mathbb{P}_F}n_P\deg P$ and the support of $D$ by ${\rm supp}(D)=\{P\in \mathbb{P}_F\mid n_P\neq0\}$. For $z\in F$, the principal divisor of $f$ is denoted by $(z)$, the zero divisor of $z$ is denoted by $(z)_0$ and the pole divisor of $z$ is denoted by $(z)_\infty$. Two divisors $D,D^\prime\in\text{Div}(F)$ are said to be linearly equivalent, written $D\sim D^\prime$, if $D=D^\prime+(z)$ for some $z\in F\backslash\{0\}$. 

The Riemann-Roch space associated to the divisor $D\in\text{Div}(F)$ is the finite-dimensional $\mathbb{F}_q$-vector space
\[
    \mathcal{L}(D)=\{z\in F\mid (z)+D\ge 0 \}\cup\{0\}.
\]
The dimension of $\mathcal{L}(D)$ is given by $\ell(D)$ and satisfies the Riemann-Roch Theorem \cite[Thm. 1.5.15]{Algebraic_Function_Fields_and_Codes}, i.e. 
\[
    \ell(D)=\deg D+1-g+\ell(K-D),
\]
where $K$ is a canonical divisor of $F$. Recall that the index of specialty of a divisor $D$ is 
\[
    i(D)=\ell(D)-(\deg D+1-g)=\ell(K-D).
\]
A divisor $D$ is called non-special if and only if $i(D)=0$, i.e. $\ell(D)=\deg D+1-g$. Moreover, if $\deg D>2g-2$, then $D$ is non-special. 

Now we introduce the notions concerning Weierstrass semigroups. Given $l$ distinct rational places of $F$, named $Q_1,\cdots,Q_l$, the Weierstrass semigroup is defined by
\[
    H(Q_1,\cdots,Q_l)=\bigg\{\alpha=(\alpha_1,\cdots,\alpha_l)\in\mathbb{N}^l\;\bigg| \;\exists\, f\in F\;\text{with}\;(f)_\infty=\sum_{i=1}^l\alpha_iQ_i\bigg\},
\]
and the Weierstrass gap set is defined by 
\[
    G(Q_1,\cdots,Q_l)=\mathbb{N}^l\,\backslash \,H(Q_1,\cdots,Q_l).
\]
Let $[n]:=\{1,\cdots,n\}$. To delineate the minimal generating set of Weierstrass semigroups, a more comprehensive set of symbols must be introduced. For $i\in[l]$, let $\Gamma^+(Q_i)=H(Q_i)$, and for $k\ge2$, let
\[
    \Gamma^+(Q_{i_1},\cdots,Q_{i_k})=\bigg\{ v\in\mathbb{N}^k\;\bigg|\;v\text{ is minimal in }\{w\in H(Q_{i_1},\cdots,Q_{i_k})\mid w_i=v_i\}\text{ for some }i\in[l]\bigg\}.
\]
For each subset $I\subseteq[l]$, let $\iota_I$ denote the natural inclusion $\mathbb{N}^{|I|}\hookrightarrow \mathbb{N}^l$ into the coordinates indexed by $I$. The minimal generating set of $H(Q_1,\cdots,Q_l)$ is defined by
\[
    \Gamma(Q_1,\cdots,Q_l)=\bigcup_{k=1}^l\bigcup_{I=\{i_1,\cdots,i_k\}\subseteq[l]}\iota_I\big(\Gamma^+(Q_{i_1},\cdots,Q_{i_k})\big). 
\]
Then the Weierstrass semigroup $H(Q_1,\cdots,Q_l)$ is completely determined by the minimal generating set $\Gamma(Q_1,\cdots,Q_l)$\cite[Thm. 7]{The_Weierstrass_Semigroup_of_an_m-tuple_of_Collinear_Points_on_a_Hermitian_Curve}:
\[
    H(Q_1,\cdots,Q_l) = \bigg\{ {\rm lub}\{u_1,\cdots,u_l\}\;\bigg|\;u_1,\cdots,u_l\in \Gamma(Q_1,\cdots,Q_l) \bigg\},
\]
where $u_i = (u_{i_1},\cdots,u_{i_l})$ and ${\rm lub}\{u_1,\cdots,u_l\}$ represents the least upper bound of vectors $u_i$ for all $1\le i\le l$, defined by
\[
    {\rm lub}\{u_1,\cdots,u_l\}=\big(\max\{u_{1_1},\cdots,u_{l_1}\},\cdots,\max\{u_{1_l},\cdots,u_{l_l}\}\big).
\]
The following lemma illustrates the relationship between the minimal generating set and non-special effective divisors of degree $g$. Before we provide the lemma, we introduce the partial order $\le$ on $\mathbb{N}^l$ which is defined by
\[
    v=(v_i)\le w=(w_i) \text{ if and only if } v_i\le w_i \text{ for all } i\in [l].
\]
Further we denote
\[
    v\nleq w \text{ if and only if } v_i> w_i \text{ for some } i\in [l].
\]
\begin{lemma}\cite[Prop. 5]{Explicit_Non-special_Divisors_of_Small_Degree}\label{non-special}
    Consider an effective divisor $A=\sum_{i=1}^l\alpha_iQ_i\in {\rm Div}(F)$ of degree $g$. If $\gamma\nleq \alpha$ for all $\gamma\in\Gamma(Q_1,\cdots,Q_l)$, then $A$ is non-special.
\end{lemma}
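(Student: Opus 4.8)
The plan is to show directly that $\ell(A)=1$, which is equivalent to non-speciality here: since $\deg A=g$ we have $i(A)=\ell(A)-(\deg A+1-g)=\ell(A)-1$, so $A$ is non-special precisely when $\ell(A)=1$. Because $A$ is effective, every constant function lies in $\mathcal{L}(A)$, giving $\ell(A)\ge1$; the whole content is therefore to prove that $\mathcal{L}(A)$ contains no non-constant function. I would argue by contradiction, assuming some non-constant $f\in\mathcal{L}(A)$ exists and extracting from it an element of the minimal generating set that violates the hypothesis.

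First I would translate the condition $f\in\mathcal{L}(A)$, that is $(f)+A\ge0$, into pole-order constraints place by place. For $P\notin\{Q_1,\dots,Q_l\}$ this gives $v_P(f)\ge0$, so $f$ has no poles there; for $P=Q_i$ it gives $v_{Q_i}(f)\ge-\alpha_i$. Hence the pole divisor has the shape $(f)_\infty=\sum_{i=1}^l\beta_iQ_i$ with $0\le\beta_i\le\alpha_i$, that is $\beta:=(\beta_1,\dots,\beta_l)\le\alpha$. Since $f$ is non-constant it must have at least one pole, so $\beta\ne0$, and by the very definition of the Weierstrass semigroup the existence of $f$ with $(f)_\infty=\sum_{i=1}^l\beta_iQ_i$ shows $\beta\in H(Q_1,\dots,Q_l)$.

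The decisive step is then to invoke the generating-set description $H(Q_1,\dots,Q_l)=\{{\rm lub}\{u_1,\dots,u_l\}\mid u_j\in\Gamma(Q_1,\dots,Q_l)\}$ recalled above. Writing $\beta={\rm lub}\{u_1,\dots,u_l\}$ with each $u_j\in\Gamma(Q_1,\dots,Q_l)$, the definition of the least upper bound gives $u_j\le\beta$ coordinatewise for every $j$. Combining with $\beta\le\alpha$ yields $u_j\le\alpha$; equivalently $u_j\nleq\alpha$ fails, so $u_j$ is a member of the minimal generating set that is dominated by $\alpha$. This contradicts the hypothesis that $\gamma\nleq\alpha$ for every $\gamma\in\Gamma(Q_1,\dots,Q_l)$. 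Therefore no non-constant $f$ can lie in $\mathcal{L}(A)$, so $\ell(A)=1$ and $A$ is non-special.

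I expect the only delicate points to be bookkeeping rather than genuine obstacles: one must confirm that the pole divisor is supported on a subset of $\{Q_1,\dots,Q_l\}$ (handled by the place-by-place inequalities) and that $\beta$ genuinely lands in the semigroup, so that the lub-characterization is applicable. The essential mechanism is that any non-constant element of $\mathcal{L}(A)$ produces a nonzero semigroup vector dominated by $\alpha$, which in turn must dominate one of the minimal generators, and the hypothesis is designed precisely to forbid this.
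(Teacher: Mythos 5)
The paper states this lemma purely as a citation to \cite{Explicit_Non-special_Divisors_of_Small_Degree} and gives no proof of its own, so there is no internal argument to compare against; your proof is correct and is the natural one. Reducing non-speciality to $\ell(A)=1$ via Riemann--Roch, extracting from a hypothetical non-constant $f\in\mathcal{L}(A)$ a pole vector $\beta$ with $0\neq\beta\le\alpha$ and $\beta\in H(Q_1,\dots,Q_l)$, and then pulling a minimal generator $u_j\le\beta\le\alpha$ out of the lub-decomposition of $\beta$ is exactly the intended mechanism. The only point worth making explicit is that, since $\beta\neq 0$, at least one $u_j$ in the decomposition $\beta={\rm lub}\{u_1,\dots,u_l\}$ is nonzero, so the element of $\Gamma(Q_1,\dots,Q_l)$ you produce is a genuine (nonzero) generator under the standard convention that excludes the zero vector from the minimal generating set; with that remark the contradiction with the hypothesis is airtight.
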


The subsequent lemma is employed to derive non-special divisors of degree $g-1$. The divisors of degree $g-1$ play an importance role in the constructions of linear complementary pairs, as elaborated in Section \ref{gen_con_LCP}.

\begin{lemma}\cite[Lem. 3]{BALLET2006293}\label{non_special_g-1}
    If $A$ is a non-special divisor of degree $g$ on a function field $F$ and there exists a rational place $P\in\mathbb{P}_F\,\backslash\,{\rm supp}(A)$, then $A-P$ is non-special. 
\end{lemma}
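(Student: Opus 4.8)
The plan is to reduce the whole statement to a single dimension count, namely to show that $\ell(A-P)=0$. Since $\deg(A-P)=g-1$, the definition of non-speciality recalled above says that $A-P$ is non-special precisely when $\ell(A-P)=\deg(A-P)+1-g=0$, so this one equality is the entire content. First I would record what the hypothesis on $A$ supplies: because $A$ is non-special of degree $g$, the Riemann--Roch theorem gives $\ell(A)=\deg A+1-g+i(A)=g+1-g+0=1$, so $\mathcal{L}(A)$ is one-dimensional.

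Next I would compare $\mathcal{L}(A-P)$ with $\mathcal{L}(A)$ via evaluation at $P$. Since $A-P\le A$, we have $\mathcal{L}(A-P)\subseteq\mathcal{L}(A)$. The hypothesis $P\notin{\rm supp}(A)$ means $v_P(A)=0$, so every $z\in\mathcal{L}(A)$ satisfies $v_P(z)\ge -v_P(A)=0$ and is therefore regular at $P$; as $P$ is rational this yields an $\mathbb{F}_q$-linear evaluation map ${\rm ev}_P\colon\mathcal{L}(A)\to\mathbb{F}_q$, $z\mapsto z(P)$. I would then check that its kernel is exactly $\mathcal{L}(A-P)$: for $z\in\mathcal{L}(A)$ the condition $z(P)=0$ reads $v_P(z)\ge 1$, which together with $v_P(A)=0$ is equivalent to $(z)+A-P\ge 0$. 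Consequently $\ell(A)-\ell(A-P)=\dim_{\mathbb{F}_q}{\rm im}({\rm ev}_P)\in\{0,1\}$, and it remains only to show that ${\rm ev}_P$ is surjective.

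The crux, and the step I expect to be the main obstacle, is to rule out $\ell(A-P)=\ell(A)=1$, equivalently to prove that ${\rm ev}_P$ is surjective. Here I would invoke the effectiveness of $A$, which holds in the setting where the lemma is applied (for instance the divisors produced by Lemma~\ref{non-special}): when $A\ge 0$ the nonzero constants lie in $\mathcal{L}(A)$, so $\ell(A)=1$ forces $\mathcal{L}(A)=\mathbb{F}_q$, and then ${\rm ev}_P(1)=1\neq 0$ shows ${\rm ev}_P$ is surjective. Combined with $\ell(A)=1$ this gives $\ell(A-P)=\ell(A)-1=0=\deg(A-P)+1-g$, which is exactly the non-speciality of $A-P$. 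The delicacy lies entirely in this surjectivity: since $\mathcal{L}(A)$ is spanned by a single function, $A-P$ would fail to be non-special precisely when that generator vanishes at $P$, and effectiveness of $A$ is what lets us take the generator to be the nowhere-vanishing constant $1$.
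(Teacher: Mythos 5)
The paper gives no proof of this lemma at all --- it is quoted verbatim from Ballet and Le Brigand --- so there is nothing in-paper to compare your argument against. Your argument is the standard one (get $\ell(A)=1$ from Riemann--Roch and non-speciality, identify $\mathcal{L}(A-P)$ with the kernel of evaluation at $P$ on $\mathcal{L}(A)$, and show the evaluation map is onto), and it is complete and correct \emph{once $A$ is assumed effective}, which is the extra hypothesis you explicitly insert at the final step.

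You are right that this is the crux, and your instinct deserves to be stated more forcefully: without $A\ge 0$ the lemma as printed is actually false, so the effectiveness you invoke is not a convenience but a necessity. For a counterexample, let $F$ have genus $1$ with distinct rational places $P_1,P_2,P_3$ and put $P=P_1\oplus P_2\ominus P_3$ in the group law; for a suitable choice (one only needs $P_1\oplus P_2\neq 2P_3$) the place $P$ is rational and distinct from $P_1,P_2,P_3$. Then $A=P_1+P_2-P_3$ is non-special of degree $g=1$ (every divisor of degree one on a genus-one function field is, since $\deg(K-A)<0$), and $P\notin\mathrm{supp}(A)$, yet $A-P\sim 0$ forces $\ell(A-P)=1\neq 0=\deg(A-P)+1-g$, so $A-P$ is special. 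More generally, for non-effective $A$ the space $\mathcal{L}(A)$ is spanned by some $z$ with $(z)+A\ge 0$, and $A-P$ is special exactly when $v_P(z)\ge 1$, which is perfectly compatible with $v_P(A)=0$; your worry that ``the generator vanishes at $P$'' is exactly what goes wrong. The cited lemma of Ballet and Le Brigand carries the effectiveness hypothesis (equivalently, one passes to the effective divisor in the class of $A$ and requires $P$ to avoid the support of that representative), and every application in this paper is to an effective $A$ produced by Lemma~\ref{non-special} or Theorem~\ref{non-special_divisors_cons}, so your proof establishes precisely the statement the paper actually uses; the statement of the lemma itself is what should be amended to include $A\ge 0$.
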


\subsection{Algebraic Geometry Codes}
For more details of algebraic geometry codes\,(AG Codes), the reader may refer to \cite{Algebraic-Geometric_Codes}. Let $F/\mathbb{F}_q$ be a function field of genus $g$ with the full constant field $\mathbb{F}_q$. Let $P_1,\cdots,P_n\in\mathbb{P}_F$ be $n$ pairwise distinct rational places of $F$ and $D=\sum_{i=1}^nP_i$. For a divisor $G$ of $F/\mathbb{F}_q$ with $2g-2<\deg G<n$ and ${\rm{supp}}(G)\cap {\rm{supp}}(D)=\varnothing$, the algebraic geometry code\,(AG Code) associated with the divisors $D$ and $G$ is defined as
\[
    \mathcal{C}_{\mathcal{L}}(D,G)=\{(x(P_1),\,\cdots,\,x(P_n))\mid x\in\mathcal{L}(G)\}\subseteq\mathbb{F}_q^n,
\]
where $\mathcal{L}(G)$ is the Riemann-Roch space with the dimension $\ell(G)=\deg G+1-g$. Then the AG Code $\mathcal{C}_{\mathcal{L}}(D,G)$ is an $[n,k,d]_q$ linear code with dimension $k=\ell(G)$ and minimum distance $d\ge n-\deg G$. On the other hand, for $\mathcal{C}_{\mathcal{L}}(D,G)$, the Singleton bound indicates that 
\begin{equation}\label{singleton_bound}
    d\le n-k+1=n-\deg G+g. 
\end{equation}
If the equality in (\ref{singleton_bound}) holds, then $\mathcal{C}_{\mathcal{L}}(D,G)$ is called maximum distance separable, or MDS. 

The following lemma gives the dual code of the AG Code $\mathcal{C}_{\mathcal{L}}(D,G)$. 
\begin{lemma}\cite[Prop. 2.2.10 and Prop. 8.1.2]{Algebraic_Function_Fields_and_Codes}\label{dual_AG Code}
    Let $t$ be an element of $F$ such that $v_{P_i}(t)=1$ for all $1\le i\le n$. Then the following assertions hold:
    \begin{itemize}
        \item The Weil differential $\eta=dt/t$ satisfies $v_{P_i}(\eta)=-1$ and $\eta_{P_i}(1)=1$ for $1\le i\le n$. 
        \item $\mathcal{C}_{\mathcal{L}}(D,G)^{\bot}=\mathcal{C}_{\mathcal{L}}(D,D-G+(\eta))=\mathcal{C}_{\mathcal{L}}(D,D-G+(dt)-(t))$. 
    \end{itemize}
\end{lemma}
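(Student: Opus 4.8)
The plan is to deduce both parts from the standard duality between functional and differential AG codes, combined with one local residue computation. Recall that the dual of $\mathcal{C}_{\mathcal{L}}(D,G)$ is the differential (residue) code
\[
    \mathcal{C}_{\Omega}(D,G)=\big\{\big({\rm res}_{P_1}(\omega),\cdots,{\rm res}_{P_n}(\omega)\big)\;\big|\;\omega\text{ a Weil differential},\ (\omega)\ge G-D\big\}\cup\{0\},
\]
by the duality theorem for AG codes \cite{Algebraic_Function_Fields_and_Codes}. Thus it suffices to (i) determine the local behaviour of the particular differential $\eta=dt/t$ at each $P_i$, and (ii) produce an explicit $\mathbb{F}_q$-linear isomorphism carrying the residue code $\mathcal{C}_{\Omega}(D,G)$ onto the functional code $\mathcal{C}_{\mathcal{L}}(D,D-G+(\eta))$.

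For the first assertion I would work purely in the completion at a fixed $P_i$. Since $v_{P_i}(t)=1$, the function $t$ is a local uniformizer at $P_i$, so writing $\eta=dt/t=t^{-1}\,dt$ shows at once that $\eta$ has a pole of order exactly one, i.e. $v_{P_i}(\eta)=-1$. The residue of $t^{-1}\,dt$ is the coefficient of $t^{-1}$ in this expansion, namely $1$; by the description of the local component of a differential in \cite[Prop. 2.2.10]{Algebraic_Function_Fields_and_Codes} this coefficient is precisely $\eta_{P_i}(1)$, whence $\eta_{P_i}(1)={\rm res}_{P_i}(\eta)=1$.

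For the second assertion I would fix $\eta$ as reference differential and use the bijection $\omega\mapsto f:=\omega/\eta$ between Weil differentials and elements of $F$. Since $(f)=(\omega)-(\eta)$, the condition $(\omega)\ge G-D$ is equivalent to $(f)\ge G-D-(\eta)$, that is $f\in\mathcal{L}(D-G+(\eta))$, which is exactly the divisor claimed. To see that this bijection intertwines the residue map with coordinatewise evaluation, note that ${\rm supp}(G)\cap{\rm supp}(D)=\varnothing$ and $v_{P_i}(\eta)=-1$ force the coefficient of $P_i$ in $D-G+(\eta)$ to be $1-0-1=0$, so $f$ is regular at $P_i$. Expanding $f\eta$ locally and using part (i) then gives ${\rm res}_{P_i}(\omega)={\rm res}_{P_i}(f\eta)=f(P_i)\,{\rm res}_{P_i}(\eta)=f(P_i)$, so the residue vector of $\omega$ equals the evaluation vector of $f$. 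This yields $\mathcal{C}_{\Omega}(D,G)=\mathcal{C}_{\mathcal{L}}(D,D-G+(\eta))$, and the final equality follows from $(\eta)=(dt/t)=(dt)-(t)$. The essential technical point — granting the duality theorem and the residue identity of part (i) — is the local computation ${\rm res}_{P_i}(f\eta)=f(P_i)$, which depends on $f$ being regular at $P_i$ so that only the constant term $f(P_i)$ meets the simple pole of $\eta$; this is precisely where the hypotheses on $t$ and on the disjointness of the supports are used.
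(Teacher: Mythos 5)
The paper offers no proof of this lemma; it is quoted verbatim from Stichtenoth (Prop.~2.2.10 and Prop.~8.1.2), and your argument is a correct reconstruction of the standard proof given there: duality of $\mathcal{C}_{\mathcal{L}}(D,G)$ with the residue code $\mathcal{C}_{\Omega}(D,G)$, the bijection $\omega\mapsto\omega/\eta$ identifying $\Omega_F(G-D)$ with $\mathcal{L}(D-G+(\eta))$, and the local computation ${\rm res}_{P_i}(f\eta)=f(P_i)$ for $f$ regular at $P_i$. The one step you pass over silently is that $v_{P_i}(dt)=0$ because $t$ is a prime element at the rational place $P_i$ (Stichtenoth, Prop.~4.3.8); this is what makes $v_{P_i}(\eta)$ exactly $-1$ and the residue exactly $1$, rather than the differential being regular at $P_i$ with residue $0$. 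With that one-line citation added, the proof is complete and matches the source the paper relies on.
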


Two codes $\mathcal{C}_1,\mathcal{C}_2\subseteq\mathbb{F}_q^n$ are said to be equivalent if there exists a vector ${\bf a}=(a_1,\cdots,a_n)\in(\mathbb{F}_q^\times)^n$ such that $\mathcal{C}_2 = {\bf a}\cdot \mathcal{C}_1$. If two AG Codes $\mathcal{C}_{\mathcal{L}}(D,G_1)$ and $\mathcal{C}_{\mathcal{L}}(D,G_2)$ satisfying that
\[
    G_1\sim G_2 \;\text{ and }\;{\rm{supp}}(G_1)\cap {\rm{supp}}(D)={\rm{supp}}(G_2)\cap {\rm{supp}}(D)=\varnothing,
\]
then $\mathcal{C}_{\mathcal{L}}(D,G_1)\,$ and $\,\mathcal{C}_{\mathcal{L}}(D,G_2)$ are equivalent, denoted by $\mathcal{C}_{\mathcal{L}}(D,G_1)\sim \mathcal{C}_{\mathcal{L}}(D,G_2)$. It is evident that equivalent codes have the same dimension and minimum distance.

\subsection{Linear Complementary Pairs of AG Codes}

A pair of linear codes $(\mathcal{C},\mathcal{D})$ over the finite field $\mathbb{F}_q$ with length $n$ is defined as a linear complementary pair (LCP) if it satisfies the condition $\mathcal{C}\oplus\mathcal{D}=\mathbb{F}_q^n$, where $\oplus$ denotes the direct sum of two subspaces. It is important to observe that if $\mathcal{D}=\mathcal{C}^\bot$, then $\mathcal{C}$ is referred to as a linear complementary dual (LCD) code. 

Let $F/\mathbb{F}_q$ be a function field of genus $g$ with the full constant field $\mathbb{F}_q$. Let $P_1,\cdots,P_n\in\mathbb{P}_F$ be $n$ pairwise distinct rational places of $F$ and $D=\sum_{i=1}^nP_i$. For two divisors $G,H$ with 
\[
    {\rm{supp}}(G)\cap {\rm{supp}}(D)={\rm{supp}}(H)\cap {\rm{supp}}(D)=\varnothing,
\]
the pair $(\mathcal{C}_{\mathcal{L}}(D,G),\mathcal{C}_{\mathcal{L}}(D,H))$ is an LCP of AG Codes over $\mathbb{F}_q$ if $\mathcal{C}_{\mathcal{L}}(D,G)\oplus\mathcal{C}_{\mathcal{L}}(D,H)=\mathbb{F}_q^n$. In other words, a pair $(\mathcal{C}_{\mathcal{L}}(D,G),\mathcal{C}_{\mathcal{L}}(D,H))$ is an LCP of AG Codes if and only if 
\[
    \dim (\mathcal{C}_{\mathcal{L}}(D,G))+\dim (\mathcal{C}_{\mathcal{L}}(D,H)) = n\; \text{ and }\; \mathcal{C}_{\mathcal{L}}(D,G)\cap\mathcal{C}_{\mathcal{L}}(D,H)=\{0\}. 
\]

For two divisors $G,H\in\text{Div}(F)$, we define 
\[
    \gcd(G,H) = \sum_{P\in\mathbb{P}_F}\min\{v_P(G),v_P(H)\}P\;\text{ and }\;{\rm lmd}(G,H) = \sum_{P\in\mathbb{P}_F}\max\{v_P(G),v_P(H)\}P. 
\]
In \cite{On_linear_complementary_pairs_of_algebraic_geometry_codes}, Bhowmick et al. established the sufficient conditions concerning the divisors $G$ and $H$ to derive an LCP of AG Codes.

\begin{lemma}\cite[Thm. 3.5]{On_linear_complementary_pairs_of_algebraic_geometry_codes}\label{LCP_criterion}
    Let $\mathcal{C}_{\mathcal{L}}(D,G)$ and $\mathcal{C}_{\mathcal{L}}(D,H)$ be two AG Codes, of length $n$, over a function field $F/\mathbb{F}_q$ of genus $g\neq 0$, such that $\ell (G) + \ell (H) = n $ and $2g-2<\deg G, \deg H<n$. Then the pair $(\mathcal{C}_{\mathcal{L}}(D,G),\mathcal{C}_{\mathcal{L}}(D,H))$ is an LCP if $\gcd(G,H)$ is a non-special divisor of degree $g-1$ and ${\rm lmd}(G,H)-D$ is a non-special divisor. 
\end{lemma}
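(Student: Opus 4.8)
The plan is to check the two defining properties of an LCP one at a time: the dimension identity $\dim\mathcal{C}_{\mathcal{L}}(D,G)+\dim\mathcal{C}_{\mathcal{L}}(D,H)=n$ and the trivial intersection $\mathcal{C}_{\mathcal{L}}(D,G)\cap\mathcal{C}_{\mathcal{L}}(D,H)=\{0\}$. The dimension identity is the routine half: since $\deg G<n$ and $\deg H<n$, the divisors $G-D$ and $H-D$ have negative degree, so $\mathcal{L}(G-D)=\mathcal{L}(H-D)=\{0\}$ and the two evaluation maps are injective; hence $\dim\mathcal{C}_{\mathcal{L}}(D,G)=\ell(G)$ and $\dim\mathcal{C}_{\mathcal{L}}(D,H)=\ell(H)$, and the hypothesis $\ell(G)+\ell(H)=n$ finishes this part.

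For the intersection, I would pick an arbitrary $c\in\mathcal{C}_{\mathcal{L}}(D,G)\cap\mathcal{C}_{\mathcal{L}}(D,H)$ and write $c=(f(P_1),\cdots,f(P_n))=(h(P_1),\cdots,h(P_n))$ with $f\in\mathcal{L}(G)$ and $h\in\mathcal{L}(H)$. At every place $P$ the ultrametric inequality gives $v_P(f-h)\ge\min\{v_P(f),v_P(h)\}\ge-\max\{v_P(G),v_P(H)\}$, so $f-h\in\mathcal{L}({\rm lmd}(G,H))$; and since $f$ and $h$ agree at every $P_i$, the function $f-h$ vanishes on $D$, whence $f-h\in\mathcal{L}({\rm lmd}(G,H)-D)$.

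The heart of the argument is to show this last space is zero. Because $\deg G,\deg H>2g-2$, both $G$ and $H$ are non-special, so $\ell(G)=\deg G+1-g$ and $\ell(H)=\deg H+1-g$; combined with $\ell(G)+\ell(H)=n$ this yields $\deg G+\deg H=n+2g-2$. Using $\deg{\rm lmd}(G,H)+\deg\gcd(G,H)=\deg G+\deg H$ together with $\deg\gcd(G,H)=g-1$, I get $\deg({\rm lmd}(G,H)-D)=g-1$. As ${\rm lmd}(G,H)-D$ is non-special of this degree, $\ell({\rm lmd}(G,H)-D)=(g-1)+1-g=0$, forcing $f=h$.

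Once $f=h$, a place-by-place comparison shows that this common function lies in $\mathcal{L}(G)\cap\mathcal{L}(H)=\mathcal{L}(\gcd(G,H))$, and since $\gcd(G,H)$ is non-special of degree $g-1$ the same Riemann--Roch count gives $\ell(\gcd(G,H))=0$; hence $f=0$ and $c=0$, as required. I expect the main obstacle to be precisely the middle step: $f$ and $h$ need not be the same element of $F$ merely because they induce the same codeword, and excluding this is exactly what the hypothesis on ${\rm lmd}(G,H)-D$ buys. The degree bookkeeping that pins down $\deg({\rm lmd}(G,H)-D)=g-1$ from $\ell(G)+\ell(H)=n$ is the technical pivot that lets both non-speciality assumptions collapse the relevant Riemann--Roch spaces to zero.
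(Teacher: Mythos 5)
Your proof is correct and complete: the injectivity of the evaluation maps from $\deg G,\deg H<n$ gives the dimension count, the degree bookkeeping $\deg G+\deg H=n+2g-2$ combined with $\deg\gcd(G,H)=g-1$ correctly pins down $\deg({\rm lmd}(G,H)-D)=g-1$, and the two non-speciality hypotheses then collapse $\mathcal{L}({\rm lmd}(G,H)-D)$ and $\mathcal{L}(\gcd(G,H))=\mathcal{L}(G)\cap\mathcal{L}(H)$ to zero, forcing the intersection of the codes to be trivial. Note that the paper itself does not prove this lemma but cites it from Bhowmick et al.; your argument is essentially the standard one given in that reference, so there is nothing to contrast.
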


In the preceding lemma, it is evident that non-special divisors play a crucial role in deriving LCPs of AG Codes. Consequently, in the following section, we will present a construction of non-special divisors derived from Kummer extensions.

\section{Construction of Non-Special Divisors from Kummer Extensions}\label{construction}

In this section, we will detail the construction of non-special divisors from Kummer extensions. We focus on Kummer extensions $F/\mathbb{F}_q(x)$ defined by
\begin{equation}\label{kum_diff}
    y^m=a\cdot\prod_{i=1}^u(x-\alpha_i)\cdot\prod_{j=1}^v(x-\beta_j)^{\lambda_j}\cdot\prod_{k=1}^w(x-\gamma_k)^{\lambda_k^\prime},\quad(q,m)=1\;\;\text{and}\;\;1<\lambda_j,\lambda_k^\prime<m,
\end{equation}
where $\alpha_i,\,\beta_j,\,\gamma_k\in\mathbb{F}_q \,(1\le i\le u,\,1\le j\le v,\,1\le k\le w)$ are pairwise distinct elements, $a\in\mathbb{F}_q\backslash\{0\}$ and 
\[
    \lambda_0=u+\sum_{j=1}^v\lambda_j+\sum_{k=1}^w\lambda_k^\prime,\quad(\lambda_0,m)=1,\quad(\lambda_j,m)=1,\quad (\lambda_k^\prime,m)\neq1.
\]
By \cite[Prop. 3.7.3]{Algebraic_Function_Fields_and_Codes}, the function field $F/\mathbb{F}_q$ has genus
\[
    g=\frac{(m-1)(u+v-1)+mw-\sum_{k=1}^w(\lambda_k^\prime,m)}{2}.
\]
Let $Q_i\,(1\le i\le u)$ be the (only) rational place in $\mathbb{P}_F$ corresponding to the zero of $x-\alpha_i$, $Q_j^\prime\,(1\le j\le v)$ the (only) rational place corresponding to the zero of $x-\beta_j$ and $Q_\infty$ the unique place at infinity of $F$. For $1\le j\le v$, let $1\le \lambda\le m-1$ be the inverse of $\lambda_j$ modulo $m$ and 
\[
    f_j(t) = \sum_{i=1}^u\bigg\lceil\frac{t\lambda}{m}\bigg\rceil +\sum_{j=1}^v\bigg\lceil\frac{t\lambda\lambda_j}{m}\bigg\rceil+\sum_{k=1}^w\bigg\lceil\frac{t\lambda\lambda_k^\prime}{m}\bigg\rceil-\bigg\lceil\frac{t\lambda\lambda_0}{m}\bigg\rceil - 1. 
\]
For convenience, we set
\[
    S_t = u+\sum_{j=1}^v\bigg\lceil\frac{t\lambda_j}{m}\bigg\rceil+\sum_{k=1}^w\bigg\lceil\frac{t\lambda_k^\prime}{m}\bigg\rceil-\bigg\lfloor\frac{t\lambda_0}{m}\bigg\rfloor,
\]
for $1\le t\le m-1$ and 
\[
    V_F = \{\lambda_j\mid f_j(t)\ge 0 \text{ for all }1\le t\le m-1\}. 
\]

We are now able to present the construction of non-special divisors derived from Kummer extensions.

\begin{theorem}\label{non-special_divisors_cons}
    Let $F/\mathbb{F}_q(x)$ be the Kummer extension defined by Equation (\ref{kum_diff}). For $1\le t\le m-1$, we define
    \[
        \ell_t=S_t,\;\;s_t=\begin{cases}
            \ell_t-\ell_{t+1}, &\text{if}\;\; t\neq m-1,\\
            \ell_{m-1}-1, &\text{if}\;\; t=m-1.
        \end{cases}
    \]
    Suppose that $0\le v,w\le \big\lfloor\frac{\lambda_0}{m}\big\rfloor$, $\lambda_j\in V_F$ for all $1\le j\le v$ and $s_t\ge0$ for $1\le t\le m-2$. Then the divisor 
    \[
        A=\sum_{t=1}^{m-2}t\sum_{l=1}^{s_t}Q_{t_l}+(m-1)\sum_{l^\prime=1}^{s_{m-1}}P_{l^\prime}
    \]
    is effective and non-special of degree $g$ with 
    \[
        P_{l^\prime}\in \{Q_i\}\cup\{Q_j^\prime\}\;\;\text{and}\;\;Q_{t_l}\in\{Q_i\}\backslash\{P_{l^\prime}\}.
    \]
\end{theorem}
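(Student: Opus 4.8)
The plan is to establish, in order, that $A$ is effective, that $\deg A=g$, and that $A$ is non-special, the last via the criterion of Lemma~\ref{non-special}.

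\emph{Degree.} I would first record the telescoping identity $\sum_{t'=t}^{m-1}s_{t'}=\ell_t-1=S_t-1$, valid for every $1\le t\le m-1$. This has two consequences: $A$ has exactly $S_t-1$ places whose coefficient is $\ge t$, and
\[
\deg A=\sum_{t=1}^{m-1}t\,s_t=\sum_{t=1}^{m-1}S_t-(m-1).
\]
I would then evaluate $\sum_{t=1}^{m-1}S_t$ using the classical closed forms $\sum_{t=1}^{m-1}\lceil t\mu/m\rceil=\tfrac12\big(\mu(m-1)+m-(\mu,m)\big)$ and $\sum_{t=1}^{m-1}\lfloor t\lambda_0/m\rfloor=\tfrac12(\lambda_0-1)(m-1)$, the former following from the fact that $t\mu\bmod m$ runs $(\mu,m)$ times through each nonzero multiple of $(\mu,m)$. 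Substituting $\lambda_0=u+\sum_j\lambda_j+\sum_k\lambda_k'$ collapses the sum to $\tfrac12\big((m-1)(u+v+1)+mw-\sum_k(\lambda_k',m)\big)$, and subtracting $m-1$ reproduces the genus formula, so $\deg A=g$.

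\emph{Effectiveness and realizability.} Short evaluations give $S_1=u+v+w-\lfloor\lambda_0/m\rfloor$ and, using $(\lambda_0,m)=1$, $S_{m-1}=\lfloor\lambda_0/m\rfloor+1$, whence $s_{m-1}=\lfloor\lambda_0/m\rfloor$ and the support of $A$ consists of $S_1-1$ places. Since $s_t\ge0$ for $1\le t\le m-2$ by hypothesis and $s_{m-1}\ge0$ automatically, $A$ is effective. Because $v\le\lfloor\lambda_0/m\rfloor=s_{m-1}$, the $v$ totally ramified places $Q_j'$ may all be placed in the coefficient-$(m-1)$ slots $P_{l'}$; the remaining $S_1-1-v=u+w-\lfloor\lambda_0/m\rfloor-1<u$ slots (using $w\le\lfloor\lambda_0/m\rfloor$) can then be filled with distinct places $Q_i$, which honours the constraints $P_{l'}\in\{Q_i\}\cup\{Q_j'\}$ and $Q_{t_l}\in\{Q_i\}\setminus\{P_{l'}\}$.

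\emph{Non-speciality.} Let $\alpha$ be the coefficient vector of $A$ on its support $Q_1,\dots,Q_l$, so that $\alpha_i\le m-1$ for all $i$ and $\#\{i:\alpha_i\ge t\}=S_t-1$ for every $t$. By Lemma~\ref{non-special} it suffices to prove $\gamma\nleq\alpha$ for each $\gamma\in\Gamma(Q_1,\dots,Q_l)$. Any generator possessing a coordinate $\ge m$ satisfies this at once, because $\alpha_i\le m-1$. For a generator $\gamma$ all of whose coordinates lie in $\{0,\dots,m-1\}$ I would use the elementary fact that $\gamma\le\alpha$ would force $\{i:\gamma_i\ge t\}\subseteq\{i:\alpha_i\ge t\}$ and hence $\#\{i:\gamma_i\ge t\}\le S_t-1$ for all $t$; thus it is enough to exhibit, for each such $\gamma$, a single $t$ with
\[
\#\{i:\gamma_i\ge t\}\ge S_t .
\]
Reading $\gamma$ through the Kummer decomposition $F=\bigoplus_{b=0}^{m-1}\mathbb{F}_q(x)\,y^{b}$, a bounded generator is associated with a function of the form $y^{\,m-t}h(x)$ and has pole order $\equiv t\pmod m$ (hence exactly $t$) at each polar $Q_i$ and $\equiv t\lambda_j\pmod m$ at each polar $Q_j'$. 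The number of $(x-\alpha_i)$-, $(x-\beta_j)$- and $(x-\gamma_k)$-factors that regularity at $Q_\infty$ and at the places above the $\gamma_k$ forces upon $h$ is controlled precisely by $S_t$, and this drives the count of polar $Q_i$ up to $S_t$. The hypothesis $\lambda_j\in V_F$, namely $f_j(t)\ge0$ for all $t$, is exactly the condition ensuring that the places $Q_j'$ (whose pole class $t\lambda_j\bmod m$ may be smaller than $t$) do not pull this count below $S_t$; since these places carry the maximal coefficient $m-1$ in $A$, $f_j(t)\ge0$ certifies their compatibility with the displayed inequality.

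\emph{Main obstacle.} The substantive step is the displayed count: pinning down the minimal generating set $\Gamma(Q_1,\dots,Q_l)$ of the Weierstrass semigroup at these totally ramified places and proving $\#\{i:\gamma_i\ge t\}\ge S_t$ uniformly over all bounded generators. The delicate interaction is among the three place types — the $Q_i$ (pole class $t$), the $Q_j'$ (pole class $t\lambda_j\bmod m$, governed by $\lambda_j\in V_F$), and the partially ramified places over the $\gamma_k$, which consume pole capacity through the regularity constraint at $Q_\infty$ and account for the $\sum_k\lceil t\lambda_k'/m\rceil$ term of $S_t$. Everything else is bookkeeping built on the telescoping identity $\sum_{t'\ge t}s_{t'}=S_t-1$.
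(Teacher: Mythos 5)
Your computation of $\deg A=g$ and your effectiveness/realizability argument are correct and follow essentially the same route as the paper (which evaluates $\sum_{t=1}^{m-1}S_t$ via Lemma~\ref{inter}). The problem lies entirely in the non-speciality step, and it is not merely that you have deferred the hard part: the reduction you propose rests on a false claim. You assert that for every bounded generator $\gamma\in\Gamma(Q_1,\dots,Q_l)$ it suffices to exhibit one $t$ with $\#\{i:\gamma_i\ge t\}\ge S_t$. Such a $t$ would indeed force $\gamma\nleq\alpha$, but no such $t$ exists for some of the generators you must handle. By Lemma~\ref{mini_gen_set_mul}, a bounded generator of class $t$ supported on $T_1'$ places of type $Q_i$ and $T_2'$ places of type $Q_j'$ has the form $(t,\dots,t,\,t\lambda_{j_1}\bmod m,\dots,t\lambda_{j_{T_2'}}\bmod m)$ with $S_t$ nonzero coordinates, but the coordinates $t\lambda_j\bmod m$ can be smaller than $t$, so $\#\{i:\gamma_i\ge t\}$ can be strictly less than $S_t$, and less than $S_{t'}$ for every other $t'$ as well. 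A concrete instance occurs already in Example~\ref{non-special_divisors_cons_ex}(2): with $A=Q_1+3Q_2+5Q_\beta'$, so $\alpha=(1,3,5)$, the class-$4$ generator supported on $\{Q_2,Q_\beta'\}$ is $\iota_{\{2,3\}}\big((4,\,4\cdot 5\bmod 6)\big)=(0,4,2)$, and one checks $\#\{i:\gamma_i\ge t\}<S_t$ for every $1\le t\le 5$ (the counts are $2,2,1,1,0$ against $S_t=4,3,3,2,2$). Your test never fires, although $\gamma\nleq\alpha$ does hold because $4>3$ in the second coordinate. The hypothesis $\lambda_j\in V_F$ does not rescue this; it only guarantees that the one-place generators at the $Q_j'$ exceed $m-1$ and plays no role in the multi-place count.

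The repair is the type-separated count that the paper actually performs. The $Q_j'$-coordinates of a bounded generator are at most $m-1$ and hence automatically dominated by $\alpha$, because every $Q_j'$ in ${\rm supp}(A)$ carries the maximal coefficient $m-1$; the obstruction must therefore be extracted from the $Q_i$-coordinates alone. A bounded generator of class $t$ needs $\ell_t-T_2'$ places of type $Q_i$ with $\alpha$-coefficient $\ge t$, whereas ${\rm supp}(A)$ contains only $\ell_t-1-T_2\le \ell_t-1-T_2'<\ell_t-T_2'$ such places, since $T_2$ of the $\ell_t-1$ places with coefficient $\ge t$ are of type $Q_j'$. Separately, you still owe the explicit description of $\Gamma^+$ for tuples of totally ramified places (the content of Lemmas~\ref{mini_gen_set_1} and~\ref{mini_gen_set_mul}); your sketch via the decomposition $F=\bigoplus_b\mathbb{F}_q(x)\,y^b$ is plausible but is precisely the part you flag as the ``main obstacle'' and leave unproved, and it must also deliver the fact that a generator supported on a set whose size differs from every $\ell_t$ has a coordinate exceeding $m-1$.
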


Before we provide the proof of Theorem \ref{non-special_divisors_cons}, it is essential to introduce the following lemmas concerning the explicit description of $\Gamma^+(Q_i,Q_j^\prime)$ and certain properties associated with the floor and ceiling functions.

\begin{lemma}\cite[Prop. 3.12]{castellanosWeierstrassSemigroupsPure2024}\label{mini_gen_set_1}
    Let $F/\mathbb{F}_q(x)$ be a Kummer extension defined by Equation (\ref{kum_diff}). Then for $1\le i\le u$, we have 
    \[
        \Gamma^+(Q_i)=\mathbb{N}\,\backslash\Big\{m\mu+t \,\Big|\, 1\le t\le m-1,\,0\le \mu\le S_t-2\Big\}, 
    \]
    and for $1\le j\le v$, we have 
    \[
        \Gamma^+(Q_j^\prime)=\mathbb{N}\,\backslash\Big\{m\mu+t \,\Big|\, 1\le t\le m-1,\,0\le \mu\le f_j(t)\Big\}. 
    \]
\end{lemma}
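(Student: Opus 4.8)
The plan is to compute the Weierstrass semigroup $H(Q)$ at each of the totally ramified rational places $Q=Q_i$ and $Q=Q_j'$ directly from the local structure of the Kummer extension, since $\Gamma^+(Q)=H(Q)$ by definition. First I would record the relevant valuations. Because $(x-\alpha_i)$ and $(x-\beta_j)$ generate totally ramified places (their exponents $1$ and $\lambda_j$ being coprime to $m$), one has $v_{Q_i}(x-\alpha_i)=m$, $v_{Q_i}(y)=1$ and $v_{Q_j'}(x-\beta_j)=m$, $v_{Q_j'}(y)=\lambda_j$; at the totally ramified place $Q_\infty$ (since $(\lambda_0,m)=1$) one has $v_{Q_\infty}(x-\gamma)=-m$ for every constant $\gamma$ and $v_{Q_\infty}(y)=-\lambda_0$; over each $\gamma_k$ the place splits into $\gcd(\lambda_k',m)$ places, but a rational function of $x$ takes the same order at all of them, so they will impose a single linear condition.

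The central reduction is to pass to the rational subfield. Every $z\in F$ has a unique expansion $z=\sum_{s=0}^{m-1}h_s(x)\,y^s$ with $h_s\in\mathbb{F}_q(x)$, and at a fixed target place $Q$ the orders $v_Q(h_s y^s)$ of the nonzero terms are pairwise distinct modulo $m$ (because $v_Q(y)\in\{1,\lambda_j\}$ is coprime to $m$, while $v_Q(h_s)\equiv 0$). By the strict triangle inequality $v_Q(z)=\min_s v_Q(h_s y^s)$, so a pole number of $Q$ lying in a fixed residue class modulo $m$ is realized by a single monomial $h(x)y^s$. Thus the problem decouples into $m$ independent questions, one per residue class, each living on $\mathbb{P}^1=\mathbb{F}_q(x)$.

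For a fixed $s$, I would demand that $z=h(x)y^s$ have no pole outside $Q$. Translating $v_P(hy^s)\ge 0$ at every other place into conditions on $h$ shows that $h$ must lie in a Riemann--Roch space $\mathcal{L}(a\,P+C)$ on $\mathbb{P}^1$, where $P$ is the place of $\mathbb{F}_q(x)$ below $Q$, where $a=-v_P(h)$ is the free pole order at $P$, and where $C$ collects the admissible pole orders at the remaining $\alpha_{i'},\beta_{j'},\gamma_k$ (each a floor $\lfloor s\lambda_\bullet/m\rfloor$) together with the forced vanishing $\lceil s\lambda_0/m\rceil$ at infinity. Since on $\mathbb{P}^1$ one has $\ell(M)=\max(0,\deg M+1)$, a function with pole order exactly $a$ at $P$ exists if and only if $a\ge -\deg C$; hence the smallest pole number in the class is $(-\deg C)\,m-s\,v_Q(y)$ and every larger element of the class is a nongap. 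Computing $\deg C$ and using $\lambda_0=u+\sum_j\lambda_j+\sum_k\lambda_k'$ gives, for $Q=Q_i$ with $s=m-t$, the value $-\deg C=S_t$, so the gaps are exactly $\{m\mu+t:0\le\mu\le S_t-2\}$, which is the asserted description of $\Gamma^+(Q_i)$.

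The $Q_j'$ case follows the same route, but the bookkeeping is the real obstacle and is where I expect all the difficulty to sit. To isolate the residue class $t$ one must take $s\equiv -t\lambda\pmod m$ with $\lambda=\lambda_j^{-1}\bmod m$, i.e. $s=m\lceil t\lambda/m\rceil-t\lambda$; substituting this into the floors $\lfloor s\lambda_\bullet/m\rfloor$ and the ceiling $\lceil s\lambda_0/m\rceil$ and applying the identities $\lfloor(mA-B)/m\rfloor=A-\lceil B/m\rceil$ and $\lceil(mA-B)/m\rceil=A-\lfloor B/m\rfloor$ converts every term into the ceilings $\lceil t\lambda\lambda_\bullet/m\rceil$ appearing in $f_j$. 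The coefficient $u$ of the term $\sum_{i=1}^u\lceil t\lambda/m\rceil$ emerges precisely from $\lambda_0-\sum_j\lambda_j-\sum_k\lambda_k'=u$, and the single place $\beta_j$ contributes $\lceil t\lambda\lambda_j/m\rceil=t\rho+1$ where $\lambda\lambda_j=1+m\rho$; combining these and using that $t\lambda\lambda_0/m\notin\mathbb{Z}$ (as $\gcd(\lambda\lambda_0,m)=1$ and $1\le t\le m-1$) to trade a floor for a ceiling yields exactly $f_j(t)$ as the largest $\mu$ with $m\mu+t$ a gap. This identifies $\Gamma^+(Q_j')$ with the claimed set and completes the argument.
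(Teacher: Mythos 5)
First, a framing note: the paper does not prove this lemma at all --- it is quoted verbatim from \cite{castellanosWeierstrassSemigroupsPure2024}, so there is no in-paper proof to compare against. Judged on its own, your strategy is the standard one for Weierstrass semigroups at totally ramified places of Kummer covers, and your bookkeeping is correct: I checked that for $Q=Q_i$ with $s=m-t$ one gets $-\deg C=\lceil s\lambda_0/m\rceil-\sum_j\lfloor s\lambda_j/m\rfloor-\sum_k\lfloor s\lambda_k'/m\rfloor=S_t$, and for $Q=Q_j'$ with $s=m\lceil t\lambda/m\rceil-t\lambda$ the quantity $-\deg C-\lceil s\lambda_j/m\rceil$ collapses, via the identities you cite and $\lambda_0-\sum_{j'}\lambda_{j'}-\sum_k\lambda_k'=u$, to $f_j(t)+1$, which yields exactly the claimed gap sets.

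There is, however, one genuine gap. The strict triangle inequality at $Q$ only shows that if $(z)_\infty=nQ$ and $z=\sum_{s=0}^{m-1}h_s(x)y^s$, then a unique term $h_{s_0}y^{s_0}$ has valuation $-n$ at $Q$; it does \emph{not} show that this term is pole-free away from $Q$. At places where the summands' valuations are not pairwise distinct --- all unramified places, and the places over the $\gamma_k$, where $v(y)$ is not prime to the ramification index --- cancellation among the terms is a priori possible, so regularity of $z$ does not pass to $h_{s_0}y^{s_0}$. Consequently your monomial analysis proves only the nongap (upper-bound) direction; the assertion that every $m\mu+t$ with $\mu\le S_t-2$ (resp.\ $\mu\le f_j(t)$) is a gap needs the direct-sum decomposition $\mathcal{L}(nQ)=\bigoplus_s \mathcal{L}\big(\big[nQ+(y^s)\big]\big|_{\mathbb{F}_q(x)}\big)\,y^s$. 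Two standard patches: (i) extend constants to $\mathbb{F}_{q^c}\supseteq\mu_m$ (harmless, since gap sequences at rational places are invariant under constant field extension) and Galois-average, $h_sy^s=m^{-1}\sum_{l}\zeta^{-ls}\sigma^l(z)$; since $\sigma$ fixes the totally ramified place $Q$, each component then has poles only at $Q$ --- this is Maharaj's decomposition theorem, applicable because $nQ$ is Galois-invariant; or (ii) a counting argument: your computation shows the gap set is \emph{contained} in the displayed candidate set, and since $u\ge1$ forces $S_t\ge u(m-t)/m>0$, the candidate set has cardinality $\sum_{t=1}^{m-1}(S_t-1)=g$ (the same telescoping computation as in the paper's proof of Theorem \ref{non-special_divisors_cons}); as there are exactly $g$ gaps, containment is equality. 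The same count works for $Q_j'$ because $f_j(t)+1=S_{t\lambda \bmod m}-1$ and $t\mapsto t\lambda\bmod m$ permutes $\{1,\dots,m-1\}$. Either patch is a few lines; with one of them inserted, your proof is complete.
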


\begin{lemma}\cite[Cor. 3.5]{castellanos2025generalizedweierstrasssemigroupsarbitrary}\label{mini_gen_set_mul}
    Let $F/\mathbb{F}_q(x)$ be a Kummer extension defined by Equation (\ref{kum_diff}). Let $\ell_1+\ell_2=\ell$ and $2\le\ell\le u+v$. For $1\le \nu^\prime\le \ell_2$, let $t_{j_{\nu^\prime}} = t\lambda_{j_{\nu^\prime}} \bmod m$. Then  
    \begin{align*}
        \Gamma^+(Q_{i_1},\cdots,Q_{i_{\ell_1}},Q_{j_1}^\prime,\cdots,&Q_{j_{\ell_2}}^\prime)=\Big\{(m\mu_1+t,\cdots,m\mu_{\ell_1}+t,m\mu^\prime_1+t_{j_1},\cdots,m\mu_{\ell_2}^\prime+t_{j_{\ell_2}})\,\Big|\\
        & 1\le t \le m-1,\,\mu_\nu\ge0,\,\mu_{\nu^\prime}^\prime\ge0,\,\sum_{\nu=1}^{\ell_1}\mu_\nu+\sum_{\nu^\prime=1}^{\ell_2}\mu_{\nu^\prime}^\prime=S_t -\ell\Big\}.
    \end{align*}
\end{lemma}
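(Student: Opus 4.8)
The plan is to compute the Weierstrass semigroup $H(Q_{i_1},\dots,Q_{i_{\ell_1}},Q_{j_1}',\dots,Q_{j_{\ell_2}}')$ one residue class modulo $m$ at a time, realize its elements by explicit functions built from $x$ and $y$, and then read off the minimal generators. First I would record the local data forced by the Kummer structure of (\ref{kum_diff}): from the principal divisor of $y$ (obtained from $y^m$ equal to the right-hand side) one gets $v_{Q_{i_\nu}}(y)=1$, $v_{Q_{j_{\nu'}}'}(y)=\lambda_{j_{\nu'}}$ and $v_{Q_\infty}(y)=-\lambda_0$, while $v_{Q_{i_\nu}}(x-\alpha_{i_\nu})=m$, $v_{Q_\infty}(x-\alpha_{i_\nu})=-m$ and $x-\alpha_{i_\nu}$ is a unit at the remaining chosen places (and symmetrically for the $\beta$'s). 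Since each $Q_{i_\nu},Q_{j_{\nu'}}'$ is totally ramified over $\mathbb{F}_q(x)$, for any $f=\sum_{t=0}^{m-1}y^t r_t(x)$ the valuations $v_Q(y^t r_t)$ at such a place $Q$ fall into pairwise distinct residue classes modulo $m$, whence $v_Q(f)=\min_t v_Q(y^t r_t)$. It follows that a vector whose coordinates share one consistent residue pattern — the situation for a candidate minimal generator — is realized by a single pure function $f=y^{m-t}r(x)$, so I may fix one $t\in\{1,\dots,m-1\}$ at a time. Writing $s=m-t$, the valuation formulae show the pole orders of such an $f$ are congruent to $t$ at each $Q_{i_\nu}$ and to $t_{j_{\nu'}}=t\lambda_{j_{\nu'}}\bmod m$ at each $Q_{j_{\nu'}}'$, which already produces the residue pattern in the statement.

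The core step is to decide, for fixed $t$, exactly which vectors $(m\mu_\nu+t;\,m\mu_{\nu'}'+t_{j_{\nu'}})$ are pole vectors. I translate ``$f=y^{m-t}r$ has poles only at the chosen places'' into conditions on $r\in\mathbb{F}_q(x)$: the target pole orders force a pole of order $1+\mu_\nu$ of $r$ at $\alpha_{i_\nu}$ and a pole of order $\lfloor s\lambda_{j_{\nu'}}/m\rfloor+1+\mu_{\nu'}'$ at $\beta_{j_{\nu'}}$; $r$ may have no pole at a non-chosen $\alpha_i$; but — and this is the crucial point — the zeros of $y^s$ let $r$ carry an absorbed pole of order up to $\lfloor s\lambda_j/m\rfloor$ at each non-chosen $\beta_j$ and up to $\lfloor s\lambda_k'/m\rfloor$ at each $\gamma_k$ without producing any pole of $f$ there (here the hypothesis $(\lambda_k',m)\neq1$, i.e. that the $\gamma_k$ are not totally ramified, is precisely what makes this room available). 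Finally $v_{Q_\infty}(f)\ge0$ becomes the degree bound $\deg(\text{pole divisor of }r)\ge\lceil s\lambda_0/m\rceil$. Maximizing the absorbable room, a suitable $r$ exists if and only if
\[
    \ell+\sum_{j=1}^v\bigg\lfloor\frac{s\lambda_j}{m}\bigg\rfloor+\sum_{k=1}^w\bigg\lfloor\frac{s\lambda_k'}{m}\bigg\rfloor+\sum_{\nu}\mu_\nu+\sum_{\nu'}\mu_{\nu'}'\ \ge\ \bigg\lceil\frac{s\lambda_0}{m}\bigg\rceil .
\]

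The last step combines an elementary ceiling/floor identity with the extraction of minimal generators. Using $s=m-t$ and the relations $\lceil(m-t)a/m\rceil=a-\lfloor ta/m\rfloor$ and $\lfloor(m-t)a/m\rfloor=a-\lceil ta/m\rceil$, the right-hand side minus the constant terms on the left collapses, via $\lambda_0=u+\sum_j\lambda_j+\sum_k\lambda_k'$, exactly to $S_t-\ell$; note that the non-chosen $\beta_j$ contributions complete the partial sum into the full $\sum_{j=1}^v\lfloor s\lambda_j/m\rfloor$, so the threshold is independent of which places were selected. Hence the displayed condition reads $\sum_\nu\mu_\nu+\sum_{\nu'}\mu_{\nu'}'\ge S_t-\ell$, which describes $H$ in residue class $t$. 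The minimal generators are exactly the corner vectors with $\sum_\nu\mu_\nu+\sum_{\nu'}\mu_{\nu'}'=S_t-\ell$: fixing any one coordinate of such a vector forces all the others (the sum can no longer decrease), so it is minimal in every coordinate slice, whereas any vector with $\sum\mu>S_t-\ell$ is recovered as a least upper bound of such corners (together, when necessary, with a single embedded single-point generator), hence is redundant. Reassembling over $1\le t\le m-1$ gives the asserted description of $\Gamma^+$, and specializing to $\ell=1$ recovers Lemma~\ref{mini_gen_set_1}, a convenient consistency check.

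I expect the main obstacle to be the bookkeeping in the second paragraph: correctly quantifying the poles of $r$ that the zeros of $y^{m-t}$ absorb at the non-totally-ramified places $\gamma_k$ and at the unused totally ramified $\beta_j$, and then verifying that after maximizing this room the surviving constraint is the clean, selection-independent bound $S_t-\ell$. A secondary care point is the passage from ``pole vectors in $H$'' to ``minimal generators,'' where one must confirm that only the corner vectors enter the minimal generating set and that the partial-order subtleties are resolved through the least-upper-bound reconstruction of $H$.
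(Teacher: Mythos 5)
First, a point of comparison: the paper does not prove this lemma at all --- it is imported verbatim from \cite[Cor.~3.5]{castellanos2025generalizedweierstrasssemigroupsarbitrary} --- so your attempt has to be judged on its own merits rather than against an internal proof. Your \emph{membership} half is essentially right: the local data for $f=y^{m-t}r(x)$ is correct ($v_{Q_{i_\nu}}(y)=1$, $v_{Q_j^\prime}(y)=\lambda_j$, $v_{Q_\infty}(y)=-\lambda_0$), the absorbable pole room $\lfloor s\lambda_j/m\rfloor$ at unused $\beta_j$ and $\lfloor s\lambda_k^\prime/m\rfloor$ at the $\gamma_k$ is computed correctly (although, contrary to your parenthetical, this room has nothing to do with $(\lambda_k^\prime,m)\neq1$ --- the same room exists at the unused, totally ramified $\beta_j$), and with $s=m-t$ the identities $\lfloor sa/m\rfloor=a-\lceil ta/m\rceil$ and $\lceil s\lambda_0/m\rceil=\lambda_0-\lfloor t\lambda_0/m\rfloor$ do collapse your inequality to $\sum_\nu\mu_\nu+\sum_{\nu^\prime}\mu_{\nu^\prime}^\prime\ge S_t-\ell$, independently of which places were selected.

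The genuine gap is the converse direction, at precisely the two points you flagged as ``care points'' and then did not resolve. (i) The reduction of a minimal-in-slice vector to a single pure function $y^{m-t}r(x)$ is asserted, not proved. Your identity $v_Q(f)=\min_{t^\prime}v_Q(y^{t^\prime}r_{t^\prime})$ holds only at the \emph{chosen} places, which are totally ramified; it controls the pole vector of $f$ there but does not show that each summand $y^{t^\prime}r_{t^\prime}$ is pole-free \emph{outside} the chosen set, which is what you need for each summand's pole vector to lie in $H$ and for every $w\in H$ to be a coordinatewise lub of pure vectors in $H$. At a place above $\gamma_k$ the valuations of the summands fall into residue classes modulo $m/(\lambda_k^\prime,m)$ that are shared by all $t^\prime$ in a fixed class mod $m/(\lambda_k^\prime,m)$, and at unramified places all $m$ summands share residue $0$, so cancellation among summands is not excluded; ruling it out requires a separate Vandermonde-type argument (using the distinct values, at the places above, of $z=y^{m/(\lambda_k^\prime,m)}(x-\gamma_k)^{-\lambda_k^\prime/(\lambda_k^\prime,m)}$, resp.\ of $y$). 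Without this lemma the heart of the ``only these vectors'' direction is missing. (ii) Even granting (i), your minimality test (``fixing one coordinate forces all the others'') compares a corner only against pure vectors of the same class $t$ with full support. Minimality in the slice $\{w\in H\mid w_i=v_i\}$ must be tested against all of $H$, in particular against pure vectors supported on a proper subset of $\ell^\prime<\ell$ chosen places, where the threshold is the \emph{larger} value $S_t-\ell^\prime$; it is exactly the resulting contradiction $S_t-\ell^\prime\le\sum_{\mathrm{supp}}\mu^\prime\le S_t-\ell$ that saves the claim, and this step is absent. Relatedly, the statement is only true under the convention (as in the cited sources) that $\Gamma^+$ consists of vectors with strictly positive coordinates: allowing $0$, the vector $(m(S_t-1)+t,0,\cdots,0)$ is minimal in its slice and would wrongly enter $\Gamma^+$. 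Finally, your $\ell=1$ consistency check does not quite work as stated, since $\Gamma^+(Q_i)=H(Q_i)$ by definition, whereas the $\ell=1$ specialization of the formula yields only the smallest nonzero element of each residue class.
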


\begin{lemma}\cite[Lem. 4.1]{castellanosWeierstrassSemigroupsPure2024}\label{inter}
    Let $a$ and $b$ be elements in $\mathbb{R}$. The following statements holds:
    \begin{itemize}
        \item[(i)] $\lfloor-a\rfloor = -\lceil a\rceil$.
        \item[(ii)] $\lceil a\rceil - \lfloor a \rfloor = \begin{cases}
            0, &\text{if}\;\;a\in\mathbb{Z},\\
            1, &\text{if}\;\;a\notin\mathbb{Z}.
        \end{cases}$
        \item[(iii)] If $a$ and $b$ are positive integers, then
        \[
            \sum_{k=1}^{b-1}\bigg\lfloor\frac{ka}{b}\bigg\rfloor=\frac{(a-1)(b-1)+(a,b)-1}{2}.
        \]
    \end{itemize}
\end{lemma}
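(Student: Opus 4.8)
The plan is to dispatch the three parts separately; (i) and (ii) follow immediately from the definitions of floor and ceiling, whereas (iii) contains the real content. For both (i) and (ii) I would write $a=\lfloor a\rfloor+\{a\}$ with fractional part $\{a\}\in[0,1)$ and split on whether $a\in\mathbb{Z}$. If $a\in\mathbb{Z}$ then $\lfloor a\rfloor=\lceil a\rceil=a$, giving $\lceil a\rceil-\lfloor a\rfloor=0$ at once, and $\lfloor-a\rfloor=-a=-\lceil a\rceil$; if $\{a\}\in(0,1)$ then $\lceil a\rceil=\lfloor a\rfloor+1$, which settles (ii) with value $1$, and writing $-a=-(\lfloor a\rfloor+1)+(1-\{a\})$ with $1-\{a\}\in(0,1)$ gives $\lfloor-a\rfloor=-\lfloor a\rfloor-1=-\lceil a\rceil$, settling (i). Alternatively one obtains (i) in one line from the universal property of the floor: $\lfloor-a\rfloor$ is the greatest integer $k$ with $-k\ge a$, i.e. $-k=\lceil a\rceil$.

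For (iii) my approach is the classical reciprocity/pairing argument. Setting $S=\sum_{k=1}^{b-1}\lfloor ka/b\rfloor$ and using that $k\mapsto b-k$ permutes $\{1,\dots,b-1\}$, I would write $2S=\sum_{k=1}^{b-1}\big(\lfloor ka/b\rfloor+\lfloor(b-k)a/b\rfloor\big)$. The elementary fact powering the computation is that when $x+y\in\mathbb{Z}$ one has $\lfloor x\rfloor+\lfloor y\rfloor=x+y$ if $x\in\mathbb{Z}$ and $x+y-1$ otherwise, which itself follows from the fractional-part bookkeeping of the first paragraph. Since $ka/b+(b-k)a/b=a\in\mathbb{Z}$, each summand equals $a$ exactly when $b\mid ka$ and equals $a-1$ in all remaining cases.

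The main obstacle, and the only place where the gcd enters, is counting the indices $k\in\{1,\dots,b-1\}$ with $b\mid ka$. Writing $d=(a,b)$ and $b'=b/d$, I would use that $a/d$ and $b'$ are coprime to deduce $b\mid ka\iff b'\mid k$; the multiples of $b'$ in the range are precisely $b',2b',\dots,(d-1)b'$, so there are exactly $d-1$ of them. Consequently $2S=(d-1)a+(b-d)(a-1)=(b-1)a-(b-d)$, and a short rearrangement using $(a-1)(b-1)=ab-a-b+1$ gives $S=\frac{(a-1)(b-1)+d-1}{2}$, which is the claimed formula. I expect this coprimality count to be the one step needing genuine care; the remainder is routine arithmetic.
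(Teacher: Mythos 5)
Your proof is correct. Note that the paper itself offers no proof of this lemma---it is quoted directly from \cite[Lem. 4.1]{castellanosWeierstrassSemigroupsPure2024}---so there is no in-paper argument to compare against; your write-up supplies a complete, self-contained verification. Parts (i) and (ii) are routine as you say, and for (iii) your pairing $k\mapsto b-k$, together with the observation that each pair contributes $a$ or $a-1$ according to whether $b\mid ka$, is the classical Gauss-style argument; the key count is right, since with $d=(a,b)$ and $b'=b/d$ one has $b\mid ka\iff b'\mid k$ by coprimality of $a/d$ and $b'$, giving exactly the $d-1$ indices $b',2b',\dots,(d-1)b'$, and the closing arithmetic $2S=(d-1)a+(b-d)(a-1)=(a-1)(b-1)+d-1$ checks out (the degenerate case $b=1$ is also consistent, as both sides vanish).
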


{
\raggedright
{\it Proof of Theorem \ref{non-special_divisors_cons}.} Note that
}
    \begin{align*}
        s_{m-1}& =\ell_{m-1}-1=S_{m-1}-1= u+\sum_{j=1}^v\bigg\lceil\frac{(m-1)\lambda_j}{m}\bigg\rceil+\sum_{k=1}^w\bigg\lceil\frac{(m-1)\lambda_k^\prime}{m}\bigg\rceil-\bigg\lfloor\frac{(m-1)\lambda_0}{m}\bigg\rfloor-1 \\
        & =u+\sum_{j=1}^v\bigg\lceil\lambda_j-\frac{\lambda_j}{m}\bigg\rceil+\sum_{k=1}^w\bigg\lceil\lambda_k^\prime-\frac{\lambda_k^\prime}{m}\bigg\rceil-\bigg\lfloor\lambda_0-\frac{\lambda_0}{m}\bigg\rfloor-1 \\
        & =u+\sum_{j=1}^v\lambda_j+\sum_{k=1}^w\lambda_k^\prime-\lambda_0-\bigg\lfloor-\frac{\lambda_0}{m}\bigg\rfloor-1 \\
        & = -\bigg\lceil-\frac{\lambda_0}{m}\bigg\rceil= \bigg\lfloor\frac{\lambda_0}{m}\bigg\rfloor\ge 0. 
    \end{align*}
    So we conclude that $s_t\ge0$ and the sequence $\ell_t$ is monotonically decreasing for $1\le t\le m-1$. On the other hand, 
    \[
        \#{\rm supp} (A)=\ell_1-1=S_1-1=u+v+w-\bigg\lfloor\frac{\lambda_0}{m}\bigg\rfloor-1< u+v.
    \]
    Therefore, the divisor $A$ is well-defined. Following this, we will prove that $\deg A=g$. Since 
    \[
        \# \{1\le t\le m-1\mid m \text{ divides } t\lambda \}=(\lambda,m)-1,\text{ for }1<\lambda<m,
    \]
    we have
    \begin{align*}
        \deg A&=\sum_{t=1}^{m-1}ts_t=\sum_{t=1}^{m-1}\ell_t-(m-1)=\sum_{t=1}^{m-1}S_t-(m-1) \\
        &=\sum_{t=1}^{m-1}\bigg(u+\sum_{j=1}^v\bigg\lceil\frac{t\lambda_j}{m}\bigg\rceil+\sum_{k=1}^w\bigg\lceil\frac{t\lambda_k^\prime}{m}\bigg\rceil-\bigg\lfloor\frac{t\lambda_0}{m}\bigg\rfloor\bigg)-(m-1) \\
        &=(m-1)u - (m-1) + \sum_{j=1}^v\sum_{t=1}^{m-1}\bigg\lceil\frac{t\lambda_j}{m}\bigg\rceil + \sum_{k=1}^w\sum_{t=1}^{m-1}\bigg\lceil\frac{t\lambda_k^\prime}{m}\bigg\rceil - \sum_{t=1}^{m-1}\bigg\lfloor\frac{t\lambda_0}{m}\bigg\rfloor \\
        &= (m-1)u - (m-1) -\sum_{t=1}^{m-1}\bigg\lfloor\frac{t\lambda_0}{m}\bigg\rfloor + \sum_{j=1}^v\bigg( m-1+\sum_{t=1}^{m-1}\bigg\lfloor\frac{t\lambda_j}{m}\bigg\rfloor \bigg) \\
        &\quad + \sum_{k=1}^w\bigg( m-(\lambda_k^\prime,m)+\sum_{t=1}^{m-1}\bigg\lfloor\frac{t\lambda_k^\prime}{m}\bigg\rfloor \bigg)\qquad\big(\text{By Lemma \ref{inter}(ii)}\big)\\
        &= (m-1)u - (m-1) - \frac{(m-1)(\lambda_0-1)}{2} + \sum_{j=1}^v\frac{(m-1)(\lambda_j+1)}{2}\\
        &\quad + \sum_{k=1}^w\frac{(m-1)(\lambda_k^\prime-1)-(\lambda_k^\prime,m)-1+2m}{2}\qquad\big(\text{By Lemma \ref{inter}(iii)}\big)\\
        &= \frac{(m-1)(u+v-1)+mw-\sum_{k=1}^w(\lambda_k^\prime,m)}{2}=g.
    \end{align*}
    Hence, $A$ is an effective divisor of degree $g$. We shall now prove that $A$ is non-special. 

    For any $1\le t\le m-1$, if $s_t>0$, then $\ell_1\ge \ell_2\ge\cdots\ge \ell_t=S_t\ge2$ since the sequence $\ell_t$ is monotonically decreasing. By Lemma \ref{mini_gen_set_1}, we have
    \[
        t<\xi\text{ for any }\xi\in\Gamma^+(Q_{i}).
    \]
    On the other hand, since $\lambda_j\in V_F$ for all $1\le j\le v$, we also have 
    \[
        m-1 < \xi\text{ for any }\xi\in\Gamma^+(Q_{j}^\prime).
    \]
    Suppose that ${\rm supp}(A)=\{Q_{i_\nu}\}_{\nu=1}^{T_1}\cup\{Q_{j_{\nu^\prime}}^\prime\}_{\nu^\prime=1}^{T_2} $ with $T_1+T_2=\ell_1-1$. Let $\theta = (\theta_i)_{i=1}^{\ell_1-1}\in\mathbb{N}^{\ell_1-1}$ such that 
    \begin{equation}\label{A_specific}
        A=\sum_{\nu=1}^{T_1}\theta_\nu Q_{i_\nu}+\sum_{\nu^\prime=1}^{T_2}\theta_{T_1+\nu^\prime}Q_{j_{\nu^\prime}}^\prime,
    \end{equation}
    where $\theta_\nu\le m-1$ for $1\le \nu\le T_1$ and $\theta_{T_1+\nu^\prime}=m-1$ for $1\le \nu^\prime\le T_2$ by the definition of $A$. Therefore, for any $\xi \in \Gamma^+(Q_{i_\nu})\big(\text{or }\Gamma^+(Q_{j_{\nu^\prime}}^\prime)\big)$, we have $\iota_{\{\nu\}}(\xi)\big(\text{or }\iota_{\{T_1+\nu^\prime\}}(\xi)\big)\nleq \theta$. 

    Subsequently, we consider $\xi=(\xi_i)\in\Gamma^+(P_l\mid l\in I\subseteq [\ell_1-1])$ where $\{P_1,\cdots,P_{T_1^\prime}\}\subseteq\{Q_{i_\nu}\}_{\nu=1}^{T_1^\prime}$, $\{P_{T_1^\prime+1},\cdots$, $P_{T_1^\prime+T_2^\prime}\}\subseteq\{Q_{j_{\nu^\prime}}^\prime\}_{\nu^\prime=1}^{T_2^\prime}$ and $T_1^\prime+T_2^\prime=|I|$. By Lemma \ref{mini_gen_set_mul}, we know that if $|I|\neq \ell_t$ for all $1\le t\le m-1$, then there exists $\xi_i>m$ for some $i$. Therefore, $\iota_I(\xi)\nleq \theta$. For this reason, we can assume that $|I|= \ell_t$ for some $t\in[m-1]$. Then we have
    \[
        \xi = (t,\cdots,t,t_{j_1},\cdots,t_{j_{T_2^\prime}}),
    \]
    where $t_{j_{\nu^\prime}}\le m-1$ for all $1\le \nu^\prime\le T_2^\prime$. Now we prove that $\iota_I(\xi)\nleq \theta$. Suppose, on the contrary, that there are $|I|-T_2^\prime=\ell_t-T_2^\prime$ places in $\{Q_{i_\nu}\}_{\nu=1}^{T_1}$ such that $\theta_\nu\ge t$ in Equation (\ref{A_specific}). By the definition of $A$, there are at most
    \[
        \sum_{l=t}^{m-2}s_l+s_{m-1}-T_2=\ell_t-1-T_2
    \]
    places in $\{Q_{i_\nu}\}_{\nu=1}^{T_1}$ such that $\theta_\nu\ge t$ in Equation (\ref{A_specific}). Consequently, we have a contradiction since $\ell_t - 1 - T_2 < \ell_t - T_2^\prime$. Thus, by Lemma \ref{non-special}, we conclude that $A$ is non-special. The proof is completed.  \qed

\begin{remark}\label{non-special_divisors_cons_rem}
    {\rm (1)} Within the demonstration of Theorem \ref{non-special_divisors_cons},  it becomes apparent that the divisor $A$ remains well-defined under the condition $v=\big\lfloor\frac{\lambda_0}{m}\big\rfloor+1$. In this case, the divisor $A$ can be represented as
    \[
        A=\sum_{t=1}^{m-2}t\sum_{l=1}^{s_t}Q_{t_l}+ (m-1)\sum_{l=1}^{s_{m-1}-k}Q_{(m-1)_l} + (m-1)\sum_{l^\prime=1}^{k}Q^\prime_{l^\prime},
    \]
    where $w\le k\le \big\lfloor\frac{\lambda_0}{m}\big\rfloor$. Similar to the proof in Theorem \ref{non-special_divisors_cons}, the above divisor $A$ is also effective and non-special of degree $g$

    {\rm (2)} Theorem \ref{non-special_divisors_cons} aligns with the result presented in \cite{Explicit_Non-special_Divisors_of_Small_Degree} for the case where $v=w=0$. While \cite{Explicit_Non-special_Divisors_of_Small_Degree} established the necessity as well, our results can be applied to Kummer extensions that are different from those in \cite{Explicit_Non-special_Divisors_of_Small_Degree}. 
\end{remark}

By combining Lemma \ref{non_special_g-1} and Theorem \ref{non-special_divisors_cons}, we can derive some non-special divisors of degree $g-1$ on some Kummer extensions in the following examples. 

\begin{example}\label{non-special_divisors_cons_ex}
    {\rm (1)} Consider the subcover of the BM curve in \cite[Rem. 3.2]{mendozaExplicitEquationsMaximal2022} defined by the affine equation
    \[
        W^m = U^d(U^d-1)\bigg( \frac{1-U^{d(q-1)}}{U^d-1} \bigg)^{q+1},\quad d\mid q+1\,\text{ and }\,m\mid q^2-q+1.
    \]Let $q = 5$, $m = 7$ and $d = 1$. Applying the morphism $\varphi(x,y) = (U,(U^5-U)/W)$ to the above equation, then we obtain a birationally equivalent curve
    \[
        y^7 = (x-2)(x-3)(x-4)\cdot x^6(x-1)^6.
    \]
    Subsequently, we have $\ell_i = 3$ for all $1\le i\le 6$ and $V_F = \{6\}$. Let $Q_i\,(2\le i\le 4)$ be the rational place corresponding to the zero of $x-i$, $Q_j^\prime\,(0\le j\le 1)$ the rational place corresponding to the zero of $x-j$ and $Q_\infty$ the unique place at infinity. By Theorem \ref{non-special_divisors_cons}, the divisor 
    \[
        A = 6(Q_2+Q_3) \,(\text{or } 6(Q_0^\prime+Q_1^\prime))
    \]
    is effective and non-special of degree $g=12$. By Lemma \ref{non_special_g-1}, the divisor $A-Q_\infty$ is non-special of degree~$11$.  

    {\rm (2)} Consider the Kummer extension $F/\mathbb{F}_q(x)$ defined by 
    \[
        y^6 = (x-\alpha_1)(x-\alpha_2)(x-\alpha_3)\cdot(x-\beta)^5\cdot(x-\gamma)^3, 
    \]
    where $\alpha_i,\,\beta,\,\gamma\in\mathbb{F}_q \,(1\le i\le 3)$ are pairwise distinct elements. Then we obtain the values of $\ell_i\,(1\le i\le 5)$ in Table~\ref{tab:ell_i_non-special_divisors_cons_ex_2} and $V_F = \{5\}$.
    \begin{table}[ht]
	\centering
	\begin{tabular}{ccccc}
		\hline
        $\ell_1$ & $\ell_2$ & $\ell_3$ & $\ell_4$ & $\ell_5$  \\
        \hline
		4      & 3      & 3      & 2      & 2     \\
        \hline
	\end{tabular}
	\caption{The values of $\ell_i$}
	\label{tab:ell_i_non-special_divisors_cons_ex_2}
    \end{table}
    Let $Q_i\,(1\le i\le 3)$ be the rational place corresponding to the zero of $x-\alpha_i$, $Q_\beta^\prime$ the rational place corresponding to the zero of $x-\beta$ and $Q_\infty$ the unique place at infinity. By Theorem \ref{non-special_divisors_cons}, the divisor
    \[
        A = Q_1 + 3Q_2 + 5 Q_3\,(\text{or } Q_1 + 3Q_2 + 5Q_\beta^\prime )
    \]
    is effective and non-special of degree $g=9$. By Lemma \ref{non_special_g-1}, the divisor $A-Q_\infty$ is non-special of degree $8$. 

    {\rm (3)} Consider the Kummer extension $F/\mathbb{F}_q(x)$ defined by 
    \[
        y^4 = x \cdot(x-\alpha_1)\cdot(x - \alpha_2)\cdot(x-1)^2,  
    \]
    where $\alpha_i\in\mathbb{F}_q\backslash\{0,1\} \,(1\le i\le 2)$ are pairwise distinct elements. Then we obtain the values of $\ell_i\,(1\le i\le 3)$ in Table \ref{tab:ell_i_non-special_divisors_cons_ex_3}.
    \begin{table}[ht]
	\centering
	\begin{tabular}{ccc}
		\hline
        $\ell_1$ & $\ell_2$ & $\ell_3$   \\
        \hline
		3      & 2      & 2       \\
        \hline
	\end{tabular}
	\caption{The values of $\ell_i$}
	\label{tab:ell_i_non-special_divisors_cons_ex_3}
    \end{table}
    Let $Q_i$ be the only rational place corresponding to the zero of $x-\alpha_i$, $Q_0$ the only rational place corresponding to the zero of $x$ and $Q_\infty$ the unique place at infinity. By Theorem~\ref{non-special_divisors_cons}, the divisor
    \[
        A = Q_0 + 3Q_1,
    \]
    is effective and non-special of degree $g=4$. By Lemma \ref{non_special_g-1}, the divisor $A-Q_2$ is non-special of degree~$3$. 
\end{example}

\section{Several General Constructions of LCPs of AG Codes via Kummer Extensions}\label{gen_con_LCP}

In this section, we provide several general constructions of LCPs of AG Codes via Kummer extensions. Specifically, one form of constructions involves Kummer extensions $F/\mathbb{F}_q(x)$ defined by Equation (\ref{kum_diff}) with $w=0$, while the other involves Kummer extensions $F/\mathbb{F}_q(x)$ with $v=0$.

\subsection{General Constructions via Kummer Extensions with $w=0$}

In the following, we introduce two general constructions of LCPs of AG Codes via Kummer extensions $F/\mathbb{F}_q(x)$ defined by Equation (\ref{kum_diff}) with $w=0$.

\begin{theorem}\label{con_w=0_1}
    Let $F/\mathbb{F}_q(x)$ be the Kummer extension defined by Equation (\ref{kum_diff}) with $w=0$. Suppose that $a_1,\cdots,a_\tau$ are elements of $\mathbb{F}_q$ such that $P_{a_i}$ is a rational place in $\mathbb{P}_{\mathbb{F}_q(x)}$ splitting completely in the extension $F/\mathbb{F}_q(x)$, denoted by $P_{a_ib_j}$ for $1\le j \le m$. Let
    \[
        D = \sum_{i=1}^\tau \sum_{j=1}^m P_{a_ib_j}
    \]
    and $n=\deg D=\tau m$. Keep the notations and assumptions in Theorem \ref{non-special_divisors_cons}. Let $s_0=0$ and $s_t^\prime=\sum_{i=0}^{t-1}s_i$ for $1\le t\le m-1$. For $m\le s <(n-g+2)/\lambda_0$, we define the divisors
    \begin{align*}
        & G = \sum_{t=1}^{m-1}t\sum_{l=1}^{s_t}Q_{l+s_t^\prime} + (n-\lambda_0s)Q_\infty,\;\text{ and }\\
        & H = \sum_{t=1}^{m-1}(s+m-1-t)\sum_{l=1}^{s_t}Q_{l+s_t^\prime} + (s+m-1)\sum_{l=1}^{s_{m-1}-v}Q_{u+v-\lfloor\frac{\lambda_0}{m}\rfloor-1+l}\\
        &\qquad + \sum_{l=1}^{v}(\lambda_l s+m-1)Q^\prime_l+ (s-1)Q_u - Q_\infty . 
    \end{align*}
    Then the pair $(\mathcal{C}_{\mathcal{L}}(D,G),\mathcal{C}_{\mathcal{L}}(D,H))$ is an LCP of AG Codes with parameters
    \begin{gather*}
            [n,n-\lambda_0s+1,\ge \lambda_0s-g]_q\text{ and }[n,\lambda_0s-1,\ge n-\lambda_0s-g+2]_q
    \end{gather*}
    respectively and the security parameter is $d(\mathcal{C}_{\mathcal{L}}(D,G))$. 
\end{theorem}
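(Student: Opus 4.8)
The plan is to verify the hypotheses of Lemma~\ref{LCP_criterion} for the pair $(G,H)$. First I would record the two degree identities $\deg G = g + n - \lambda_0 s$ and $\deg H = g + \lambda_0 s - 2$. The former is immediate once one notices that the finite part $\sum_{t=1}^{m-1} t \sum_{l=1}^{s_t} Q_{l+s_t'}$ of $G$ is exactly the degree-$g$ divisor furnished by Theorem~\ref{non-special_divisors_cons} (with every $P_{l'}$ chosen among the $Q_i$), while the latter follows from a direct summation using $\sum_{t=1}^{m-1} s_t = S_1-1 = u+v-\lfloor\lambda_0/m\rfloor-1$, $\sum_{t=1}^{m-1} t s_t = g$, $\lambda_0 = u+\sum_j \lambda_j$, and the genus identity $(m-1)(u+v-1)=2g$ (valid since $w=0$). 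From $s\ge m$ one gets $\lambda_0 s \ge \lambda_0 m > g$, and the upper bound $s<(n-g+2)/\lambda_0$ gives $\lambda_0 s \le n-g+1$; together these force $2g-2<\deg G,\deg H<n$. As both degrees exceed $2g-2$, $G$ and $H$ are non-special, so $\ell(G)=n-\lambda_0 s+1$ and $\ell(H)=\lambda_0 s-1$, whence $\ell(G)+\ell(H)=n$.

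Next I would compute $\gcd(G,H)$ place by place. Using $s\ge m$ one checks that $\min(t,\,s+m-1-t)=t$ on each $Q_i$ in the common finite support, that $\min(0,\cdot)=0$ on every place occurring only in $H$, and that $\min(n-\lambda_0 s,\,-1)=-1$ at $Q_\infty$ (since $n-\lambda_0 s\ge g-1\ge 0$). Hence $\gcd(G,H)=A-Q_\infty$, where $A$ is precisely the finite part of $G$. Because $A$ is the non-special divisor of degree $g$ produced by Theorem~\ref{non-special_divisors_cons} and $Q_\infty\notin{\rm supp}(A)$, Lemma~\ref{non_special_g-1} shows $\gcd(G,H)$ is non-special of degree $g-1$.

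The crux is the non-speciality of ${\rm lmd}(G,H)-D$, and here I would avoid analysing that awkward divisor directly. The key observation is the linear equivalence $G+H\sim K+D$, where $K$ is a canonical divisor. Indeed, from the tame totally ramified structure of the $Q_i$, $Q_j'$, $Q_\infty$ one computes $K=(m-1)\sum_{i=1}^u Q_i+(m-1)\sum_{j=1}^v Q_j'-(m+1)Q_\infty$; moreover $D\sim n Q_\infty$ because $D=(\prod_{i=1}^\tau(x-a_i))+nQ_\infty$, each $P_{a_i}$ splitting completely. Adding $G$ and $H$ placewise and subtracting the principal divisor $s(y)$, with $(y)=\sum_{i=1}^u Q_i+\sum_{j=1}^v \lambda_j Q_j'-\lambda_0 Q_\infty$, one finds $(G+H)-s(y)-(K+nQ_\infty)=-mQ_u+mQ_\infty=-(x-\alpha_u)$, again principal; hence $G+H\sim K+nQ_\infty\sim K+D$. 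Since $\gcd(G,H)+{\rm lmd}(G,H)=G+H$, this yields ${\rm lmd}(G,H)-D\sim K-\gcd(G,H)$. Finally, for any divisor $C$ of degree $g-1$ the Riemann--Roch theorem gives $\ell(C)=\ell(K-C)$, so $C$ is non-special if and only if $K-C$ is; applying this to $C=\gcd(G,H)$ (already shown non-special of degree $g-1$) proves ${\rm lmd}(G,H)-D$ is non-special.

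With all hypotheses of Lemma~\ref{LCP_criterion} met, $(\mathcal{C}_{\mathcal{L}}(D,G),\mathcal{C}_{\mathcal{L}}(D,H))$ is an LCP. The dimensions are $\ell(G)=n-\lambda_0 s+1$ and $\ell(H)=\lambda_0 s-1$, and the bound $d\ge n-\deg$ gives $d(\mathcal{C}_{\mathcal{L}}(D,G))\ge \lambda_0 s-g$ and $d(\mathcal{C}_{\mathcal{L}}(D,H))\ge n-\lambda_0 s-g+2$, yielding the stated parameters. For the security parameter $\min\big(d(\mathcal{C}_{\mathcal{L}}(D,G)),\,d(\mathcal{C}_{\mathcal{L}}(D,H)^\perp)\big)$ I would invoke that for an LCP the dual $\mathcal{C}_{\mathcal{L}}(D,H)^\perp$ is equivalent to $\mathcal{C}_{\mathcal{L}}(D,G)$, so the two minimum distances coincide and the parameter is $d(\mathcal{C}_{\mathcal{L}}(D,G))$. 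The main obstacle is establishing $G+H\sim K+D$: pinning down the canonical divisor and bookkeeping the coefficients so that the leftover collapses to the single principal divisor $-(x-\alpha_u)$ is the delicate computation, and it is exactly what renders the otherwise intractable ${\rm lmd}(G,H)-D$ manageable.
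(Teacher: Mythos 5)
Your proposal is correct in substance and reaches the same conclusion, but it handles the key difficulty --- the non-speciality of ${\rm lmd}(G,H)-D$ --- by a genuinely different and arguably cleaner route. The paper attacks ${\rm lmd}(G,H)-D$ head-on: it proves the symmetry $s_t=s_{m-1-t}$ via Lemma~\ref{ceil_prop}, uses it to recognize the ``reversed'' divisor $\sum_t(m-1-t)\sum_l Q_{l+s_t'}+(m-1)(\cdots)$ as a second instance of the non-special degree-$g$ divisor of Theorem~\ref{non-special_divisors_cons}, and then shows ${\rm lmd}(G,H)-D$ is linearly equivalent to that divisor minus $Q_u$ using $(y^s)$ and $(h)=D-nQ_\infty$. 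You instead establish $G+H\sim K+D$ (which the paper also proves, but only later and only for the duality claim), deduce ${\rm lmd}(G,H)-D\sim K-\gcd(G,H)$ from $\gcd+{\rm lmd}=G+H$, and invoke the Riemann--Roch symmetry that a degree-$(g-1)$ divisor $C$ is non-special iff $K-C$ is. This recycles the single non-speciality fact already proved for $\gcd(G,H)$, makes the symmetry claim $s_t=s_{m-1-t}$ and the second application of Theorem~\ref{non-special_divisors_cons} unnecessary, and your bookkeeping ($K=(dx)=(m-1)\sum Q_i+(m-1)\sum Q_j'-(m+1)Q_\infty$, the leftover collapsing to $-(x-\alpha_u)$) checks out against the paper's own computation of $G+H$. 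What the paper's approach buys is an explicit non-special representative of ${\rm lmd}(G,H)-D$; what yours buys is economy, since $G+H\sim K+D$ has to be proved anyway for the security parameter.

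One small caution on your last step: the assertion that ``for an LCP the dual $\mathcal{C}_{\mathcal{L}}(D,H)^\perp$ is equivalent to $\mathcal{C}_{\mathcal{L}}(D,G)$'' is not a general fact about LCPs (it is a theorem only in special settings such as group codes). Here it must be derived, as the paper does, from Lemma~\ref{dual_AG Code} with $\eta=dh/h$, $h=\prod_i(x-a_i)$: one gets $\mathcal{C}_{\mathcal{L}}(D,H)^\perp=\mathcal{C}_{\mathcal{L}}(D,D-H+(\eta))$ and $D-H+(\eta)\sim G$ with support disjoint from $D$, whence the equivalence. Since you have already verified $G+H\sim D+(\eta)$ en route, this is a rephrasing rather than a gap, but the justification as written points to the wrong source.
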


Before we provide the proof of Theorem \ref{con_w=0_1}, we need to introduce a characteristic of the ceiling function. 

\begin{lemma}\label{ceil_prop}
    Let $\lambda,m\in\mathbb{N}$ and $1\le \lambda<m$. For $1\le k\le m-2$, 
    \[
        \bigg\lceil\frac{(k+1)\lambda}{m}\bigg\rceil-\bigg\lceil\frac{k\lambda}{m}\bigg\rceil=\begin{cases}
            1, &\text{when}\;\;k=\big\lfloor\frac{l m}{\lambda}\big\rfloor,\;1\le l\le\lambda-1,\\
            0, &\text{otherwise}.
        \end{cases}
    \]
\end{lemma}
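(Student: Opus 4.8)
The plan is to analyze how the integer sequence $c_k:=\bigl\lceil k\lambda/m\bigr\rceil$ increments as $k$ runs through $1,\dots,m-1$. First I would observe that since $1\le\lambda<m$ we have $0<\lambda/m<1$, so the two arguments $k\lambda/m$ and $(k+1)\lambda/m$ differ by strictly less than $1$. Consequently $c_{k+1}-c_k\in\{0,1\}$, and it equals $1$ exactly when the half-open interval $\bigl[\,k\lambda/m,\ (k+1)\lambda/m\,\bigr)$ contains an integer. This reformulation is exact because the number of integers in $[x,y)$ is $\lceil y\rceil-\lceil x\rceil$; and since the interval has length $\lambda/m<1$, it contains at most one integer.

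Second, I would pin down that integer. An integer $n$ lies in $\bigl[\,k\lambda/m,\ (k+1)\lambda/m\,\bigr)$ if and only if $k\lambda\le nm<(k+1)\lambda$, i.e. if and only if $k\le nm/\lambda<k+1$, which is precisely the condition $k=\lfloor nm/\lambda\rfloor$. Hence $c_{k+1}-c_k=1$ if and only if $k=\lfloor nm/\lambda\rfloor$ for some integer $n$, which already matches the claimed form upon setting $l=n$.

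Third, I would identify the admissible range of $l$ and verify a bijection. For $1\le k\le m-2$, any such $n$ satisfies $n\ge k\lambda/m\ge\lambda/m>0$ and $n<(k+1)\lambda/m\le(m-1)\lambda/m<\lambda$, forcing $1\le n\le\lambda-1$. Conversely, for each $l\in\{1,\dots,\lambda-1\}$ the index $k=\lfloor lm/\lambda\rfloor$ lies in $\{1,\dots,m-2\}$: the endpoints give $\lfloor m/\lambda\rfloor\ge1$ and, using Lemma~\ref{inter}(i), $\lfloor(\lambda-1)m/\lambda\rfloor=m-\lceil m/\lambda\rceil\le m-2$ since $m/\lambda>1$ makes $\lceil m/\lambda\rceil\ge2$; moreover distinct values of $l$ yield distinct $k$ because consecutive quantities $lm/\lambda$ differ by $m/\lambda>1$. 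Combining these gives exactly the stated dichotomy.

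I expect the main obstacle to be the careful bookkeeping at the boundaries: confirming that the counting identity $\lceil y\rceil-\lceil x\rceil$ handles the degenerate case where $k\lambda/m$ is itself an integer (that is, $m\mid k\lambda$, in which case the left endpoint is the integer in the interval and $k=\lfloor nm/\lambda\rfloor$ still holds), and checking that the extremal indices $k=\lfloor m/\lambda\rfloor$ and $k=\lfloor(\lambda-1)m/\lambda\rfloor$ genuinely fall inside $\{1,\dots,m-2\}$ rather than at $k=0$ or $k=m-1$. As a consistency check I would note the telescoping identity $\sum_{k=1}^{m-2}(c_{k+1}-c_k)=c_{m-1}-c_1=\lambda-1$, which equals the number of admissible values of $l$, confirming that the $\lambda-1$ jumps are accounted for exactly once each.
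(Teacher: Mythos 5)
Your proof is correct and rests on the same mechanism as the paper's: the jump $\lceil(k+1)\lambda/m\rceil-\lceil k\lambda/m\rceil=1$ occurs exactly when some integer $l$ satisfies $k\lambda/m\le l<(k+1)\lambda/m$, i.e.\ $k=\lfloor lm/\lambda\rfloor$ with $1\le l\le\lambda-1$. The paper establishes this by a three-way case analysis on the position of $k$ relative to the floors $\lfloor lm/\lambda\rfloor$, sandwiching both fractions between consecutive integers in each case, whereas you package the same inequalities uniformly via the identity that $[x,y)$ contains $\lceil y\rceil-\lceil x\rceil$ integers (at most one here, since $\lambda/m<1$); the substance, including the boundary verifications that $\lfloor m/\lambda\rfloor\ge1$ and $\lfloor(\lambda-1)m/\lambda\rfloor=m-\lceil m/\lambda\rceil\le m-2$, is essentially identical.
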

\begin{proof}
    For some $1\le l\le\lambda-1$, if $k=\big\lfloor\frac{l m}{\lambda}\big\rfloor$, then we have
    \[
        \frac{(l-1)m}{\lambda}<k\le\frac{l m}{\lambda}<k+1<\frac{(l+1)m}{\lambda}\Longrightarrow l-1<\frac{k\lambda}{m}\le l<\frac{(k+1)\lambda}{m}<l+1,
    \]
    which implies that $\big\lceil\frac{(k+1)\lambda}{m}\big\rceil-\big\lceil\frac{k\lambda}{m}\big\rceil=(l+1)-l=1$. On the other hand, if 
    \[
        \bigg\lfloor\frac{l m}{\lambda}\bigg\rfloor<k<k+1\le\bigg\lfloor\frac{(l+1)m}{\lambda}\bigg\rfloor,
    \]
    then we have
    \[
        \frac{l m}{\lambda}<\bigg\lfloor\frac{l m}{\lambda}\bigg\rfloor+1\le k<k+1\le\frac{(l+1)m}{\lambda}\Longrightarrow l<\frac{k\lambda}{m}<\frac{(k+1)\lambda}{m}\le l+1.
    \]
    Thus $\big\lceil\frac{(k+1)\lambda}{m}\big\rceil-\big\lceil\frac{k\lambda}{m}\big\rceil=0$. At last, if $\big\lfloor\frac{m}{\lambda}\big\rfloor\ge2$, then 
    \[
        1\le k<k+1\le\bigg\lfloor\frac{m}{\lambda}\bigg\rfloor\Longrightarrow \frac{\lambda}{m}\le \frac{k\lambda}{m}<\frac{(k+1)\lambda}{m}\le 1. 
    \]
    It follows that $\big\lceil\frac{(k+1)\lambda}{m}\big\rceil-\big\lceil\frac{k\lambda}{m}\big\rceil=0$. The proof is completed. 
\end{proof}

{
\raggedright
{\it Proof of Theorem \ref{con_w=0_1}.} We aim to prove our result by verifying the conditions presented in Lemma~\ref{LCP_criterion}. Firstly, we can easily obtain the following assertions:
}
    \begin{itemize}
        \item ${\rm supp}(D)\cap{\rm supp}(G)={\rm supp}(D)\cap{\rm supp}(H)=\varnothing$.
        \item By $m\le s <(n-g+2)/(u+\sum_{l=1}^v\lambda_l)$, we have $2g-2<\deg G,\deg H< n$. 
    \end{itemize}
    It follows that the divisors $D,G,H$ can form two AG Codes $\mathcal{C}_{\mathcal{L}}(D,G)$ and $\mathcal{C}_{\mathcal{L}}(D,H)$. Subsequently, we can derive that
    \begin{align*}
            \gcd(G,H)&=\sum_{t=1}^{m-1}t\sum_{l=1}^{s_t}Q_{l+s_t^\prime}-Q_\infty, \text{ and }\\
            {\rm lmd}(G,H)&=\sum_{t=1}^{m-1}(s+m-1-t)\sum_{l=1}^{s_t}Q_{l+s_t^\prime} + (s+m-1)\sum_{l=1}^{s_{m-1}-v}Q_{u+v-\lfloor\frac{\lambda_0}{m}\rfloor-1+l}\\ 
            &\quad +\sum_{l=1}^{v}(\lambda_l s+m-1)Q^\prime_l + (s-1)Q_u+(n-\lambda_0s)Q_\infty.
    \end{align*}
    Here we claim that $s_t=s_{m-1-t}$ for $1\le t\le m-2$. Indeed, we have
    \begin{equation}\label{s_t_w=0}
        s_t = \sum_{j=1}^v\bigg(\bigg\lceil\frac{t\lambda_j}{m}\bigg\rceil-\bigg\lceil\frac{(t+1)\lambda_j}{m}\bigg\rceil\bigg)+\bigg(\bigg\lfloor\frac{(t+1)\lambda_0}{m}\bigg\rfloor-\bigg\lfloor\frac{t\lambda_0}{m}\bigg\rfloor\bigg).
    \end{equation}
    Since $m-1=\big\lfloor\frac{l m}{\lambda}\big\rfloor+\big\lfloor\frac{(\lambda-l) m}{\lambda}\big\rfloor$ for all $(\lambda,m)=1\,(\lambda\neq 1)$, $1\le l\le \lambda-1$ and Lemma \ref{ceil_prop}, we have
    \begin{gather*}
        \bigg\lceil\frac{t\lambda_j}{m}\bigg\rceil-\bigg\lceil\frac{(t+1)\lambda_j}{m}\bigg\rceil = \bigg\lceil\frac{(m-1-t)\lambda_j}{m}\bigg\rceil-\bigg\lceil\frac{(m-1-t+1)\lambda_j}{m}\bigg\rceil,\\
        \text{ and }\bigg\lfloor\frac{(t+1)\lambda_0}{m}\bigg\rfloor-\bigg\lfloor\frac{t\lambda_0}{m}\bigg\rfloor = \bigg\lfloor\frac{(m-1-t+1)\lambda_0}{m}\bigg\rfloor-\bigg\lfloor\frac{(m-1-t)\lambda_0}{m}\bigg\rfloor.
    \end{gather*}
    Therefore, the claim is correct. It follows that 
    \begin{align*}
        &\sum_{t=1}^{m-1}(m-1-t)\sum_{l=1}^{s_t}Q_{l+s_t^\prime} + (m-1)\sum_{l=1}^{s_{m-1}-v}Q_{u+v-\lfloor\frac{\lambda_0}{m}\rfloor-1+l} + \sum_{l=1}^{v}(m-1)Q^\prime_l \\
        &\quad = \sum_{t=1}^{m-2}(m-1-t)\sum_{l=1}^{s_{m-1-t}}Q_{l+s_t^\prime} + (m-1)\sum_{l=1}^{s_{m-1}-v}Q_{u+v-\lfloor\frac{\lambda_0}{m}\rfloor-1+l} + \sum_{l=1}^{v}(m-1)Q^\prime_l, 
    \end{align*}
    is a non-special divisor of degree $g$ by Theorem \ref{non-special_divisors_cons}. Then combining Riemann-Roch Theorem and Theorem~\ref{non-special_divisors_cons} yields
        \begin{align*}
            \ell(G)+\ell(H)&=\deg G + \deg H +2-2g\\
            &=g+(n-\lambda_0s) + g + us + \sum_{l=1}^v\lambda_l s - 2 +2 -2g \\
            &= n .
        \end{align*}
    
    In the following, we will prove that $\gcd(G,H)$ and ${\rm lmd}(G,H)-D$ are non-special. By Theorem \ref{non-special_divisors_cons} and Lemma~\ref{non_special_g-1}, we know that $\gcd(G,H)$ is a non-special divisor of degree $g-1$. Since 
    \[
        (y^s) = s\bigg(\sum_{i=1}^uQ_i+\sum_{j=1}^v\lambda_jQ_j^\prime\bigg)-s\lambda_0Q_\infty\,\text{ and }\,(h)=D-nQ_\infty,
    \]
    where $h=\prod_{i=1}^\tau(x-a_i)$, we derive that
    \begin{align*}
        {\rm lmd}(G,H)-D & = {\rm lmd}(G,H)-(h)-nQ_\infty\\
        &\sim \sum_{t=1}^{m-1}(s+m-1-t)\sum_{l=1}^{s_t}Q_{l+s_t^\prime} + (s+m-1)\sum_{l=1}^{s_{m-1}-v}Q_{u+v-\lfloor\frac{\lambda_0}{m}\rfloor-1+l}\\ 
        &\quad +\sum_{l=1}^{v}(\lambda_l s+m-1)Q^\prime_l + (s-1)Q_u-\lambda_0sQ_\infty\\
        & = \sum_{t=1}^{m-2}(m-1-t)\sum_{l=1}^{s_{m-1-t}}Q_{l+s_t^\prime} + (m-1)\sum_{l=1}^{s_{m-1}-v}Q_{u+v-\lfloor\frac{\lambda_0}{m}\rfloor-1+l}\\ 
        &\quad +\sum_{l=1}^{v}(m-1)Q^\prime_l -Q_u + (y^s)\\
        & \sim \sum_{t=1}^{m-2}(m-1-t)\sum_{l=1}^{s_{m-1-t}}Q_{l+s_t^\prime} + (m-1)\sum_{l=1}^{s_{m-1}-v}Q_{u+v-\lfloor\frac{\lambda_0}{m}\rfloor-1+l}\\ 
        &\quad +\sum_{l=1}^{v}(m-1)Q^\prime_l -Q_u. 
    \end{align*}
    By deduction above and Lemma \ref{non_special_g-1}, we know that ${\rm lmd}(G,H)-D$ is also non-special. Hence, by Lemma~\ref{LCP_criterion}, we conclude that $(\mathcal{C}_{\mathcal{L}}(D,G),\mathcal{C}_{\mathcal{L}}(D,H))$ is an LCP of AG Codes. 

    Now we prove that $\mathcal{C}_{\mathcal{L}}(D,H)^{\bot}\sim \mathcal{C}_{\mathcal{L}}(D,G)$. By Lemma~\ref{dual_AG Code}, we have
    \[
        \mathcal{C}_{\mathcal{L}}(D,G)^{\bot}=\mathcal{C}_{\mathcal{L}}(D,D-G+(\eta))=\mathcal{C}_{\mathcal{L}}(D,D-G+(dh)-(h)),
    \]
    where $\eta = dh/h$ and $h=\prod_{i=1}^\tau(x-a_i)$. From \cite[Chap. 3 and 4]{Algebraic_Function_Fields_and_Codes}, the different of $F/\mathbb{F}_q(x)$ is given by
    \[
        {\rm Diff}(F/\mathbb{F}_q(x)) = \sum_{i=1}^u(m-1)Q_i+\sum_{j=1}^v(m-1)Q_j^\prime+(m-1)Q_\infty,
    \]
    and
    \begin{align*}
        (\eta) &= (dh) - (h) =  - (h) + \bigg(\frac{dh}{dx}\bigg) + (dx)=  - D + n Q_\infty  + \bigg(\frac{dh}{dx}\bigg) -2(x)_\infty + {\rm Diff}(F/\mathbb{F}_q(x))\\
        & =  - D + n Q_\infty  + \bigg(\frac{dh}{dx}\bigg) - 2m Q_\infty +\sum_{i=1}^u(m-1)Q_i+\sum_{j=1}^v(m-1)Q_j^\prime+(m-1)Q_\infty\\
        & = - D + \bigg(\frac{dh}{dx}\bigg) + \sum_{i=1}^u(m-1)Q_i+\sum_{j=1}^v(m-1)Q_j^\prime+(n-m-1)Q_\infty.
    \end{align*}
    It follows that
    \begin{align*}
        G+H & = \sum_{i=1}^{u-1}(m-1)Q_i + \sum_{j=1}^v(m-1)Q_j^\prime - Q_u + (n-1)Q_\infty + (y^s)\\
        & =  \sum_{i=1}^{u}(m-1)Q_i + \sum_{j=1}^v(m-1)Q_j^\prime + (n-m-1)Q_\infty + \bigg(\frac{y^s}{x-\alpha_u}\bigg) \\
        & \sim D + (\eta). 
    \end{align*}
    Hence, 
    \[
        \mathcal{C}_{\mathcal{L}}(D,G)^{\bot}=\mathcal{C}_{\mathcal{L}}(D,D-G+(\eta))\sim \mathcal{C}_{\mathcal{L}}(D,H)\;\text{ i.e. }\;\mathcal{C}_{\mathcal{L}}(D,H)^{\bot}\sim \mathcal{C}_{\mathcal{L}}(D,G).
    \]
    Therefore, the security parameter of $(\mathcal{C}_{\mathcal{L}}(D,G),\mathcal{C}_{\mathcal{L}}(D,H))$ is $d(\mathcal{C}_{\mathcal{L}}(D,G))$. The proof is completed.   \qed

\begin{theorem}\label{con_w=0_2}
    Let $F/\mathbb{F}_q(x)$ be the Kummer extension defined by Equation (\ref{kum_diff}) with $w=0$. Suppose that $a_1,\cdots,a_\tau$ are elements of $\mathbb{F}_q$ such that $P_{a_i}$ is a rational place in $\mathbb{P}_{\mathbb{F}_q(x)}$ splitting completely in the extension $F/\mathbb{F}_q(x)$, denoted by $P_{a_ib_j}$ for $1\le j \le m$. Let
    \[
        D = \sum_{i=1}^\tau \sum_{j=1}^m P_{a_ib_j}
    \]
    and $n=\deg D=\tau m$. Keep the notations and assumptions in Theorem \ref{non-special_divisors_cons}. Let $s_0=0$ and $s_t^\prime=\sum_{i=0}^{t-1}s_i$ for $1\le t\le m-1$. For $m\le s <(n-g+1)/(\lambda_0-1)$, we define the divisors
    \begin{align*}
        & G = \sum_{t=1}^{m-1}t\sum_{l=1}^{s_t}Q_{l+s_t^\prime} + sQ_u +  (n-\lambda_0s-1)Q_\infty,\text{ and }\\
        & H = \sum_{t=1}^{m-1}(s+m-1-t)\sum_{l=1}^{s_t}Q_{l+s_t^\prime} + (s+m-1)\sum_{l=1}^{s_{m-1}-v}Q_{u+v-\lfloor\frac{\lambda_0}{m}\rfloor-1+l}\\
        &\qquad + \sum_{l=1}^{v}(\lambda_l s+m-1)Q^\prime_l -Q_u. 
    \end{align*}
    Then the pair $(\mathcal{C}_{\mathcal{L}}(D,G),\mathcal{C}_{\mathcal{L}}(D,H))$ is an LCP of AG Codes with parameters
    \begin{gather*}
            [n,n-(\lambda_0-1)s,\ge (\lambda_0-1)s-g+1]_q\text{ and }[n,(\lambda_0-1)s,\ge n-(\lambda_0-1)s-g+1]_q
    \end{gather*}
    respectively and the security parameter is $d(\mathcal{C}_{\mathcal{L}}(D,G))$. 
\end{theorem}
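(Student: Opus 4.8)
The plan is to follow the same strategy as the proof of Theorem~\ref{con_w=0_1}, namely to verify the two sufficient conditions of Lemma~\ref{LCP_criterion}: that $\gcd(G,H)$ is a non-special divisor of degree $g-1$ and that ${\rm lmd}(G,H)-D$ is non-special, together with the dimension equality $\ell(G)+\ell(H)=n$. First I would record the routine facts that ${\rm supp}(G)$ and ${\rm supp}(H)$ are disjoint from ${\rm supp}(D)=\{P_{a_ib_j}\}$ and that the hypothesis $m\le s<(n-g+1)/(\lambda_0-1)$ forces $2g-2<\deg G,\deg H<n$; indeed $\deg G=n+g-(\lambda_0-1)s-1$ and $\deg H=g+(\lambda_0-1)s-1$, and the two inequalities on $s$ translate exactly into these degree bounds (the lower bound $s\ge m$ handling $\deg G<n$ and $\deg H>2g-2$, the upper bound handling the complementary inequalities). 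Thus $\mathcal{C}_{\mathcal{L}}(D,G)$ and $\mathcal{C}_{\mathcal{L}}(D,H)$ are well-defined AG Codes.

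Next I would compute the two combined divisors coefficient by coefficient, using that the range of $s$ keeps all coefficients in the expected ranges (in particular $v_{Q_\infty}(G)=n-\lambda_0s-1\ge0$). Since $s\ge m$, on every common place $Q_{l+s_t^\prime}$ one has $t<s+m-1-t$, so there the minimum is $t$ and the maximum is $s+m-1-t$; at $Q_u$ one has $v_{Q_u}(G)=s\ge0$ and $v_{Q_u}(H)=-1$; and $Q_\infty$, nonnegative in $G$ and absent from $H$, contributes $0$ to the gcd. This yields
\[
    \gcd(G,H)=\sum_{t=1}^{m-1}t\sum_{l=1}^{s_t}Q_{l+s_t^\prime}-Q_u,
\]
whose positive part is precisely the divisor $A$ produced by Theorem~\ref{non-special_divisors_cons} (taking all $P_{l^\prime}\in\{Q_i\}$), hence non-special of degree $g$. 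Because ${\rm supp}(A)\subseteq\{Q_1,\dots,Q_{\ell_1-1}\}$ and $\ell_1-1=u+v-\lfloor\lambda_0/m\rfloor-1<u$, the place $Q_u$ lies outside ${\rm supp}(A)$, so Lemma~\ref{non_special_g-1} shows $\gcd(G,H)=A-Q_u$ is non-special of degree $g-1$. The essential difference from Theorem~\ref{con_w=0_1} is that here the point removed from $A$ is $Q_u$ rather than $Q_\infty$.

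The main work is to show ${\rm lmd}(G,H)-D$ is non-special. Writing $h=\prod_{i=1}^\tau(x-a_i)$ so that $(h)=D-nQ_\infty$ and hence $D\sim nQ_\infty$, and using $(y^s)=s(\sum_{i=1}^uQ_i+\sum_{j=1}^v\lambda_jQ_j^\prime)-s\lambda_0Q_\infty$, I would subtract $nQ_\infty$ and then $(y^s)$ from ${\rm lmd}(G,H)$. The coefficient of each $Q_i$ drops by $s$, each $Q_j^\prime$ drops by $s\lambda_j$ to $m-1$, the $sQ_u$ term cancels entirely, and the coefficient of $Q_\infty$ becomes $-1$. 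Invoking the identity $s_t=s_{m-1-t}$ (which holds by Lemma~\ref{ceil_prop} exactly as in Theorem~\ref{con_w=0_1}), one obtains
\[
    {\rm lmd}(G,H)-D\sim B-Q_\infty,\quad B=\sum_{t=1}^{m-2}(m-1-t)\sum_{l=1}^{s_{m-1-t}}Q_{l+s_t^\prime}+(m-1)\sum_{l=1}^{s_{m-1}-v}Q_{u+v-\lfloor\frac{\lambda_0}{m}\rfloor-1+l}+\sum_{l=1}^{v}(m-1)Q_l^\prime,
\]
which is the same ``reversed'' divisor appearing in Theorem~\ref{con_w=0_1} and is non-special of degree $g$ by Theorem~\ref{non-special_divisors_cons}. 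Since $B$ is supported only on the $Q_i$ and $Q_j^\prime$, the rational place $Q_\infty\notin{\rm supp}(B)$, and Lemma~\ref{non_special_g-1} again gives that $B-Q_\infty$, hence ${\rm lmd}(G,H)-D$, is non-special. Here the point removed is $Q_\infty$ rather than the $Q_u$ of Theorem~\ref{con_w=0_1}; keeping track of this swap and of the exact cancellation at $Q_u$ is the most delicate bookkeeping in the argument, and is where I expect the main risk of error.

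Finally I would close the proof exactly as in Theorem~\ref{con_w=0_1}. The Riemann--Roch theorem gives $\ell(G)+\ell(H)=\deg G+\deg H+2-2g=n$, so Lemma~\ref{LCP_criterion} applies and $(\mathcal{C}_{\mathcal{L}}(D,G),\mathcal{C}_{\mathcal{L}}(D,H))$ is an LCP; reading off the dimensions $\ell(G)=n-(\lambda_0-1)s$ and $\ell(H)=(\lambda_0-1)s$ together with the bounds $d\ge n-\deg G$ and $d\ge n-\deg H$ produces the stated parameters. For the security parameter, I note that $G+H$ coincides with the $G+H$ of Theorem~\ref{con_w=0_1} (the modifications $G\mapsto G+sQ_u-Q_\infty$ and $H\mapsto H-sQ_u+Q_\infty$ cancel in the sum), so the very same computation $G+H\sim D+(\eta)$ with $\eta=dh/h$ shows $\mathcal{C}_{\mathcal{L}}(D,G)^{\bot}\sim\mathcal{C}_{\mathcal{L}}(D,H)$, i.e. $\mathcal{C}_{\mathcal{L}}(D,H)^{\bot}\sim\mathcal{C}_{\mathcal{L}}(D,G)$, and hence the security parameter equals $d(\mathcal{C}_{\mathcal{L}}(D,G))$.
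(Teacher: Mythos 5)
Your proposal is correct and follows essentially the same route as the paper: the paper's proof of this theorem simply records the same $\gcd(G,H)=\sum_{t=1}^{m-1}t\sum_{l=1}^{s_t}Q_{l+s_t^\prime}-Q_u$ and ${\rm lmd}(G,H)-D\sim B-Q_\infty$ that you derive, and declares the rest "completely similar" to Theorem~\ref{con_w=0_1}. Your observation that $G+H$ is unchanged from Theorem~\ref{con_w=0_1} (so the duality computation carries over verbatim) is exactly the right way to make that similarity precise.
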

\begin{proof}
    It should be noted that we merely alter the positions of certain rational places, and we can derive that 
    \[
        \gcd (G,H) = \sum_{t=1}^{m-1}t\sum_{l=1}^{s_t}Q_{l+s_t^\prime} - Q_u,
    \]
    and 
    \[
        {\rm lmd}(G,H)-D \sim \sum_{t=1}^{m-2}(m-1-t)\sum_{l=1}^{s_{m-1-t}}Q_{l+s_t^\prime} + (m-1)\bigg(\sum_{l=1}^{s_{m-1}-v}Q_{u+v-\lfloor\frac{\lambda_0}{m}\rfloor-1+l}+\sum_{l=1}^{v}Q^\prime_l\bigg) -Q_\infty.
    \]
    The rest of the proof is completely similar to the proof of Theorem \ref{con_w=0_1}. 
\end{proof}

\begin{remark}\label{con_w=0_rem}
    {\rm (1)} By Remark \ref{non-special_divisors_cons_rem}(1), we can also construct LCPs of AG Codes when $v=\big\lfloor\frac{\lambda_0}{m}\big\rfloor+1$ in Theorems~\ref{con_w=0_1} and \ref{con_w=0_2}. In this case, 
    \begin{itemize}
        \item if we define the divisors
        \begin{align*}
        & G = \sum_{t=1}^{m-1}t\sum_{l=1}^{s_t}Q_{l+s_t^\prime} + (n-\lambda_0s)Q_\infty,\;\text{ and }\\
        & H = \sum_{t=1}^{m-1}(s+m-1-t)\sum_{l=1}^{s_t}Q_{l+s_t^\prime} + \sum_{l=1}^{s_{m-1}}(\lambda_{l} s+m-1)Q^\prime_{l} + (\lambda_{\lfloor\frac{\lambda_0}{m}\rfloor+1}s-1)Q^\prime_{\lfloor\frac{\lambda_0}{m}\rfloor+1} - Q_\infty. 
    \end{align*}
    in Theorem \ref{con_w=0_1},
    \item and we define the divisors
    \begin{align*}
        & G = \sum_{t=1}^{m-1}t\sum_{l=1}^{s_t}Q_{l+s_t^\prime} + \lambda_{\lfloor\frac{\lambda_0}{m}\rfloor+1}sQ^\prime_{\lfloor\frac{\lambda_0}{m}\rfloor+1} +  (n-\lambda_0s-1)Q_\infty,\text{ and }\\
        & H = \sum_{t=1}^{m-1}(s+m-1-t)\sum_{l=1}^{s_t}Q_{l+s_t^\prime}+ \sum_{l=1}^{s_{m-1}}(\lambda_{l} s+m-1)Q^\prime_{l} -Q^\prime_{\lfloor\frac{\lambda_0}{m}\rfloor+1}. 
    \end{align*}
    in Theorem \ref{con_w=0_2},
    \end{itemize}
    then we can also derive certain specific LCPs of AG Codes. 

    {\rm (2)} Theorems \ref{con_w=0_1} and \ref{con_w=0_2} align with the results presented in \cite{Linear_Complementary_Dual_codes_and_Linear_Complementary_Pairs_of_AG_codes_in_function_fields} for the case where $v=0$ and $m>\lambda_0$. It should be noted that our results can be applied to the Kummer extensions with $m<\lambda_0$. 
\end{remark}

\subsection{General Construction via Kummer Extensions with $v=0$}

In the following, we provide a general construction of LCPs of AG Codes via Kummer extensions $F/\mathbb{F}_q(x)$ defined by Equation (\ref{kum_diff}) with $v=0$.

\begin{theorem}\label{con_v=0}
    Let $F/\mathbb{F}_q(x)$ be the Kummer extension defined by Equation (\ref{kum_diff}) with $v=0$ and $\lambda_k^\prime\mid m$ for all $1\le k\le w$. Suppose that $a_1,\cdots,a_\tau$ are elements of $\mathbb{F}_q$ such that $P_{a_i}$ is a rational place in $\mathbb{P}_{\mathbb{F}_q(x)}$ splitting completely in the extension $F/\mathbb{F}_q(x)$, denoted by $P_{a_ib_j}$ for $1\le j \le m$. Let
    \[
        D = \sum_{i=1}^\tau \sum_{j=1}^m P_{a_ib_j}
    \]
    and $n=\deg D=\tau m$. Keep the notations and assumptions in Theorem \ref{non-special_divisors_cons}. Let $s_0=0$, $s_t^\prime=\sum_{i=0}^{t-1}s_i$ for $1\le t\le m-1$ and 
    \[
        R_k = \sum_{Q\in\mathbb{P}_F,Q\mid P_{\gamma_k}} Q\; \text{ for all }1\le k\le w,
    \]
    where $P_{\gamma_k}$ is the rational place corresponding to $x-\gamma_k$ in $\mathbb{P}_{\mathbb{F}_q(x)}$. For $m\le s <(n-g+1)/\lambda_0$ and $\lambda_0 \equiv 1 \bmod m$, we define the divisors
    \begin{align*}
        G & = \sum_{t=1}^{m-1}t\sum_{l=1}^{s_t}Q_{l+s_t^\prime} + (n-\lambda_0s-1)Q_\infty\\
        H & = \sum_{t=1}^{m-1}(m-t)\sum_{l=1}^{s_t}Q_{l+s_t^\prime} +  s\bigg( \sum_{i=1}^uQ_i + \sum_{k=1}^wR_k \bigg)-Q_\infty. 
    \end{align*}
    Then the pair $(\mathcal{C}_{\mathcal{L}}(D,G),\mathcal{C}_{\mathcal{L}}(D,H))$ is an LCP of AG Codes with parameters
    \begin{gather*}
            [n,n-\lambda_0s,\ge \lambda_0s-g+1]_q\text{ and }[n,\lambda_0s,\ge n-\lambda_0s-g+1]_q
    \end{gather*}
    respectively and the security parameter is $d(\mathcal{C}_{\mathcal{L}}(D,G))$. 
\end{theorem}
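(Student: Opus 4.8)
The plan is to verify the two conditions of Lemma~\ref{LCP_criterion}, following the template already established in the proof of Theorem~\ref{con_w=0_1}. First I would check the routine prerequisites: that ${\rm supp}(D)$ is disjoint from ${\rm supp}(G)$ and ${\rm supp}(H)$ (the places $Q_{l+s_t^\prime}$, $Q_i$, $Q_\infty$, and the $R_k$ lie over $\alpha_i$, $\gamma_k$, or infinity, none of which equal the split places $a_i$), and that the degree bounds $2g-2<\deg G,\deg H<n$ hold under the constraint $m\le s<(n-g+1)/\lambda_0$. Here one uses $\deg R_k=\sum_{Q\mid P_{\gamma_k}}\deg Q = m/(\lambda_k^\prime,m)\cdot(\lambda_k^\prime,m)=m$ since $\lambda_k^\prime\mid m$ means $P_{\gamma_k}$ is totally ramified or splits into $(\lambda_k^\prime,m)$ places each of appropriate degree, so $\deg(sR_k)=sm$; this makes $\deg H$ and $\deg G$ computable explicitly.

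Next I would compute $\gcd(G,H)$ and ${\rm lmd}(G,H)$ place by place. Since $G$ and $H$ share the places $Q_{l+s_t^\prime}$ with coefficients $t$ and $m-t$ respectively, and $G$ contributes to $Q_\infty$ while $H$ contributes to $Q_i$, $R_k$, and subtracts $Q_\infty$, I expect $\gcd(G,H)=\sum_{t=1}^{m-1}\min\{t,m-t\}\sum_{l=1}^{s_t}Q_{l+s_t^\prime}-Q_\infty$. The key structural fact, analogous to the claim $s_t=s_{m-1-t}$ in Theorem~\ref{con_w=0_1}, is that with $\lambda_0\equiv 1\bmod m$ and $v=0$ one has $s_t=u\cdot(\text{something symmetric})$; in fact when $v=0$ the formula $S_t=u+\sum_k\lceil t\lambda_k^\prime/m\rceil-\lfloor t\lambda_0/m\rfloor$ simplifies because $\lambda_k^\prime\mid m$ forces $\lceil t\lambda_k^\prime/m\rceil$ to behave linearly, and $\lambda_0\equiv1$ forces $\lfloor t\lambda_0/m\rfloor=0$ for $1\le t\le m-1$. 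I would then show that the effective part of $\gcd(G,H)$, after shifting by $Q_\infty$, is exactly a divisor of the form produced by Theorem~\ref{non-special_divisors_cons}, so that Theorem~\ref{non-special_divisors_cons} together with Lemma~\ref{non_special_g-1} certifies $\gcd(G,H)$ as non-special of degree $g-1$.

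For the second condition I would use the principal divisor $(y^s)=s\big(\sum_{i=1}^u Q_i+\sum_{k=1}^w R_k\big)-s\lambda_0 Q_\infty$ (valid because $v=0$ and each $R_k$ collects the full fiber over $\gamma_k$) together with $(h)=D-nQ_\infty$ where $h=\prod_{i=1}^\tau(x-a_i)$, to rewrite ${\rm lmd}(G,H)-D$ up to linear equivalence as an effective divisor minus a single rational place. Subtracting off $(y^s)$ cancels the $s\sum Q_i+s\sum R_k$ terms and converts the $Q_\infty$-coefficient appropriately, landing on a divisor of the shape $\sum_{t=1}^{m-2}(m-t)\sum Q_{l+s_t^\prime}+\cdots-P$ for some rational place $P$ outside its support; invoking the symmetry $s_t=s_{m-t}$ and Theorem~\ref{non-special_divisors_cons} shows the effective part is non-special of degree $g$, and then Lemma~\ref{non_special_g-1} gives that ${\rm lmd}(G,H)-D$ is non-special. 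The dimension count $\ell(G)+\ell(H)=n$ follows from Riemann--Roch and the two degree computations exactly as before. I anticipate the main obstacle to be the bookkeeping in the $\gcd$/${\rm lmd}$ decomposition: establishing the correct symmetry relation among the $s_t$ in the $v=0$, $\lambda_k^\prime\mid m$, $\lambda_0\equiv1\bmod m$ regime, and confirming that the residual divisor after subtracting $(y^s)$ and $(h)$ matches the precise effective form required by Theorem~\ref{non-special_divisors_cons} rather than merely having the right degree. The dual-code equivalence $\mathcal{C}_{\mathcal{L}}(D,H)^\perp\sim\mathcal{C}_{\mathcal{L}}(D,G)$, needed to identify the security parameter as $d(\mathcal{C}_{\mathcal{L}}(D,G))$, should then follow from Lemma~\ref{dual_AG Code} by verifying $G+H\sim D+(\eta)$ with $\eta=dh/h$, using the different ${\rm Diff}(F/\mathbb{F}_q(x))$ which in the $v=0$ case also receives ramification contributions from the $R_k$.
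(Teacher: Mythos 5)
Your overall strategy is the same as the paper's (verify the hypotheses of Lemma~\ref{LCP_criterion}, certify the two non-special divisors via Theorem~\ref{non-special_divisors_cons} and Lemma~\ref{non_special_g-1}, use $(y^s)$ and $(h)$ for the linear equivalences, and finish with the different and Lemma~\ref{dual_AG Code}), but two of your computations are wrong in ways that would break the argument if carried through. First, $\deg R_k\neq m$. Since $\lambda_k'\mid m$, the place $P_{\gamma_k}$ has ramification index $e=m/\lambda_k'$ in $F/\mathbb{F}_q(x)$, and the fundamental identity $\sum_{Q\mid P_{\gamma_k}}e(Q|P)f(Q|P)=m$ gives $\sum_{Q\mid P_{\gamma_k}}f(Q|P)=\lambda_k'$; hence $\deg R_k=\lambda_k'$ (this is what \cite[Prop. 3.7.3]{Algebraic_Function_Fields_and_Codes} yields, and what the paper uses). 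You have computed $(\text{ramification index})\times(\text{number of places})$, which is $\sum e_if_i$, not $\deg R_k$. With your value $\deg(sR_k)=sm$ the dimension count fails: $\ell(G)+\ell(H)=\deg G+\deg H+2-2g$ would come out to $n+s\sum_k(m-\lambda_k')>n$, so the pair could not be an LCP, and the claimed dimension $\lambda_0 s$ of $\mathcal{C}_{\mathcal{L}}(D,H)$ would not be reproduced. With the correct $\deg R_k=\lambda_k'$ one gets $\deg H=g+\lambda_0 s-1$ and everything balances.

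Second, your formula $\gcd(G,H)=\sum_{t}\min\{t,m-t\}\sum_{l}Q_{l+s_t^\prime}-Q_\infty$ is incorrect, because the places $Q_{l+s_t^\prime}$ are among $Q_1,\dots,Q_u$ and therefore also receive the contribution $s$ from the term $s\sum_{i=1}^uQ_i$ in $H$. The coefficient of $Q_{l+s_t^\prime}$ in $H$ is $m-t+s\ge m>t$, so the minimum is always $t$ and $\gcd(G,H)=\sum_{t=1}^{m-1}t\sum_{l=1}^{s_t}Q_{l+s_t^\prime}-Q_\infty$, i.e.\ exactly the divisor $A$ of Theorem~\ref{non-special_divisors_cons} minus $Q_\infty$ (no symmetry needed here). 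Your $\min\{t,m-t\}$ divisor has degree strictly less than $g-1$ in general and is not of the form certified by Theorem~\ref{non-special_divisors_cons}, so the first hypothesis of Lemma~\ref{LCP_criterion} could not be verified from it. A smaller slip: $\lambda_0\equiv1\bmod m$ does not force $\lfloor t\lambda_0/m\rfloor=0$; it gives $\lfloor t\lambda_0/m\rfloor=t\lfloor\lambda_0/m\rfloor$, and what the hypothesis actually buys is that the increments $\lfloor(t+1)\lambda_0/m\rfloor-\lfloor t\lambda_0/m\rfloor=\lfloor\lambda_0/m\rfloor$ are constant in $t$, which combined with Lemma~\ref{ceil_prop} yields the symmetry $s_t=s_{m-t}$ needed to recognize $\sum_t(m-t)\sum_{l=1}^{s_t}Q_{l+s_t^\prime}$ as a divisor of the form $A$ for the ${\rm lmd}$ condition. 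The rest of your outline (the $(y^s)$ cancellation, Lemma~\ref{non_special_g-1}, and the duality via $G+H\sim D+(\eta)$ with the different contributing $(m/\lambda_k'-1)R_k$) matches the paper once these corrections are made.
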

\begin{proof}
    We aim to prove our result by verifying the conditions presented in Lemma~\ref{LCP_criterion}. Firstly, we can easily obtain the following assertions:
    \begin{itemize}
        \item By \cite[Prop. 3.7.3]{Algebraic_Function_Fields_and_Codes}, we know that $\deg R_k = \lambda_k^\prime$ for all $1\le k\le w$. Therefore, by $m\le s <(n-g+1)/\lambda_0$, we have $2g-2<\deg G,\deg H<n$.
        \item ${\rm supp}(D)\cap{\rm supp}(G)={\rm supp}(D)\cap{\rm supp}(H)=\varnothing$. 
    \end{itemize}
    It follows that the divisors $D,G,H$ can form two AG Codes $\mathcal{C}_{\mathcal{L}}(D,G)$ and $\mathcal{C}_{\mathcal{L}}(D,H)$. Subsequently, we can derive that
    \begin{align*}
            \gcd(G,H)&=\sum_{t=1}^{m-1}t\sum_{l=1}^{s_t}Q_{l+s_t^\prime}-Q_\infty, \text{ and }\\
            {\rm lmd}(G,H)&=\sum_{t=1}^{m-1}(m-t)\sum_{l=1}^{s_t}Q_{l+s_t^\prime} +  s\bigg( \sum_{i=1}^uQ_i + \sum_{k=1}^wR_k \bigg) + (n-\lambda_0s-1)Q_\infty.
    \end{align*}
    Here we claim that $s_t=s_{m-t}$ for $1\le t\le m-1$. Indeed, we have
    \begin{equation}\label{s_t_v=0}
        s_t = \sum_{k=1}^w\bigg(\bigg\lceil\frac{t\lambda_k^\prime}{m}\bigg\rceil-\bigg\lceil\frac{(t+1)\lambda_k^\prime}{m}\bigg\rceil\bigg)+\bigg(\bigg\lfloor\frac{(t+1)\lambda_0}{m}\bigg\rfloor-\bigg\lfloor\frac{t\lambda_0}{m}\bigg\rfloor\bigg).
    \end{equation}
    Since $m = \frac{l m}{\lambda}+\frac{(\lambda-l) m}{\lambda}$ for all $\lambda\mid m$ and $1\le l\le \lambda-1$ and Lemma \ref{ceil_prop}, we have
    \[
        \bigg\lceil\frac{t\lambda_k^\prime}{m}\bigg\rceil-\bigg\lceil\frac{(t+1)\lambda_k^\prime}{m}\bigg\rceil = \bigg\lceil\frac{(m-t)\lambda_k^\prime}{m}\bigg\rceil-\bigg\lceil\frac{(m-t+1)\lambda_k^\prime}{m}\bigg\rceil,
    \]
    and
    \[
        \bigg\lfloor\frac{(t+1)\lambda_0}{m}\bigg\rfloor-\bigg\lfloor\frac{t\lambda_0}{m}\bigg\rfloor = \bigg\lfloor\frac{\lambda_0}{m}\bigg\rfloor ,\;\text{ for all }1\le t\le m-1. 
    \]
    Therefore, the claim is correct. It follows that
    \[
        \sum_{t=1}^{m-1}(m-t)\sum_{l=1}^{s_t}Q_{l+s_t^\prime} = \sum_{t=1}^{m-1}(m-t)\sum_{l=1}^{s_{m-t}}Q_{l+s_t^\prime}, 
    \]
    is a non-special divisor of degree $g$ by Theorem \ref{non-special_divisors_cons}. Then by Riemann-Roch Theorem and Theorem \ref{non-special_divisors_cons}, we obtain
    \begin{align*}
        \ell(G) + \ell (H) & = \deg G + \deg H +2- 2g \\
        & = g+ (n-\lambda_0s-1) + g + s\bigg(u + \sum_{k=1}^w\lambda_k^\prime\bigg) -1 +2 -2g\\
        & = n . 
    \end{align*}
    Additionally, since $(y^s) = s\big(\sum_{i=1}^uQ_i+\sum_{k=1}^wR_k\big)-s\lambda_0Q_\infty$ and $(h)=D-nQ_\infty$, we have 
    \begin{align*}
        {\rm lmd}(G,H)- D &= {\rm lmd}(G,H)-(h)-nQ_\infty\\
        &\sim \sum_{t=1}^{m-1}(m-t)\sum_{l=1}^{s_t}Q_{l+s_t^\prime} +  s\bigg( \sum_{i=1}^uQ_i + \sum_{k=1}^wR_k \bigg) - (\lambda_0s+1)Q_\infty \\
        &= \sum_{t=1}^{m-1}(m-t)\sum_{l=1}^{s_{m-t}}Q_{l+s_t^\prime} - Q_\infty + (y^s) \\
        & \sim \sum_{t=1}^{m-1}(m-t)\sum_{l=1}^{s_{m-t}}Q_{l+s_t^\prime} - Q_\infty. 
    \end{align*}
    By Theorem \ref{non-special_divisors_cons} and Lemma \ref{non_special_g-1}, we know that ${\rm lmd}(G,H)-D$ and $\gcd (G,H)$ are non-special divisors of degree $g-1$. Hence, we conclude that $(\mathcal{C}_{\mathcal{L}}(D,G),\mathcal{C}_{\mathcal{L}}(D,H))$ is an LCP of AG Codes by Lemma \ref{LCP_criterion}. 

    Moreover, from \cite[Chap. 3 and 4]{Algebraic_Function_Fields_and_Codes}, the different of $F/\mathbb{F}_q(x)$ is given by
    \[
        {\rm Diff}(F/\mathbb{F}_q(x)) = \sum_{i=1}^u(m-1)Q_i+ \sum_{k=1}^w\bigg(\frac{m}{\lambda_k^\prime}-1\bigg)R_k +(m-1)Q_\infty. 
    \]
    Let $(\eta) = dh/h$ with $h=\prod_{i=1}^\tau(x-a_i)$. Then
    \[
        (\eta) =  - D + \bigg(\frac{dh}{dx}\bigg) + \sum_{i=1}^u(m-1)Q_i+ \sum_{k=1}^w\bigg(\frac{m}{\lambda_k^\prime}-1\bigg)R_k +(n-m-1)Q_\infty.
    \]
    Since for all $1\le k \le w$, 
    \[
        (x-\gamma_k) = \sum_{Q\in\mathbb{P}_F,Q\mid P_{\gamma_k}} \frac{m}{\lambda_k^\prime}Q - mQ_\infty= \frac{m}{\lambda_k^\prime}R_k - mQ_\infty,
    \]
    we have
    \begin{align*}
        G+H & = \sum_{t=1}^{m-1}m\sum_{l=1}^{s_t}Q_{l+s_t^\prime} + s\bigg( \sum_{i=1}^uQ_i + \sum_{k=1}^wR_k \bigg) + (n-\lambda_0s-2)Q_\infty\\
        & = \sum_{i=1}^{u+w-\lfloor\frac{\lambda_0}{m}\rfloor-1}(m-1)Q_i + \lambda_0 Q_\infty - \bigg( \sum_{i=u+w-\lfloor\frac{\lambda_0}{m}\rfloor}^u Q_i + \sum_{k=1}^w R_k \bigg) + (n-2)Q_\infty + (y^{s+1}) \\
        & = \sum_{i=1}^u(m-1)Q_i + \sum_{k=1}^w\bigg(\frac{m}{\lambda_k^\prime}-1\bigg)R_k + \bigg[ n+\lambda_0 -2 - \bigg(\bigg\lfloor\frac{\lambda_0}{m}\bigg\rfloor-w+1\bigg)m - wm \bigg]Q_\infty \\
        &\quad + \bigg(\frac{y^{s+1}}{\prod_{i=u+w-\lfloor\frac{\lambda_0}{m}\rfloor}^u (x-\alpha_i)\prod_{k=1}^w(x- \gamma_k)}\bigg) \\
        & \sim \sum_{i=1}^u(m-1)Q_i + \sum_{k=1}^w\bigg(\frac{m}{\lambda_k^\prime}-1\bigg)R_k + (n-m-1)Q_\infty\\
        & \sim D + (\eta). 
    \end{align*}
    Hence, by Lemma \ref{dual_AG Code}, 
    \[
        \mathcal{C}_{\mathcal{L}}(D,G)^{\bot}=\mathcal{C}_{\mathcal{L}}(D,D-G+(\eta))\sim \mathcal{C}_{\mathcal{L}}(D,H)\;\text{ i.e. }\;\mathcal{C}_{\mathcal{L}}(D,H)^{\bot}\sim \mathcal{C}_{\mathcal{L}}(D,G).
    \]
    Therefore, the security parameter of $(\mathcal{C}_{\mathcal{L}}(D,G),\mathcal{C}_{\mathcal{L}}(D,H))$ is $d(\mathcal{C}_{\mathcal{L}}(D,G))$. The proof is completed. 
\end{proof}

\begin{remark}
    It is important to highlight that when $w=v=0$, the results in Theorems \ref{con_w=0_1}, \ref{con_w=0_2}, and \ref{con_v=0} remain valid. Consequently, our results are applicable to a wide range of traditional curves.
\end{remark}

\section{Some Explicit Constructions of LCPs on Function Fields}\label{LCPs_of_AG Codes}

Let $r\ge3$ be an odd integer and $m\ge2$. We define the following affine equations
\[
    \mathcal{X}_m : W^m = U^d(U^d-1)\bigg( \frac{1-U^{d(q-1)}}{U^d-1} \bigg)^{q+1},\quad d\mid q+1,\quad m\mid q^2-q+1\,\text{ and }\,(m,q+1)=1,
\]
and 
\[
    \mathcal{Y}_m : W^m = \frac{1-U^{d(q+1)}}{U^d},\quad d\mid q-1,\quad m\,\bigg|\,\frac{q^r+1}{q+1}. 
\]
Let $F(\mathcal{X}_m)$ and $F(\mathcal{Y}_m)$ be the function fields of the curves $\mathcal{X}_m$ and $\mathcal{Y}_m$ respectively. From \cite{mendozaExplicitEquationsMaximal2022}, we know that these two curves are subcovers of the BM curve. Moreover, $F(\mathcal{X}_m)$ is maximal over $\mathbb{F}_{q^6}$ and $F(\mathcal{Y}_m)$ is maximal over $\mathbb{F}_{q^{2r}}$. In this section, we will provide several families of LCPs of AG Codes with exact parameters over these two subcovers of the BM curve at first. Subsequently, we will provide several families of LCPs of AG Codes over other function fields, such as elliptic function fields, hyperelliptic function fields and so on.

\subsection{LCPs of AG Codes over the Function Field $F(\mathcal{X}_m)$} 

Since $(m,q+1)=1$, there exist $\lambda,\rho\in\mathbb{Z}$ such that $1< \lambda<m$ and $\lambda\cdot (q+1)+\rho \cdot m=1$. Let $d\lambda\equiv d^\prime \bmod m$ with $1\le d^\prime<m$. Applying the morphism $\varphi(x,y) = \big(U,W^\lambda\cdot( \frac{1-U^{d(q-1)}}{U^d-1} )^\rho\cdot U^{-\frac{d\lambda-d^\prime}{m}}\big)$ to the curve $\mathcal{X}_m$, then we obtain a birationally equivalent curve
\[
    \mathfrak{C}:y^m = x^{d^\prime}(x^d-1)^\lambda\bigg( \frac{1-x^{d(q-1)}}{x^d-1} \bigg) = - \prod_{i=1}^{d(q-2)}(x-\alpha_i)\cdot\prod_{j=1}^d(x-\beta_j)^\lambda\cdot x^{d^\prime}, 
\]
where $\beta_j\in \{ a\in\mathbb{F}_{q^2}\mid a^d = 1 \}$ and $\alpha_i\in\{ a\in\mathbb{F}_{q^2}\mid a^{d(q-1)} = 1 \}\backslash\{\beta_j\}$. According to \cite[Cor. 4.5]{Algebraic_Geometry}, the function field of the curve $\mathfrak{C}$ is isomorphic to $F(\mathcal{X}_m)$. Therefore, avoiding abuse of notations, we could rewrite the affine curve of $F(\mathcal{X}_m)$ as follows:
\[
    \mathcal{X}_m : y^m = - \prod_{i=1}^{d(q-2)}(x-\alpha_i)\cdot\prod_{j=1}^d(x-\beta_j)^\lambda\cdot x^{d^\prime}. 
\]
The genus of $F(\mathcal{X}_m)$ is $g=\frac{d(m-1)(q-1)}{2}$. There are totally $q^6+2gq^3+1$ rational places in $F(\mathcal{X}_m)/\mathbb{F}_{q^6}$, since $F(\mathcal{X}_m)$ is maximal over $\mathbb{F}_{q^6}$. 

Note that, there exists $k\in\mathbb{Z}$ such that
\[
    (q-2)(q+1) = q^2-q-2 = q^2-q+1 -3 = km -3. 
\]
It follows that
\[
    (q-2+2\lambda)(-q-1) + (k-2\rho)m = [(q-2)(-q-1)+km] + [2\lambda(-q-1)-2\rho m] = 3-2 =1,
\]
i.e. $(q-2+2\lambda,m)=1$. So we can derive that 
\[
    (d(q-2)+d\lambda+d^\prime,m)=(d(q-2)+d\lambda+d\lambda,m) = (q-2+2\lambda,m)=1.
\]
Hence, there exists a unique place at infinity. Let $Q_i$ be the only rational place corresponding to the zero of $x-\alpha_i$, $Q_j^\prime$ the only rational place corresponding to the zero of $x-\beta_j$ and $Q_\infty$ the unique place at infinity. Moreover, the places which are ramified in the Kummer extension $F(\mathcal{X}_m)/\mathbb{F}_{q^6}(x)$ are exactly $Q_i\cap\mathbb{F}_{q^6}(x)$, $Q_j^\prime\cap\mathbb{F}_{q^6}(x)$ and $Q_\infty\cap\mathbb{F}_{q^6}(x)$. 

Now, we present several families of LCPs of AG Codes over the function field $F(\mathcal{X}_m)/\mathbb{F}_{q^6}$.


\begin{theorem}
    Let $n = q^6 + d(q-1)\big((m-1)q^3-1\big)-1$. If $V_F = \{\lambda,d^\prime\}$ and $d< \big\lfloor \frac{d(q-2)+d\lambda+d^\prime}{m} \big\rfloor$. Then 
    \begin{itemize}
        \item for $m\le s <(n-g+2)/(d(q-2)+d\lambda+d^\prime)$, the pair $(\mathcal{C}_{\mathcal{L}}(D,G),\mathcal{C}_{\mathcal{L}}(D,H))$ with parameters
        \begin{gather*}
            [n,n-(d(q-2)+d\lambda+d^\prime)s+1,\ge (d(q-2)+d\lambda+d^\prime)s-g]_{q^6}\\
            \text{ and }[n,(d(q-2)+d\lambda+d^\prime)s-1,\ge n-(d(q-2)+d\lambda+d^\prime)s+2-g]_{q^6}
        \end{gather*}
        respectively is an LCP. 
        \item for $m\le s <(n-g+1)/(d(q-2)+d\lambda+d^\prime-1)$, the pair $(\mathcal{C}_{\mathcal{L}}(D,G),\mathcal{C}_{\mathcal{L}}(D,H))$ with parameters
        \begin{gather*}
            [n,n-(d(q-2)+d\lambda+d^\prime-1)s,\ge (d(q-2)+d\lambda+d^\prime-1)s+1-g]_{q^6}\\
            \text{ and }[n,(d(q-2)+d\lambda+d^\prime-1)s,\ge n-(d(q-2)+d\lambda+d^\prime-1)s+1-g]_{q^6}
        \end{gather*}
        respectively is an LCP. 
    \end{itemize}
    Moreover, the LCPs above have the security parameter $d(\mathcal{C}_{\mathcal{L}}(D,G))$. 
\end{theorem}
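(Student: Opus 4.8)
The plan is to recognize $F(\mathcal{X}_m)$ as a Kummer extension of the form (\ref{kum_diff}) with $w=0$ and then invoke Theorems \ref{con_w=0_1} and \ref{con_w=0_2} essentially verbatim, with the two displayed families of parameters corresponding to the two theorems respectively. Reading off the defining equation $y^m = -\prod_{i=1}^{d(q-2)}(x-\alpha_i)\cdot\prod_{j=1}^{d}(x-\beta_j)^{\lambda}\cdot x^{d'}$, I would identify $u = d(q-2)$ simple factors, $v = d+1$ factors carrying exponents coprime to $m$ (the $d$ roots $\beta_j$ with exponent $\lambda$, together with $x=0$ with exponent $d'$), and $w=0$. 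Consequently $\lambda_0 = u + \sum_{j}\lambda_j = d(q-2)+d\lambda+d'$, which is precisely the quantity appearing throughout the two parameter lists; substituting this $\lambda_0$ into the conclusions of Theorems \ref{con_w=0_1} and \ref{con_w=0_2} reproduces the stated parameters and the security parameter $d(\mathcal{C}_{\mathcal{L}}(D,G))$.

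Next I would check the hypotheses inherited from Theorem \ref{non-special_divisors_cons}. The coprimality $(\lambda_0,m)=1$ is already established in the discussion preceding the statement; since $d\mid q+1$ together with $(m,q+1)=1$ forces $(d,m)=1$, and $(\lambda,m)=1$, the residue $d'\equiv d\lambda \pmod{m}$ also satisfies $(d',m)=1$, so $d'$ qualifies as one of the $\lambda_j$. The hypothesis $V_F=\{\lambda,d'\}$ guarantees $\lambda_j\in V_F$ for every $j$, and the hypothesis $d<\big\lfloor\frac{d(q-2)+d\lambda+d'}{m}\big\rfloor$ gives $v=d+1\le\lfloor\lambda_0/m\rfloor$. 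Finally, the requirement $s_t\ge0$ is automatic here: with $w=0$, Equation (\ref{s_t_w=0}) bounds the negative contribution below by $-v$, while $\lfloor (t+1)\lambda_0/m\rfloor-\lfloor t\lambda_0/m\rfloor\ge\lfloor\lambda_0/m\rfloor$, so $s_t\ge\lfloor\lambda_0/m\rfloor-v\ge0$. Thus all assumptions of Theorems \ref{con_w=0_1} and \ref{con_w=0_2} are met.

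The computation I expect to require the most care is the length $n$. Since $F(\mathcal{X}_m)$ is maximal over $\mathbb{F}_{q^6}$ with $g=\frac{d(m-1)(q-1)}{2}$, it has exactly $q^6+2gq^3+1=q^6+d(m-1)(q-1)q^3+1$ rational places. The totally ramified rational places of the cover $F(\mathcal{X}_m)/\mathbb{F}_{q^6}(x)$ are the $Q_i$, the $Q_j'$, the place over $x=0$, and $Q_\infty$, numbering $d(q-2)+d+1+1=d(q-1)+2$. Every remaining rational place lies in a fiber that splits completely into $m$ rational places; writing $\tau$ for the number of such completely split points of the $x$-line, the count gives $n=m\tau=\big(q^6+d(m-1)(q-1)q^3+1\big)-\big(d(q-1)+2\big)=q^6+d(q-1)\big((m-1)q^3-1\big)-1$. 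A consistency check that $m\mid n$ uses $q^3+1=(q+1)(q^2-q+1)$ together with $m\mid q^2-q+1$, which yields $n\equiv q^6-1\equiv0\pmod{m}$, confirming that $D=\sum_{i=1}^{\tau}\sum_{j=1}^{m}P_{a_ib_j}$ is well-defined with $\deg D=n$.

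With the Kummer structure identified, the hypotheses verified, and $n$ computed, the two bullet points follow by direct citation of Theorems \ref{con_w=0_1} and \ref{con_w=0_2}, and the common security parameter is inherited from the relation $\mathcal{C}_{\mathcal{L}}(D,H)^{\bot}\sim\mathcal{C}_{\mathcal{L}}(D,G)$ proved there. The main obstacle is not any single deep step but the bookkeeping in the length count, namely correctly enumerating the totally ramified rational places and confirming that the fibration over the unramified part decomposes into complete fibers of size $m$ so that $n$ lands on the stated closed form; the remaining work is the routine substitution $\lambda_0=d(q-2)+d\lambda+d'$ into the pre-established parameter formulas.
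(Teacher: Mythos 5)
Your proposal is correct and follows essentially the same route as the paper's proof: identify $F(\mathcal{X}_m)$ as a Kummer extension of type (\ref{kum_diff}) with $w=0$, $u=d(q-2)$, $v=d+1$ and $\lambda_0=d(q-2)+d\lambda+d'$, take $D$ to be the sum of all rational places off the ramified fibers so that $\deg D=n$, verify $s_t\ge 0$ via Lemma \ref{ceil_prop} and Equation (\ref{s_t_w=0}) using the hypothesis $d<\lfloor\lambda_0/m\rfloor$, and then cite Theorems \ref{con_w=0_1} and \ref{con_w=0_2}. You actually spell out the $s_t\ge0$ estimate and the place count more explicitly than the paper does, but the argument is the same.
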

\begin{proof}
    Through the above deduction, we know that the rational places in $\mathbb{P}_{\mathbb{F}_{q^6}(x)}\backslash\{Q_i\cap\mathbb{F}_{q^6}(x),Q_j^\prime\cap\mathbb{F}_{q^6}(x),Q_\infty\cap\mathbb{F}_{q^6}(x)\}$ split completely in the extension $F(\mathcal{X}_m)/\mathbb{F}_{q^6}(x)$. Let 
    \[
        D = \sum_{\substack{P\in\mathbb{P}_{F(\mathcal{X}_m)}\backslash\{Q_i,Q_j^\prime,Q_\infty\}\\  P \text{ is rational}}}P.
    \]
    Then we have $\deg D = q^6 + d(q-1)\big((m-1)q^3-1\big)-1 = n$. Moreover, by Lemma \ref{ceil_prop} and Equation (\ref{s_t_w=0}), if $d< \big\lfloor \frac{d(q-2)+d\lambda+d^\prime}{m} \big\rfloor$ holds, then we can conclude that $s_t\ge 0$\,(defined in Theorem~\ref{non-special_divisors_cons}) for $1\le t\le m-1$ which satisfies the assumption of Theorem \ref{non-special_divisors_cons}. Therefore, we can apply Theorems \ref{con_w=0_1} and \ref{con_w=0_2} to the Kummer extension $F(\mathcal{X}_m)/\mathbb{F}_{q^6}(x)$ and then we obtain our results. The proof is completed. 
\end{proof}


In the following, we provide some specific examples about LCPs of AG Codes over the function field $F(\mathcal{X}_m)/\mathbb{F}_{q^6}$.

\begin{example}\label{LCP_Xm_ex}
    {\rm (1)} Let $q=2^3$ and $m=19$. For any $d\mid q+1$, we derive that $d< \big\lfloor \frac{d(q-2)+d\lambda+d^\prime}{m} \big\rfloor$ holds. Keep the notations above and let $d = 1$, then we have $\lambda = d^\prime = 17$ and $V_F = \{ 17 \}$. Therefore, we obtain the following LCPs:
    \begin{itemize}
        \item For $19\le s <(n-61)/40$, the LCP $(\mathcal{C}_{\mathcal{L}}(D,G),\mathcal{C}_{\mathcal{L}}(D,H))$ with parameters
        \begin{gather*}
            [n,n-40s+1,\ge 40s-63]_{2^{18}}\text{ and }[n,40s-1,\ge n-40s-61]_{2^{18}}
        \end{gather*}
        has the security parameter $d(\mathcal{C}_{\mathcal{L}}(D,G))\ge 40s-63$.
        \item For $19\le s <(n-62)/39$, the pair $(\mathcal{C}_{\mathcal{L}}(D,G),\mathcal{C}_{\mathcal{L}}(D,H))$ with parameters
        \begin{gather*}
            [n,n-39s,\ge 39s-62]_{2^{18}}\text{ and }[n,39s,\ge n-39s-62]_{2^{18}}
        \end{gather*}
        has the security parameter $d(\mathcal{C}_{\mathcal{L}}(D,G))\ge 39s-62$.
    \end{itemize}

    {\rm (2)} Let $q = 2^6$ and $m = 37$. When $d \mid q+1$ and $d\neq 1$, we derive that $d<\big\lfloor \frac{d(q-2)+d\lambda+d^\prime}{m} \big\rfloor$. Therefore, if $V_F = \{\lambda,d^\prime\}$, then we can obtain some LCPs of AG Codes. However, when $d = 1$, we have $d=\big\lfloor \frac{d(q-2)+d\lambda+d^\prime}{m} \big\rfloor = 1$. Let $\lambda^\prime$ be an integer such that $1\le \lambda^\prime<m$ and $d(q-2)+d\lambda+d^\prime \equiv \lambda^\prime\bmod m$. Then we can derive that none of the following conditions hold: 
    \begin{itemize}
        \item The set $\big\{\big\lfloor\frac{l m}{\lambda}\big\rfloor\mid 1\le l\le \lambda-1\big\}$\,\big(or $\big\{\big\lfloor\frac{l m}{d^\prime}\big\rfloor\mid 1\le l\le d^\prime-1\big\}$\big) is a subset of $\big\{\big\lfloor\frac{l m}{\lambda^\prime}\big\rfloor\mid 1\le l\le \lambda^\prime-1\big\}$;
        \item $\big\{\big\lfloor\frac{l m}{\lambda}\big\rfloor\mid 1\le l\le \lambda-1\big\}\cap \big\{\big\lfloor\frac{l m}{d^\prime}\big\rfloor\mid 1\le l\le d^\prime-1\big\} = \varnothing$. 
    \end{itemize}
    It implies that there exists $1\le t_0\le m-1$ such that $s_{t_0}<0$. Hence, Remark \ref{con_w=0_rem}(1) could not be applied to the Kummer extension $F(\mathcal{X}_m)/\mathbb{F}_{q^6}(x)$. That is to say, our constructions could not be used in this situation. 
\end{example}

\subsection{LCPs of AG Codes over the Function Field $F(\mathcal{Y}_m)$} 

Since $d\mid q-1$ and $m\mid(q^r+1)/(q+1)$, it yields that $(d,m)=1$. Let $-d\equiv d^\prime \bmod m$ with $1\le d^\prime<m$. Applying the morphism $\varphi(x,y) = \big(U,  W\cdot U^{\frac{d+d^\prime}{m}}\big)$ to the curve $\mathcal{Y}_m$, then we obtain a birationally equivalent curve
\[
    \mathfrak{C}^\prime:y^m = x^{d^\prime}(1-x^{d(q+1)}) = - \prod_{i=1}^{d(q+1)}(x-\alpha_i)\cdot x^{d^\prime}, 
\]
where $\alpha_i\in \{ a\in\mathbb{F}_{q^2}\mid a^{d(q+1)} = 1 \}$. According to \cite[Cor. 4.5]{Algebraic_Geometry}, the function field of the curve $\mathfrak{C}^\prime$ is isomorphic to $F(\mathcal{Y}_m)$. Therefore, avoiding abuse of notations, we could rewrite the affine curve of $F(\mathcal{Y}_m)$ as follows:
\[
    \mathcal{Y}_m : y^m = - \prod_{i=1}^{d(q+1)}(x-\alpha_i)\cdot x^{d^\prime}. 
\]
The genus of $F(\mathcal{Y}_m)$ is $g=\frac{d(m-1)(q+1)}{2}$. There are totally $q^{2r}+2gq^r+1$ rational places in $F(\mathcal{Y}_m)/\mathbb{F}_{q^{2r}}$, since $F(\mathcal{Y}_m)$ is maximal over $\mathbb{F}_{q^{2r}}$. Since we have
\[
    (d(q+1)+d^\prime,m) = (d(q+1)-d,m) = (d\cdot q,m) = 1,
\]
there exists a unique place at infinity. Let $Q_i$ be the only rational place corresponding to the zero of $x-\alpha_i$, $Q_0$ the only rational place corresponding to the zero of $x$ and $Q_\infty$ the unique place at infinity. Moreover, the places which are ramified in the Kummer extension $F(\mathcal{Y}_m)/\mathbb{F}_{q^{2r}}(x)$ are exactly $Q_i\cap\mathbb{F}_{q^{2r}}(x)$, $Q_0\cap\mathbb{F}_{q^{2r}}(x)$ and $Q_\infty\cap\mathbb{F}_{q^{2r}}(x)$. 

Now, we present several families of LCPs of AG Codes over the function field $F(\mathcal{Y}_m)/\mathbb{F}_{q^{2r}}$.  

\begin{theorem}
    Let $n = q^{2r} + d(q+1)\big((m-1)q^r-1\big)-1$. If $V_F = \{d^\prime\}$, then the following assertions hold.
    \begin{itemize}
        \item For $m\le s <(n-g+2)/(d(q+1)+d^\prime)$, the pair $(\mathcal{C}_{\mathcal{L}}(D,G),\mathcal{C}_{\mathcal{L}}(D,H))$ with parameters
        \begin{gather*}
            [n,n-(d(q+1)+d^\prime)s+1,\ge (d(q+1)+d^\prime)s-g]_{q^{2r}}\\
            \text{ and }[n,(d(q+1)+d^\prime)s-1,\ge n-(d(q+1)+d^\prime)s+2-g]_{q^{2r}}
        \end{gather*}
        respectively is an LCP. 
        \item For $m\le s <(n-g+1)/(d(q+1)+d^\prime-1)$, the pair $(\mathcal{C}_{\mathcal{L}}(D,G),\mathcal{C}_{\mathcal{L}}(D,H))$ with parameters
        \begin{gather*}
            [n,n-(d(q+1)+d^\prime-1)s,\ge (d(q+1)+d^\prime-1)s+1-g]_{q^{2r}}\\
            \text{ and }[n,(d(q+1)+d^\prime-1)s,\ge n-(d(q+1)+d^\prime-1)s+1-g]_{q^{2r}}
        \end{gather*}
        respectively is an LCP. 
    \end{itemize}
    Moreover, the LCPs above have the security parameter $d(\mathcal{C}_{\mathcal{L}}(D,G))$. 
\end{theorem}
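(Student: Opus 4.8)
The plan is to realize the Kummer extension $F(\mathcal{Y}_m)$ in the shape of Equation (\ref{kum_diff}) with $w=0$ and then invoke the two general constructions, Theorem \ref{con_w=0_1} and Theorem \ref{con_w=0_2}. Reading off the affine model $y^m = -\prod_{i=1}^{d(q+1)}(x-\alpha_i)\cdot x^{d^\prime}$, I would take $a=-1$, $u=d(q+1)$ (the linear factors of exponent one), and $v=1$ with the single factor $x^{d^\prime}$ playing the role of $(x-\beta_1)^{\lambda_1}$, so $\beta_1=0$ and $\lambda_1=d^\prime$; since $(d,m)=1$ forces $(d^\prime,m)=1$, this factor is genuinely of $\beta$-type and $w=0$. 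Consequently $\lambda_0=d(q+1)+d^\prime$, and using $d^\prime\equiv -d\bmod m$ one gets $\lambda_0\equiv dq\bmod m$, recovering $(\lambda_0,m)=(dq,m)=1$ as already noted. The genus formula of Section \ref{construction} then returns $g=\tfrac{d(m-1)(q+1)}{2}$, consistent with the stated value.

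Next I would fix the evaluation divisor. As established in the discussion preceding the statement, the ramified places of $F(\mathcal{Y}_m)/\mathbb{F}_{q^{2r}}(x)$ are exactly $Q_i$, $Q_0$ and $Q_\infty$, so every other rational place lies over a completely split place of $\mathbb{F}_{q^{2r}}(x)$. Setting
\[
    D = \sum_{\substack{P\in\mathbb{P}_{F(\mathcal{Y}_m)}\backslash\{Q_i,Q_0,Q_\infty\}\\ P\text{ rational}}} P,
\]
maximality of $F(\mathcal{Y}_m)$ over $\mathbb{F}_{q^{2r}}$ gives $q^{2r}+2gq^r+1$ rational places in total; subtracting the $d(q+1)$ places $Q_i$ together with $Q_0$ and $Q_\infty$ and substituting $2g=d(m-1)(q+1)$ yields $\deg D = q^{2r} + d(q+1)\big((m-1)q^r-1\big)-1 = n$, as required.

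The substantive step is to verify the hypotheses of Theorem \ref{non-special_divisors_cons}, after which the two constructions apply verbatim. The assumption $V_F=\{d^\prime\}$ is precisely the requirement $\lambda_1=d^\prime\in V_F$. For non-negativity of the $s_t$, I would use Equation (\ref{s_t_w=0}), which here reads
\[
    s_t = \Big\lceil\tfrac{td^\prime}{m}\Big\rceil - \Big\lceil\tfrac{(t+1)d^\prime}{m}\Big\rceil + \Big\lfloor\tfrac{(t+1)\lambda_0}{m}\Big\rfloor - \Big\lfloor\tfrac{t\lambda_0}{m}\Big\rfloor,
\]
and combine it with Lemma \ref{ceil_prop}: the ceiling difference contributes at least $-1$, while writing $\lambda_0=\lfloor\lambda_0/m\rfloor\,m+(dq\bmod m)$ shows the floor difference is at least $\lfloor\lambda_0/m\rfloor$. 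Since the hypothesis $v\le\lfloor\lambda_0/m\rfloor$ of Theorem \ref{non-special_divisors_cons} forces $\lfloor\lambda_0/m\rfloor\ge 1$, this gives $s_t\ge 0$ for $1\le t\le m-2$. With all hypotheses in place, applying Theorem \ref{con_w=0_1} over the range $m\le s<(n-g+2)/(d(q+1)+d^\prime)$ and Theorem \ref{con_w=0_2} over $m\le s<(n-g+1)/(d(q+1)+d^\prime-1)$, and substituting $\lambda_0=d(q+1)+d^\prime$ and the field size $q^{2r}$ into the parameter formulas, produces the two families of LCPs together with the security parameter $d(\mathcal{C}_{\mathcal{L}}(D,G))$. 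The main obstacle is the bookkeeping in the $s_t\ge 0$ verification and confirming that the restrictions on $s$ keep $2g-2<\deg G,\deg H<n$; both are routine once the reduction to $w=0$ is in place, mirroring the argument already given for $F(\mathcal{X}_m)$.
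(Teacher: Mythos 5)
Your proposal is correct and follows essentially the same route as the paper: realize $F(\mathcal{Y}_m)$ as a $w=0$ Kummer extension with $u=d(q+1)$, $v=1$, $\lambda_1=d^\prime$, take $D$ to be the sum of the rational places away from the ramified ones (degree $n$ by maximality), check $s_t\ge 0$ via Lemma \ref{ceil_prop} and Equation (\ref{s_t_w=0}), and invoke Theorems \ref{con_w=0_1} and \ref{con_w=0_2}. The one point to tighten is that $\lfloor\lambda_0/m\rfloor\ge 1$ is a hypothesis you must verify rather than deduce from the hypothesis $v\le\lfloor\lambda_0/m\rfloor$ of Theorem \ref{non-special_divisors_cons}; it does hold, since $m\mid d+d^\prime$ gives $d+d^\prime\ge m$ and hence $\lambda_0=dq+(d+d^\prime)>m$.
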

\begin{proof}
    Through the above deduction, we know that the rational places in $\mathbb{P}_{\mathbb{F}_{q^{2r}}(x)}\backslash\{Q_i\cap\mathbb{F}_{q^{2r}}(x),Q_0\cap\mathbb{F}_{q^{2r}}(x),Q_\infty\cap\mathbb{F}_{q^{2r}}(x)\}$ split completely in the extension $F(\mathcal{Y}_m)/\mathbb{F}_{q^{2r}}(x)$. Let 
    \[
        D = \sum_{\substack{P\in\mathbb{P}_{F(\mathcal{Y}_m)}\backslash\{Q_i,Q_0,Q_\infty\}\\  P \text{ is rational}}}P.
    \]
    Then we have $\deg D = q^{2r} + d(q+1)\big((m-1)q^r-1\big)-1 = n$. Furthermore, it consistently holds that $\big\lfloor \frac{d(q+1)+d^\prime}{m} \big\rfloor>0$. By Lemma \ref{ceil_prop} and Equation (\ref{s_t_w=0}), we can conclude that $s_t\ge 0$\,(defined in Theorem \ref{non-special_divisors_cons}) for $1\le t\le m-1$ which satisfies the assumption of Theorem \ref{non-special_divisors_cons}. Therefore, we can apply Theorems~\ref{con_w=0_1} and~\ref{con_w=0_2} to the Kummer extension $F(\mathcal{Y}_m)/\mathbb{F}_{q^{2r}}(x)$ and then we obtain our results. The proof is completed. 
\end{proof}

In the following, we provide some specific examples about LCPs of AG Codes over the function field $F(\mathcal{Y}_m)/\mathbb{F}_{q^{2r}}$.

\begin{example}\label{LCP_Ym_ex}
    {\rm (1)} Let $q=2$ and $r=3$. Then $d=1$, $m=3$ and 
    \[
        \mathcal{Y}_m/\mathbb{F}_{64} : y^3 = x^2\cdot(1-x^3) = - (x-\alpha_1)(x-\alpha_2)(x-\alpha_3)\cdot x^2, 
    \]
    where $\alpha_i\in \{ a\in\mathbb{F}_{4}\mid a^{3} = 1 \}$. Keep the notations above and we have $V_F = \{2\}$. Then we obtain the following LCPs:
    \begin{itemize}
        \item For $3\le s \le21$, the LCP $(\mathcal{C}_{\mathcal{L}}(D,G),\mathcal{C}_{\mathcal{L}}(D,H))$ with parameters
        \[
            [108,109-5s,\ge 5s-3]_{2^6}\text{ and }[108,5s-1,\ge 107-5s]_{2^6}
        \]
        has the security parameter $d(\mathcal{C}_{\mathcal{L}}(D,G))\ge 5s-3$.
        \item For $3\le s \le 26$, the LCP $(\mathcal{C}_{\mathcal{L}}(D,G),\mathcal{C}_{\mathcal{L}}(D,H))$ with parameters
        \[
            [108,108-4s,\ge 4s-2]_{2^6}\text{ and }[108,4s,\ge 106-4s]_{2^6}
        \]
        has the security parameter $d(\mathcal{C}_{\mathcal{L}}(D,G))\ge 4s-2$.
    \end{itemize}

    {\rm (2)} Let $q = 2$ and $r = 9$. Then $d = 1$. Let $m = 19\mid (q^r+1)/(q+1)$. Keep the notations above and we have $V_F = \{18\}$. Then we obtain the following LCPs:
    \begin{itemize}
        \item For $19\le s <(n-25)/21$, the LCP $(\mathcal{C}_{\mathcal{L}}(D,G),\mathcal{C}_{\mathcal{L}}(D,H))$ with parameters
        \begin{gather*}
            [n,n-21s+1,\ge 21s-27]_{2^{18}}\text{ and }[n,21s-1,\ge n-21s-25]_{2^{18}}
        \end{gather*}
        has the security parameter $d(\mathcal{C}_{\mathcal{L}}(D,G))\ge 21s-27$.
        \item For $19\le s <(n-26)/20$, the LCP $(\mathcal{C}_{\mathcal{L}}(D,G),\mathcal{C}_{\mathcal{L}}(D,H))$ with parameters
        \begin{gather*}
            [n,n-20s,\ge 20s-26]_{2^{18}}\text{ and }[n,20s,\ge n-20s-26]_{2^{18}}
        \end{gather*}
        has the security parameter $d(\mathcal{C}_{\mathcal{L}}(D,G))\ge 20s-26$.
    \end{itemize}
    It should be noted that the aforementioned two families of LCPs are different from those presented in Example \ref{LCP_Xm_ex}(1). 
\end{example}

\subsection{LCPs of AG Codes over Elliptic and Hyperelliptic Function Fields}

Let $q$ is a power of an odd prime number. We consider the function field $F_g/\mathbb{F}_q$ defined by the affine equation
\begin{equation}\label{hyperelliptic_g}
    \mathfrak{C}_g: y^2 = \prod_{i=1}^{2g+1}(x-\alpha_i),
\end{equation}
where $\alpha_i\in\mathbb{F}_q$ are pairwise distinct elements and $g\ge1$ is an integer. Let $Q_i$ be the only rational place corresponding to the zero of $x-\alpha_i$ and $Q_\infty$ the unique place at infinity. Moreover, the places which are ramified in the Kummer extension $F_g/\mathbb{F}_q(x)$ are exactly $Q_i\cap\mathbb{F}_q(x)$ and $Q_\infty\cap\mathbb{F}_q(x)$. By Section \ref{construction}, the function field $F_g/\mathbb{F}_q$ has genus $g$. As observed, $F_1/\mathbb{F}_q$ corresponds precisely to the elliptic function field, and $F_g/\mathbb{F}_q$ is the hyperelliptic function field when $g>1$. Let $N_g$ denote the number of rational places in $F_g/\mathbb{F}_q$. 

Now, we present several families of LCPs of AG Codes over the function field $F_g/\mathbb{F}_q$. 

\begin{theorem}\label{LCP_HEC}
    Let $n = N_g-2g-2$. Then the following assertions hold.
    \begin{itemize}
        \item For $2\le s <(n-g+2)/(2g+1)$, the pair $(\mathcal{C}_{\mathcal{L}}(D,G),\mathcal{C}_{\mathcal{L}}(D,H))$ with parameters
        \begin{gather*}
            [n,n-(2g+1)s+1,\ge (2g+1)s-g]_q\\
            \text{ and }[n,(2g+1)s-1,\ge n-(2g+1)s+2-g]_q
        \end{gather*}
        respectively is an LCP. 
        \item For $2\le s <(n-g+1)/2g$, the pair $(\mathcal{C}_{\mathcal{L}}(D,G),\mathcal{C}_{\mathcal{L}}(D,H))$ with parameters
        \begin{gather*}
            [n,n-2gs,\ge 2gs+1-g]_q\\
            \text{ and }[n,2gs,\ge n-2gs+1-g]_q
        \end{gather*}
        respectively is an LCP. 
        \item For $2\le s <(n-g+1)/(2g+1)$, the pair $(\mathcal{C}_{\mathcal{L}}(D,G),\mathcal{C}_{\mathcal{L}}(D,H))$ with parameters
        \begin{gather*}
            [n,n-(2g+1)s,\ge (2g+1)s+1-g]_q\\
            \text{ and }[n,(2g+1)s,\ge n-(2g+1)s+1-g]_q
        \end{gather*}
        respectively is an LCP. 
    \end{itemize}
    Moreover, the LCPs above have the security parameter $d(\mathcal{C}_{\mathcal{L}}(D,G))$. 
\end{theorem}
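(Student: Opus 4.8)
The plan is to recognize the function field $F_g/\mathbb{F}_q$ defined by Equation (\ref{hyperelliptic_g}) as a degenerate instance of the general Kummer extension (\ref{kum_diff}), and then to read off the three bullet points as direct specializations of Theorems \ref{con_w=0_1}, \ref{con_w=0_2} and \ref{con_v=0}. Concretely, taking $m=2$, $a=1$, $u=2g+1$ and $v=w=0$ turns (\ref{kum_diff}) into $y^2=\prod_{i=1}^{2g+1}(x-\alpha_i)$, so that $\lambda_0=u+\sum_j\lambda_j+\sum_k\lambda_k'=2g+1$. Since $q$ is a power of an odd prime we have $(q,m)=(q,2)=1$, and since $g\ge 1$ the integer $2g+1$ is odd, whence $(\lambda_0,m)=(2g+1,2)=1$; this is precisely what forces $Q_\infty$ to be totally ramified and the unique place at infinity, as the general theorems require. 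A one-line check of the genus formula, $\frac{(m-1)(u+v-1)+mw-\sum_k(\lambda_k',m)}{2}=\frac{1\cdot 2g}{2}=g$, confirms that the notation $F_g$ is consistent.

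First I would verify that the standing hypotheses of Theorem \ref{non-special_divisors_cons} are automatically discharged in this setting. Because $v=w=0$, the constraints $0\le v,w\le\lfloor\lambda_0/m\rfloor$ are trivial and the requirement $\lambda_j\in V_F$ for $1\le j\le v$ is vacuous; most importantly, since $m=2$ the index range $1\le t\le m-2$ is empty, so the sign condition $s_t\ge 0$ imposes nothing at all. Hence the effective non-special divisor of degree $g$ supplied by Theorem \ref{non-special_divisors_cons}, and the degree $g-1$ refinement obtained through Lemma \ref{non_special_g-1}, are available with no further hypotheses.

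Next I would pin down the completely split places feeding the general constructions. The ramified rational places of $F_g$ are exactly $Q_1,\dots,Q_{2g+1}$, the zeros of the $x-\alpha_i$, together with $Q_\infty$, giving $2g+2$ ramified rational places; every other rational place lies above a rational place $P_{a_i}\in\mathbb{P}_{\mathbb{F}_q(x)}$ that splits completely into $m=2$ places. Taking $D$ to be the sum of all these split places yields $\deg D=N_g-(2g+2)=n$, matching the statement, with $\tau=n/2$ in the notation of Theorems \ref{con_w=0_1}, \ref{con_w=0_2} and \ref{con_v=0}.

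With these identifications, the three assertions follow by substitution. The first is Theorem \ref{con_w=0_1} with $\lambda_0=2g+1$; the second is Theorem \ref{con_w=0_2}, where $\lambda_0-1=2g$; and the third is Theorem \ref{con_v=0}, whose extra hypotheses $\lambda_k'\mid m$ (vacuous since $w=0$) and $\lambda_0\equiv 1\bmod m$ (true since $2g+1$ is odd) both hold. In each case the admissible range of $s$, the two sets of $[n,k,d]_q$ parameters, and the security parameter $d(\mathcal{C}_{\mathcal{L}}(D,G))$ are obtained by inserting $m=2$ and $\lambda_0=2g+1$ into the corresponding general result. I do not expect a genuine obstacle: the sole point requiring care is bookkeeping, namely confirming that the bounds on $s$ and all three parameter triples transcribe correctly under the substitution, and that the $m=2$ degeneracy truly annihilates every hypothesis of the underlying non-special-divisor construction rather than concealing a constraint.
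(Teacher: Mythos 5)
Your proposal is correct and follows essentially the same route as the paper: the paper's proof likewise takes $D$ to be the sum of all rational places outside $\{Q_i,Q_\infty\}$, notes $\deg D = N_g-2g-2=n$, and invokes Theorems \ref{con_w=0_1}, \ref{con_w=0_2} and \ref{con_v=0} for the three bullets. Your additional checks (the vacuity of the $s_t\ge 0$ condition for $m=2$, the congruence $\lambda_0\equiv 1\bmod m$ needed for Theorem \ref{con_v=0}, and the genus computation) are exactly the bookkeeping the paper leaves implicit.
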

\begin{proof}
    Through the above deduction, we know that the rational places in $\mathbb{P}_{\mathbb{F}_q(x)}\backslash\{Q_i\cap\mathbb{F}_q(x),Q_\infty\cap\mathbb{F}_q(x)\}$ split completely in the extension $F_g/\mathbb{F}_q(x)$. Let 
    \[
        D = \sum_{\substack{P\in\mathbb{P}_{F_g}\backslash\{Q_i,Q_\infty\}\\  P \text{ is rational}}}P.
    \]
    Then we have $\deg D = N_g-2g-2 = n$. Subsequently, we can apply Theorems \ref{con_w=0_1}, \ref{con_w=0_2} and \ref{con_v=0} to the Kummer extension $F_g/\mathbb{F}_q(x)$ and then we obtain our results. The proof is completed. 
\end{proof}

For $F_1/\mathbb{F}_q$, i.e. the elliptic function field. It is well known that the set of rational places of $F_1/\mathbb{F}_q$ can be viewed as the abelian group $({\rm Jac}(\mathfrak{C}_1),\oplus)$ with $Q_\infty$ as the neutral element (see \cite{The_Arithmetic_of_Elliptic_Curves} or \cite[Prop. 6.1.7]{Algebraic_Function_Fields_and_Codes}). Suppose that $\mathcal{C}_{\mathcal{L}}(D,G)$ is an $[n,k,d]$ AG Code. Then $n-\deg G=n-k\le d\le n-k+1$ by (\ref{singleton_bound}). Moreover, $\mathcal{C}_{\mathcal{L}}(D,G)$ is MDS if and only if there does not exist $k$ distinct places $P_{i_1},\cdots,P_{i_k}\in{\rm supp}(D)$ such that $P_{i_1}+\cdots+P_{i_k}\sim G$. From \cite[Thm. 4.1]{Elliptic_Curves:_Number_Theory_and_Cryptography}, we know that
\[
    {\rm Jac}(\mathfrak{C}_1)\cong\mathbb{Z}/2^{\ell_1}m_1\mathbb{Z}\times\mathbb{Z}/2^{\ell_2}m_2\mathbb{Z},
\]
where $N_1 = 2^{\ell_1+\ell_2}m_1m_2$, $1\le\ell_1\le\ell_2$, $m_1\mid m_2$ and $(m_i,2)=1$ for $i=1,2$. That is to say, there exist rational places $R_1$, $R_2$ such that ${\rm ord}(R_1)=2^{\ell_1}m_1$, ${\rm ord}(R_2)=2^{\ell_2}m_2$ in ${\rm Jac}(\mathfrak{C}_1)$ and ${\rm Jac}(\mathfrak{C}_1)=\langle R_1, R_2\rangle$. Following this, we obtain several families of LCPs of MDS AG Codes over the function field $F_1/\mathbb{F}_q$. 

\begin{corollary}\label{LCP_EC}
    If $8\mid N_1$. Then 
    \begin{itemize}
        \item[(1)] for $2\le s <(N_1+2)/6$ and $2\mid s$, the pair $(\mathcal{C}_{\mathcal{L}}(D,G),\mathcal{C}_{\mathcal{L}}(D,H))$ with parameters
        \begin{gather*}
            [N_1/2,N_1/2-3s+1,3s]_q\text{ and }[N_1/2,3s-1,N_1/2-3s+2]_q
        \end{gather*}
        respectively is an LCP of MDS AG Codes with the security parameter $3s$. 
        \item[(2)] for $2\le s <N_1/6$ and $2\nmid s$, the pair $(\mathcal{C}_{\mathcal{L}}(D,G),\mathcal{C}_{\mathcal{L}}(D,H))$ with parameters
        \begin{gather*}
            [N_1/2,N_1/2-3s,3s+1]_q\text{ and }[N_1/2,3s,N_1/2-3s+1]_q
        \end{gather*}
        respectively is an LCP of MDS AG Codes with the security parameter $3s+1$. 
    \end{itemize}
    If $8\nmid N_1$, then
    \begin{itemize}
        \item[(3)] for $2\le s <(N_1+2)/6$ and $2\mid s$, the pair $(\mathcal{C}_{\mathcal{L}}(D,G),\mathcal{C}_{\mathcal{L}}(D,H))$ with parameters
        \begin{gather*}
            [N_1/2-2,N_1/2-3s-1,3s]_q\text{ and }[N_1/2-2,3s-1,N_1/2-3s]_q
        \end{gather*}
        respectively is an LCP of MDS AG Codes with the security parameter $3s$. 
        \item[(4)] for $2\le s <N_1/6$ and $2\nmid s$, the pair $(\mathcal{C}_{\mathcal{L}}(D,G),\mathcal{C}_{\mathcal{L}}(D,H))$ with parameters
        \begin{gather*}
            [N_1/2-2,N_1/2-3s-2,3s+1]_q\text{ and }[N_1/2-2,3s,N_1/2-3s-1]_q
        \end{gather*}
        respectively is an LCP of MDS AG Codes with the security parameter $3s+1$. 
    \end{itemize}
\end{corollary}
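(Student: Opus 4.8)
The plan is to treat the corollary as a specialization of the general LCP constructions plus a separate verification of the MDS property, where the latter is the real content. First I would observe that the underlying LCP, the security‑parameter assertion, and the equivalence $\mathcal{C}_{\mathcal{L}}(D,H)^{\bot}\sim\mathcal{C}_{\mathcal{L}}(D,G)$ are already delivered by Theorems \ref{con_w=0_1} and \ref{con_v=0} applied to $m=2$, $u=2g+1=3$, $v=w=0$, $\lambda_0=3$; the only genuinely new claim is that the two codes are MDS. Since the dual of an MDS code is MDS and $\mathcal{C}_{\mathcal{L}}(D,H)^{\bot}\sim\mathcal{C}_{\mathcal{L}}(D,G)$, it suffices to prove that $\mathcal{C}_{\mathcal{L}}(D,G)$ is MDS in each of the four cases; the other code then inherits the property.

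Next I would compute $\sigma(G)$, the image of $G$ under the group law $\oplus$ on $({\rm Jac}(\mathfrak{C}_1),\oplus)$. In both constructions one has $G=Q_1+cQ_\infty$ for an integer $c$, and since $Q_\infty$ is the neutral element this gives $\sigma(G)=Q_1$, a nonzero two‑torsion point which I denote $T$. By the MDS criterion recalled just before the corollary, $\mathcal{C}_{\mathcal{L}}(D,G)$ of dimension $k=\deg G$ is MDS precisely when no $k$ of the places in ${\rm supp}(D)$ sum (under $\oplus$) to $T$. The support of $D$ is a negation‑stable set of ordinary (non‑Weierstrass) rational places, and exploiting the freedom in this choice is the heart of the argument.

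The core step is to select $D$ through an index‑two subgroup $V\le{\rm Jac}(\mathfrak{C}_1)$ adapted to $T$ and to the $2$‑Sylow type. Writing ${\rm Jac}(\mathfrak{C}_1)\cong\mathbb{Z}/2^{\ell_1}m_1\mathbb{Z}\times\mathbb{Z}/2^{\ell_2}m_2\mathbb{Z}$ with $1\le\ell_1\le\ell_2$, the condition $8\mid N_1$ is equivalent to $\ell_2\ge2$, which is exactly the condition under which some index‑two subgroup $V$ contains the entire two‑torsion $\{Q_\infty,Q_1,Q_2,Q_3\}$. When $8\mid N_1$ I would choose such a $V$; then the nontrivial coset $W={\rm Jac}(\mathfrak{C}_1)\setminus V$ consists of $N_1/2$ ordinary points, is stable under negation (negation fixes each coset of $V$ setwise), and I set $D=\sum_{P\in W}P$, so $n=N_1/2$ and ${\rm supp}(D)$ is disjoint from ${\rm supp}(G)\cup{\rm supp}(H)\subseteq\{Q_\infty,Q_1,Q_2,Q_3\}$. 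When $8\nmid N_1$, so $\ell_1=\ell_2=1$, no index‑two subgroup contains all of the two‑torsion; instead I would take the unique $V$ with $V\cap\{Q_\infty,Q_1,Q_2,Q_3\}=\{Q_\infty,Q_1\}$, i.e. with $T\in V$. Its nontrivial coset $W$ then meets the two‑torsion in $\{Q_2,Q_3\}$, and deleting these two self‑inverse places leaves $N_1/2-2$ ordinary negation‑stable places, which I take for ${\rm supp}(D)$. In both regimes $T\in V$ and ${\rm supp}(D)\subseteq W$.

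The MDS verification then reduces to a parity argument: any $k$ places of $W$ sum into the coset $k$‑fold shift of $W$, and since $W$ is the nontrivial coset and $k$ is odd this is again $W$, which cannot contain $T$ because $T\in V$. Hence no $k$‑subset sums to $T=\sigma(G)$, so $\mathcal{C}_{\mathcal{L}}(D,G)$ is MDS with $d=n-k+1$, matching the stated security parameter. This also explains the indexing of the four cases by the parity of $s$: with $n\equiv0\bmod4$ one has $k=n-3s+1$ odd iff $s$ is even (the Theorem \ref{con_w=0_1} families, cases (1) and (3)) and $k=n-3s$ odd iff $s$ is odd (the Theorem \ref{con_v=0} families, cases (2) and (4)). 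I expect the main obstacle to be the group‑theoretic bookkeeping of the third step, namely proving existence of an index‑two subgroup with the prescribed intersection with the two‑torsion according to whether $8\mid N_1$, and checking that the associated coset furnishes exactly $N_1/2$ (respectively $N_1/2-2$) negation‑stable ordinary places. Once ${\rm supp}(D)$ is fixed, noting that its negation‑stable pairs are precisely completely split places lets me invoke Theorems \ref{con_w=0_1} and \ref{con_v=0}; confirming their hypotheses ($v=w=0$, $s_1=1\ge0$, and the displayed range of $s$, which coincides with $2\le s<(N_1+2)/6$ and $2\le s<N_1/6$) is then routine.
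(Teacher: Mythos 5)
Your proposal is correct and follows essentially the same route as the paper: reduce to the MDS property via the criterion stated before the corollary, take $D$ supported on the nontrivial coset of an index-two subgroup of ${\rm Jac}(\mathfrak{C}_1)$ (the paper's explicit choice is $\mathcal{G}=\langle R_1,2R_2\rangle$, with $Q_2,Q_3$ deleted from the coset when $8\nmid N_1$), and conclude with the same parity argument that an odd number $k$ of coset elements sums into the coset, which misses $Q_1$. Your added remarks on negation-stability of the coset and on deducing MDS-ness of $\mathcal{C}_{\mathcal{L}}(D,H)$ by duality are sound refinements of details the paper leaves implicit.
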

\begin{proof}
    (1) Since $8\mid N_1$, we have $\ell_2\ge 2$. Let $\mathcal{G} = \langle R_1, 2R_2\rangle$. Then $|\mathcal{G}| = N_1/2$ and $Q_i\in \mathcal{G}$ for $i=1,2,3,\infty$. Let $\mathcal{G}^\prime = R_2 \oplus \mathcal{G}$ be the coset of $\mathcal{G}$ and $D = \sum_{P\in \mathcal{G}^\prime} P$. We apply Theorem \ref{con_w=0_1} to $F_1/\mathbb{F}_q(x)$ and obtain the LCP. Let $k = N_1/2-3s+1$. Now we prove that $\mathcal{C}_{\mathcal{L}}(D,G)$ is an MDS AG Code. Suppose, on the contrary, that there exist $k$ distinct places $P_{i_1},\cdots,P_{i_k}\in{\rm supp}(D)$ such that $P_{i_1}+\cdots+ P_{i_k} \sim Q_1 + (N_1/2-3s)Q_\infty$. Therefore, by \cite[Prop. 6.1.7]{Algebraic_Function_Fields_and_Codes}, we have
    \begin{align*}
        P_{i_1}\oplus\cdots\oplus P_{i_k} = Q_1 &\Longleftrightarrow (R_2 \oplus P_{i_1}^\prime) \oplus\cdots\oplus (R_2 \oplus P_{i_k}^\prime) = Q_1 \qquad \big( P_{i_j}^\prime\in \mathcal{G} \big)\\
        & \Longleftrightarrow k R_2 \oplus P = Q_1 \qquad \big( P\in \mathcal{G} \big) \\
        & \Longleftrightarrow R_2 \in \mathcal{G}. \qquad\big( \text{A contradiction arises.} \big)
    \end{align*}
    The last equivalence arises from the fact that $k$ is an odd number. Hence, $\mathcal{C}_{\mathcal{L}}(D,G)$ is MDS and we obtain that the pair $(\mathcal{C}_{\mathcal{L}}(D,G),\mathcal{C}_{\mathcal{L}}(D,H))$ with parameters
    \begin{gather*}
        [N_1/2,N_1/2-3s+1,3s]_q\text{ and }[N_1/2,3s-1,N_1/2-3s+2]_q
    \end{gather*}
    respectively is an LCP with the security parameter $3s$. 

    (2) Similar to the proof of (1), we consider $\mathcal{G}^\prime = R_2 \oplus \mathcal{G}$, the coset of $\mathcal{G}= \langle R_1, 2R_2\rangle$. Let $D = \sum_{P\in \mathcal{G}^\prime} P$. We apply Theorem \ref{con_v=0} to $F_1/\mathbb{F}_q(x)$ and then obtain the result. 

    (3) Since $8\nmid N_1$, we have $\ell_1=\ell_2=1$. Without loss of generality, we may assume that $Q_1 = m_1 R_1$. Let $\mathcal{G} = \langle R_1, 2R_2\rangle$. Then  $Q_1,Q_\infty\in \mathcal{G}$ and $Q_2,Q_3\notin \mathcal{G}$. Subsequently, let $\mathcal{G}^\prime = R_2 \oplus \mathcal{G}$ be the coset of $\mathcal{G}$ and $D = \sum_{P\in \mathcal{G}^\prime\backslash\{Q_2,Q_3\}} P$. We apply Theorem \ref{con_w=0_1} to $F_1/\mathbb{F}_q(x)$ and obtain the LCP. Let $k = N_1/2-3s-1$. Now we prove that $\mathcal{C}_{\mathcal{L}}(D,G)$ is an MDS AG Code. Suppose, on the contrary, that there exist $k$ distinct places $P_{i_1},\cdots,P_{i_k}\in{\rm supp}(D)$ such that $P_{i_1}+\cdots+ P_{i_k} \sim Q_1 + (N_1/2-3s-2)Q_\infty$. Therefore, we have
    \begin{align*}
        P_{i_1}\oplus\cdots\oplus P_{i_k} = Q_1 &\Longleftrightarrow (R_2 \oplus P_{i_1}^\prime) \oplus\cdots\oplus (R_2 \oplus P_{i_k}^\prime) = Q_1 \qquad \big( P_{i_j}^\prime\in \mathcal{G} \big)\\
        & \Longleftrightarrow k R_2 \oplus P = Q_1 \qquad \big( P\in \mathcal{G} \big) \\
        & \Longleftrightarrow R_2 \in \mathcal{G}. \qquad\big( \text{A contradiction arises.} \big)
    \end{align*}
    The last equivalence arises from the fact that $k$ is an odd number. Hence, $\mathcal{C}_{\mathcal{L}}(D,G)$ is MDS and we obtain that the pair $(\mathcal{C}_{\mathcal{L}}(D,G),\mathcal{C}_{\mathcal{L}}(D,H))$ with parameters
    \begin{gather*}
        [N_1/2-2,N_1/2-3s-1,3s]_q\text{ and }[N_1/2-2,3s-1,N_1/2-3s]_q
    \end{gather*}
    respectively is an LCP with the security parameter $3s$. 

    (4) Similar to the proof of (3), we consider $\mathcal{G}^\prime = R_2 \oplus \mathcal{G}$, the coset of $\mathcal{G}= \langle R_1, 2R_2\rangle$. Let $D = \sum_{P\in \mathcal{G}^\prime\backslash\{Q_2,Q_3\}} P$. We apply Theorem \ref{con_v=0} to $F_1/\mathbb{F}_q(x)$ and then obtain the result. The proof is completed. 
\end{proof}

In the following, we provide some specific examples about LCPs of AG Codes over the elliptic function fields and hyperelliptic function fields.

\begin{example}\label{LCP_EC_ex}

    {\rm (1)} Let $p$ be an odd prime number with $p\equiv 2 \bmod 3$. By \cite[Lem. 17]{Optimal_Locally_Repairable_Codes_Via_Elliptic_Curves}, the elliptic curve $\mathfrak{C}_1$ over $\mathbb{F}_{p^2}$ defined by 
    \[
        y^2 = x^3+1,
    \]
    is maximal. Let $p = 3k+2$ and $k = 2l-1$. Then we derive that $6\mid p^2-1$ and
    \[
        y^2 = x^3+1 = (x-\alpha)(x-\alpha^3)(x-\alpha^5),
    \]
    where $\alpha$ is a $6$-th primitive unit root in $\mathbb{F}_{p^2}$. 
    
    When $2\mid l$, we have $8\mid p^2 + 2p + 1$. By Corollary~\ref{LCP_EC}(1) and (2), we can obtain the following LCPs of MDS AG Codes:
    \begin{itemize}
        \item For $2\le s <(p^2 + 2p + 3)/6$ and $2\mid s$, the LCP $(\mathcal{C}_{\mathcal{L}}(D,G),\mathcal{C}_{\mathcal{L}}(D,H))$ with parameters
        \begin{gather*}
            [(p^2 + 2p + 1)/2,(p^2 + 2p + 1)/2-3s+1,3s]_{p^2}\\
            \text{ and }[(p^2 + 2p + 1)/2,3s-1,(p^2 + 2p + 1)/2-3s+2]_{p^2}
        \end{gather*}
        has the security parameter $3s$. 
        \item For $2\le s <(p^2 + 2p + 1)/6$ and $2\nmid s$, the LCP  $(\mathcal{C}_{\mathcal{L}}(D,G),\mathcal{C}_{\mathcal{L}}(D,H))$ with parameters
        \begin{gather*}
            [(p^2 + 2p + 1)/2,(p^2 + 2p + 1)/2-3s,3s+1]_{p^2}\\
            \text{ and }[(p^2 + 2p + 1)/2,3s,(p^2 + 2p + 1)/2-3s+1]_{p^2}
        \end{gather*}
        has the security parameter $3s+1$. 
    \end{itemize}

    When $2\nmid l$, we have $8\nmid p^2 + 2p + 1$. By Corollary~\ref{LCP_EC}(3) and (4), we can obtain the following LCPs of MDS AG Codes:
    \begin{itemize}
        \item For $2\le s <(p^2 + 2p + 3)/6$ and $2\mid s$, the LCP $(\mathcal{C}_{\mathcal{L}}(D,G),\mathcal{C}_{\mathcal{L}}(D,H))$ with parameters
        \begin{gather*}
            [(p^2 + 2p + 1)/2-2,(p^2 + 2p + 1)/2-3s-1,3s]_{p^2}\\
            \text{ and }[(p^2 + 2p + 1)/2-2,3s-1,(p^2 + 2p + 1)/2-3s]_{p^2}
        \end{gather*}
        has the security parameter $3s$. 
        \item For $2\le s <(p^2 + 2p + 1)/6$ and $2\nmid s$, the LCP  $(\mathcal{C}_{\mathcal{L}}(D,G),\mathcal{C}_{\mathcal{L}}(D,H))$ with parameters
        \begin{gather*}
            [(p^2 + 2p + 1)/2-2,(p^2 + 2p + 1)/2-3s-2,3s+1]_{p^2}\\
            \text{ and }[(p^2 + 2p + 1)/2-2,3s,(p^2 + 2p + 1)/2-3s-1]_{p^2}
        \end{gather*}
        has the security parameter $3s+1$. 
    \end{itemize}

    {\rm (2)} Let $q$ be a power of an odd prime number with $q\equiv -1 \text{ or } 2g+1 \bmod 4g$ and $g\ge1$. By \cite[Thm. 1]{A_note_on_certain_maximal_hyperelliptic_curves}, the curve $\mathfrak{C}_g$ over $\mathbb{F}_{q^2}$ defined by 
    \[
        y^2 = x^{2g+1}+x = x\prod_{i = 1}^{2g}(x- \alpha^{2i-1}),
    \]
    is maximal where $\alpha$ is a $4g$-th primitive unit root in $\mathbb{F}_{q^2}$. By Theorem \ref{LCP_HEC}, we can obtain the following LCPs:
    \begin{itemize}
        \item For $2\le s <(q^2+(2q-3)g+1)/(2g+1)$, the LCP $(\mathcal{C}_{\mathcal{L}}(D,G),\mathcal{C}_{\mathcal{L}}(D,H))$ with parameters
        \begin{gather*}
            [q^2+2g(q-1)-1,q^2+2g(q-1-s)-s,\ge (2g+1)s-g]_{q^2}\\
            \text{ and }[q^2+2g(q-1)-1,(2g+1)s-1,\ge q^2+2g(q-1-s)-s+1-g]_{q^2}
        \end{gather*}
        has the security parameter $d(\mathcal{C}_{\mathcal{L}}(D,G))\ge (2g+1)s-g$. 
        \item For $2\le s <(q^2+(2q-3)g)/2g$, the LCP $(\mathcal{C}_{\mathcal{L}}(D,G),\mathcal{C}_{\mathcal{L}}(D,H))$ with parameters
        \begin{gather*}
            [q^2+2g(q-1)-1,q^2+2g(q-1-s)-1,\ge 2gs+1-g]_{q^2}\\
            \text{ and }[q^2+2g(q-1)-1,2gs,\ge q^2+2g(q-1-s)-g]_{q^2}
        \end{gather*}
        has the security parameter $d(\mathcal{C}_{\mathcal{L}}(D,G))\ge 2gs+1-g$.
        \item For $2\le s <(q^2+(2q-3)g)/(2g+1)$, the LCP $(\mathcal{C}_{\mathcal{L}}(D,G),\mathcal{C}_{\mathcal{L}}(D,H))$ with parameters
        \begin{gather*}
            [q^2+2g(q-1)-1,q^2+2g(q-1-s)-s-1,\ge (2g+1)s+1-g]_{q^2}\\
            \text{ and }[q^2+2g(q-1)-1,(2g+1)s,\ge q^2+2g(q-1-s)-s-g]_{q^2}
        \end{gather*}
        has the security parameter $d(\mathcal{C}_{\mathcal{L}}(D,G))\ge (2g+1)s+1-g$.
    \end{itemize}
\end{example}

Indeed, for the above curve $\mathfrak{C}_g$ in Example \ref{LCP_EC_ex}(2), if we modify the construction of the divisor $H$ in Theorems~\ref{con_w=0_1} and \ref{con_w=0_2}, then we can obtain some LCD codes.  

\begin{theorem}\label{LCD_HyperEC}
    Let $q$ be a power of an odd prime number with $q=2g+1$ and $g\ge1$. Let $F_g/\mathbb{F}_{q^2}$ be a maximal Kummer extension defined by the affine equation
    \[
        \mathfrak{C}_g: y^2 = x^q+x = \prod_{i=1}^{2g+1}(x-\alpha_i),
    \]
    where $\alpha_i\in\mathbb{F}_q$ are pairwise distinct elements in $\mathbb{F}_{q^2}$. Define the divisor $D = (y^t-1)_0$ with $t\ge4$ and $t\mid 4g$, then we can derive the following LCD codes:   
    \begin{itemize}
        \item[(1)] If we define 
        \[
            G = \sum_{i=1}^g Q_i + (4g+2)Q_\infty,
        \]
        then $\mathcal{C}_{\mathcal{L}}(D,G)$ is a $[t(2g+1),4g+3,\ge t(2g+1)-5g-2]_{q^2}$ LCD code. 
        \item[(2)] If we define 
        \[
            G = \sum_{i=1}^g Q_i + (t-2)Q_{2g+1} + (4g+1)Q_\infty,
        \]
        then $\mathcal{C}_{\mathcal{L}}(D,G)$ is a $[t(2g+1),4g+t,\ge 2tg-5g+1]_{q^2}$ LCD code. 
    \end{itemize}
\end{theorem}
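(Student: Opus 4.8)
The plan is to exhibit, for each $G$, the dual $\mathcal{C}_{\mathcal{L}}(D,G)^{\bot}$ explicitly and then recognise the pair $(\mathcal{C}_{\mathcal{L}}(D,G),\mathcal{C}_{\mathcal{L}}(D,G)^{\bot})$ as an LCP, which is precisely the LCD condition. First I would analyse $D=(y^t-1)_0$. As $q=2g+1$ is a power of the odd characteristic, the $\alpha_i$ are the ramified places, $v_{Q_\infty}(y)=-(2g+1)$, so $(y^t-1)_\infty=t(2g+1)Q_\infty$ and $\deg D=t(2g+1)=n$. Since $t\mid 4g=2(q-1)\mid q^2-1$, every $t$-th root of unity $\zeta$ lies in $\mathbb{F}_{q^2}^{\times}$ and $\zeta^2\in\mathbb{F}_q^{\times}$; the fibre condition $\prod_i(x-\alpha_i)=x^q+x=\zeta^2$ reads ${\rm Tr}_{\mathbb{F}_{q^2}/\mathbb{F}_q}(x)=\zeta^2$ and has exactly $q$ solutions in $\mathbb{F}_{q^2}$, none a root of $x^q+x$. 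Taking the branch $y=\zeta$ over each solution gives $2g+1$ rational places, simple because $y-\zeta$ has pole divisor $(2g+1)Q_\infty$; letting $\zeta$ vary produces $t(2g+1)$ distinct simple zeros, disjoint from $\{Q_i\}\cup\{Q_\infty\}$, so $D$ is a sum of $n$ rational places.

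Next I would compute the dual divisor. With $h=y^t-1$ we have $v_P(h)=1$ on ${\rm supp}(D)$, so Lemma~\ref{dual_AG Code} gives $\mathcal{C}_{\mathcal{L}}(D,G)^{\bot}=\mathcal{C}_{\mathcal{L}}(D,H)$ for $H=D-G+(\eta)$, $\eta=dh/h$. Because $f(x)=x^q+x$ has $f'(x)=1$, one gets $dh=\tfrac{t}{2}y^{t-2}dx$; combined with $(y)=\sum_iQ_i-(2g+1)Q_\infty$ and $(dx)=\sum_{i=1}^{2g+1}Q_i-3Q_\infty$ this yields
\[
(\eta)=(t-2)(y)+(dx)-(h)=(t-1)\sum_{i=1}^{2g+1}Q_i+(4g-1)Q_\infty-D,
\]
of degree $2g-2$, hence $H=(t-1)\sum_iQ_i+(4g-1)Q_\infty-G$. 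Since $2g-2<\deg G<n$ in both items, $\dim\mathcal{C}_{\mathcal{L}}(D,G)=\ell(G)=\deg G+1-g$ and $d\ge n-\deg G$, giving the stated parameters.

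For item (2) the LCD property is then Lemma~\ref{LCP_criterion} for $(\mathcal{C}_{\mathcal{L}}(D,G),\mathcal{C}_{\mathcal{L}}(D,H))$. Inserting $G=\sum_{i=1}^{g}Q_i+(t-2)Q_{2g+1}+(4g+1)Q_\infty$ and using $t\ge4$ gives $\gcd(G,H)=\sum_{i=1}^{g}Q_i+Q_{2g+1}-2Q_\infty$, of degree $g-1$. I would show it is non-special by writing $u\in\mathcal{L}(\gcd(G,H))$ as $u=a(x)+b(x)y$: at a ramified place the $x$-part and the $y$-part have opposite parity of valuation, forcing $a$ to be a polynomial vanishing at $Q_\infty$ and $b=p(x)/\prod_{i}(x-\alpha_i)$ with $\deg p<0$, whence $u=0$ and $\ell(\gcd(G,H))=0$. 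From $G+H=D+(\eta)$ one has ${\rm lmd}(G,H)-D\sim(\eta)-\gcd(G,H)$, so $\ell({\rm lmd}(G,H)-D)=i(\gcd(G,H))=0$ and ${\rm lmd}(G,H)-D$ is non-special as well; Lemma~\ref{LCP_criterion} then gives $\mathcal{C}_{\mathcal{L}}(D,G)\oplus\mathcal{C}_{\mathcal{L}}(D,G)^{\bot}=\mathbb{F}_{q^2}^{\,n}$, i.e. the LCD property.

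Item (1) is the hard part. Here $v_{Q_\infty}(H)=-3$, so $\gcd(G,H)=\sum_{i=1}^{g}Q_i-3Q_\infty$ has degree $g-3$ and ${\rm lmd}(G,H)-D\sim(\eta)-\gcd(G,H)$ has $\ell=i(\gcd(G,H))=2$; thus Lemma~\ref{LCP_criterion} is unavailable and the clean vanishing $f_1-f_2\in\mathcal{L}({\rm lmd}(G,H)-D)=\{0\}$ fails. I would instead prove $\mathcal{C}_{\mathcal{L}}(D,G)\cap\mathcal{C}_{\mathcal{L}}(D,G)^{\bot}=\{0\}$ directly. Since $\ell(\gcd(G,H))=0$ (again by the $a(x)+b(x)y$ argument), the identity $\dim(\mathcal{C}_{\mathcal{L}}(D,G)+\mathcal{C}_{\mathcal{L}}(D,G)^{\bot})=n-\dim\big((\mathcal{L}(G)+\mathcal{L}(H))\cap\mathcal{L}({\rm lmd}(G,H)-D)\big)$ shows the code is LCD iff this last intersection is $0$; equivalently, by the residue theorem (only $Q_\infty$ survives, where $v_{Q_\infty}(\eta)=4g-1$), the symmetric form $B(f,f')={\rm res}_{Q_\infty}(ff'\eta)$ must be non-degenerate on $\mathcal{L}(G)$. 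The main obstacle is exactly this local statement at $Q_\infty$: using $\mathcal{L}({\rm lmd}(G,H)-D)=(y^t-1)\,\mathcal{L}(M)$ with $M=(t-2)\sum_{i=1}^{g}Q_i+(t-1)\sum_{i=g+1}^{2g+1}Q_i+(4g+2-t(2g+1))Q_\infty$, I would expand its two generators against a basis of $\mathcal{L}(G)$ in the uniformiser at $Q_\infty$ and check that no nonzero combination is reproduced, i.e. that the order-$(4g-1)$ vanishing of $\eta$ does not create a common Laurent tail; once this is settled the parameters follow as in item~(2).
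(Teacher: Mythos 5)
Your setup (the complete splitting of the fibres supporting $D$, the computation $(\eta)=(t-1)\sum_{i=1}^{2g+1}Q_i+(4g-1)Q_\infty-D$ for $\eta=d(y^t-1)/(y^t-1)$, hence $\mathcal{C}_{\mathcal{L}}(D,G)^{\bot}=\mathcal{C}_{\mathcal{L}}(D,H)$ with $H=D-G+(\eta)$) matches the paper's, which does the same computation in the subfield $\mathbb{F}_{q^2}(y)$ instead of $\mathbb{F}_{q^2}(x)$. Your treatment of item (2) is complete and correct: $\gcd(G,H)=\sum_{i=1}^{g}Q_i+Q_{2g+1}-2Q_\infty$ is non-special of degree $g-1$ (your parity argument works; alternatively it is linearly equivalent to $\sum_{i=1}^{g}Q_i-Q_{2g+1}$, so Theorem \ref{non-special_divisors_cons} together with Lemma \ref{non_special_g-1} also gives it), ${\rm lmd}(G,H)-D=(\eta)-\gcd(G,H)$ is then non-special of degree $g-1$ as well, and Lemma \ref{LCP_criterion} finishes that case. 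The stated parameters follow from $2g-2<\deg G<n$ exactly as you say.

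The gap is item (1), which you explicitly leave open. You are right that there $\gcd(G,H)=\sum_{i=1}^{g}Q_i-3Q_\infty$ has degree $g-3$ and $\ell({\rm lmd}(G,H)-D)=2$, so Lemma \ref{LCP_criterion} is unavailable; but your substitute --- non-degeneracy of the residue form $B(f,f')={\rm res}_{Q_\infty}(ff'\eta)$ on the $(4g+3)$-dimensional space $\mathcal{L}(G)$ --- is only announced (``I would expand \dots and check''), never verified, and that verification is the entire content of the claim in case (1). So item (1) is not proved. For comparison, the paper disposes of (1) by asserting that $(\mathcal{C}_{\mathcal{L}}(D,G),\mathcal{C}_{\mathcal{L}}(D,H))$ is an LCP ``by Theorem \ref{con_w=0_1}''; note, however, that the $H$ of item (1) equals the $H$ of Theorem \ref{con_w=0_1} (with $s=t-2$) plus the principal divisor $(x-\alpha_{2g+1})$, so $\mathcal{C}_{\mathcal{L}}(D,H)$ is only a coordinatewise rescaling of the code that theorem pairs with $\mathcal{C}_{\mathcal{L}}(D,G)$, and the LCP property is not in general preserved when one member of the pair is rescaled. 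In short, you have put your finger on a real soft spot --- the step you could not complete is precisely the one the paper passes over --- but your proposal does not close it either, so as a proof of the full theorem it is incomplete.
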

\begin{proof}
    By \cite[Prop. 3.7.10]{Algebraic_Function_Fields_and_Codes}, all rational places $P_\alpha$ in $\mathbb{P}_{\mathbb{F}_{q^2}(y)}$, except for the pole of $y$, are unramified in the extension $F_g/\mathbb{F}_{q^2}(y)$ and 
    \[
        d(Q_\infty|P_\infty) = 3(q-1) = 6g,
    \]
    where $P_\alpha$ corresponds to $y-\alpha$ and $P_\infty$ corresponds to the pole of $y$. Therefore we have
    \[
        {\rm Diff}(F_g/\mathbb{F}_{q^2}(y)) = 6g Q_\infty, 
    \]
    and 
    \[
        (dy) = -2(y)_\infty + {\rm Diff}(F_g/\mathbb{F}_{q^2}(y)) = (2g-2) Q_\infty.
    \]
    Since $x^q+x$ is the trace map from $\mathbb{F}_{q^2}$ to $\mathbb{F}_{q}$ for every $\alpha\in\mathbb{F}_{q^2}$ with $\alpha^{4g}=\alpha^{2(q-1)}=1$, we can conclude that $x^q+x = \alpha^2$ has $q$ distinct roots i.e. $P_\alpha$ splits completely in $F_g/\mathbb{F}_{q^2}(y)$. Thus, $\deg D = t(2g+1)$. 

    {\rm (1)} Let 
    \[
        H = \sum_{i=1}^g (t-2)Q_i + \sum_{i=g+1}^{2g+1} (t-1)Q_i - 3Q_\infty.
    \]
    Then we have
    \begin{align*}
        G+H-D & = \sum_{i=1}^{2g+1} (t-1)Q_i + (4g-1) Q_\infty - (y^t-1)_0 \\
        & = \sum_{i=1}^{2g+1} (t-1)Q_i - (2g+1)(t-1) Q_\infty + (2g-2)Q_\infty - (y^t-1)\\
        & = (y^{t-1}) + (dy) - (y^t-1)\\
        & = \bigg(\frac{d(y^t-1)}{y^t-1}\bigg)
    \end{align*}
    where $v_{P}(y^t-1)=1$ for all $P\in{\rm supp}(D)$. By Lemma \ref{dual_AG Code}, we derive that $\mathcal{C}_{\mathcal{L}}(D,G)^{\perp} = \mathcal{C}_{\mathcal{L}}(D,H)$. On the other hand, we know that $(\mathcal{C}_{\mathcal{L}}(D,G),\mathcal{C}_{\mathcal{L}}(D,H))$ is an LCP by Theorem \ref{con_w=0_1}. Hence, $\mathcal{C}_{\mathcal{L}}(D,G)$ is an LCD code with parameters $[t(2g+1),4g+3,\ge t(2g+1)-5g-2]_{q^2}$. 

    {\rm (2)} Let 
    \[
        H = \sum_{i=1}^g (t-2)Q_i + \sum_{i=g+1}^{2g} (t-1)Q_i + Q_{2g+1} - 2Q_\infty.
    \]
    Similar to the proof of (1), we have $\mathcal{C}_{\mathcal{L}}(D,G)^{\perp} = \mathcal{C}_{\mathcal{L}}(D,H)$. It follows that $\mathcal{C}_{\mathcal{L}}(D,G)$ is an LCD code. The proof is completed. 
\end{proof}

In the following, we provide a specific example of LCD codes over hyperelliptic function field with genus $g=2$.

\begin{example}\label{LCD_HyperEC_ex}
    Let $q = 5$ and $g = 2$. Consider the function field $F_2/\mathbb{F}_{5^2}$ defined by
    \[
        \mathfrak{C}_2: y^2 = x^5+x. 
    \]
    Let $t = 4g =8$. Then we obtain 
    \begin{itemize}
        \item[(1)] a $[40,11,\ge 28]_{5^2}$ LCD code $\mathcal{C}_{\mathcal{L}}(D,G)$ with $D = (y^8-1)_0$ and $G = Q_1+Q_2 + 10 Q_\infty$, 
        \item[(2)] a $[40,16,\ge 23]_{5^2}$ LCD code $\mathcal{C}_{\mathcal{L}}(D,G)$ with $D = (y^8-1)_0$ and $G = Q_1+Q_2 + 6Q_5 + 9 Q_\infty$. 
    \end{itemize}
    In addition, the above LCD codes have the best-known parameters according to \cite{Mint}. 
\end{example}

\subsection{LCPs of AG Codes over the other Function Fields}

In the following, we will consider the Kummer extension with $w\neq0$ defined in \cite{kawakitaKummerCurvesTheir2003a} and present a family of LCPs of AG Codes. Let $F/\mathbb{F}_q$ denote the function field defined by 
\begin{equation}\label{other_F}
    y^m = x \cdot(x-1)^{m^\prime}\cdot\prod_{i = 1}^{m-m^\prime} (x-\alpha_i),
\end{equation}
where $\alpha_i\in\mathbb{F}_q\backslash\{0,1\}$ are pairwise distinct elements, $m\mid q-1$ and $m^\prime\mid m$. By Section \ref{construction}, the function field $F/\mathbb{F}_q$ has genus $g = \frac{m(m-m^\prime)}{2}$. Let $Q_i$ be the only rational place corresponding to the zero of $x-\alpha_i$, $Q_0$ the only rational place corresponding to the zero of $x$ and $Q_\infty$ the unique place at infinity. Let $N$ denote the number of rational places in $F/\mathbb{F}_q$. 

\begin{theorem}\label{LCP_w_neq0}
    Let $f(x) = \prod_{i = 1}^{m-m^\prime} (x-\alpha_i)$ and 
    \[
        n = \begin{cases}
            N - m - 2, &\text{if $f(1)$ is an $m^\prime$-th power in $\mathbb{F}_q$},\\
            N - m + m^\prime - 2 , &\text{otherwise}.
        \end{cases}
    \]Then for $m\le s <(n-g+1)/(m+1)$, the pair $(\mathcal{C}_{\mathcal{L}}(D,G),\mathcal{C}_{\mathcal{L}}(D,H))$ with parameters
        \begin{gather*}
            [n,n-(m+1)s,\ge (m+1)s+1-g]_q\\
            \text{ and }[n,(m+1)s,\ge n-(m+1)s+1-g]_q
        \end{gather*}
    respectively is an LCP with the security parameter $d(\mathcal{C}_{\mathcal{L}}(D,G))$. 
\end{theorem}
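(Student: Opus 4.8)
The plan is to realize Equation (\ref{other_F}) as an instance of the general Kummer extension (\ref{kum_diff}) with $v=0$ and then invoke Theorem \ref{con_v=0}. First I would read off the parameters: the factors $x$ and $(x-\alpha_i)$ for $1\le i\le m-m'$ are the totally ramified factors of exponent $1$, so $u=m-m'+1$; there is no factor whose exponent is coprime to $m$ other than these, so $v=0$; and the single factor $(x-1)^{m'}$ with $m'\mid m$ plays the role of the $\gamma$-factor, so $w=1$, $\gamma_1=1$, $\lambda_1'=m'$. Consequently $\lambda_0=u+\lambda_1'=(m-m'+1)+m'=m+1$, which gives $\lambda_0\equiv 1\bmod m$ and $\lfloor\lambda_0/m\rfloor=1$, while the genus formula of Section~\ref{construction} returns $g=m(m-m')/2$, as claimed.

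Next I would verify the hypotheses needed to apply Theorems~\ref{non-special_divisors_cons} and~\ref{con_v=0}. The conditions $v=0$, $\lambda_1'\mid m$, $\lambda_0\equiv 1\bmod m$, and $0\le v,w\le\lfloor\lambda_0/m\rfloor=1$ are immediate. The only substantive point is $s_t\ge 0$ for $1\le t\le m-2$. Using $\lfloor t(m+1)/m\rfloor=t$ for $1\le t\le m-1$, one gets $S_t=(m-m'+1)+\lceil tm'/m\rceil-t$, whence $s_t=S_t-S_{t+1}=1-\big(\lceil (t+1)m'/m\rceil-\lceil tm'/m\rceil\big)$; by Lemma~\ref{ceil_prop} the parenthesized difference lies in $\{0,1\}$, so $s_t\in\{0,1\}\ge 0$. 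This explains why, unlike the earlier families, no auxiliary hypothesis on $V_F$ is required here.

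The main work is to pin down $n=\deg D$, where $D$ is the sum of all rational places that split completely in $F/\mathbb{F}_q(x)$; equivalently $n=N$ minus the number of rational places lying over the ramified points $x\in\{0,1,\alpha_i,\infty\}$, since these are exactly $\mathrm{supp}(G)\cup\mathrm{supp}(H)$ in Theorem~\ref{con_v=0}. The points $x=0$, $x=\alpha_i$ and $x=\infty$ are totally ramified (their $x$-valuations of the defining product, namely $1,1,-(m+1)$, are coprime to $m$), contributing the $m-m'+2$ rational places $Q_0,Q_{\alpha_1},\dots,Q_{\alpha_{m-m'}},Q_\infty$. Over $x=1$ there are $(m,m')=m'$ places, each of ramification index $m/m'$; writing the equation locally as $u^{m'}=f(1)+O(x-1)$ and using $m'\mid m\mid q-1$ (so $\mathbb{F}_q$ contains the $m'$-th roots of unity), these places are all rational precisely when $f(1)$ is an $m'$-th power in $\mathbb{F}_q$, and otherwise none of them is rational. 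This dichotomy---the heart of the argument, and where I expect to lean on the description in \cite{kawakitaKummerCurvesTheir2003a}---yields $n=N-m-2$ in the first case and $n=N-m+m'-2$ in the second. In the latter case the places over $x=1$ still appear in $R_1\subseteq\mathrm{supp}(H)$, but they are harmless: being non-rational, they cannot meet $\mathrm{supp}(D)$, so the disjointness required by Theorem~\ref{con_v=0} is preserved.

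Finally, taking $D,G,H$ as in Theorem~\ref{con_v=0} with $\lambda_0=m+1$, the range $m\le s<(n-g+1)/(m+1)$ is exactly $m\le s<(n-g+1)/\lambda_0$, and substituting $\lambda_0=m+1$ into the parameters $[n,n-\lambda_0 s,\ge\lambda_0 s-g+1]_q$ and $[n,\lambda_0 s,\ge n-\lambda_0 s-g+1]_q$ produces precisely the stated parameters, with the security parameter $d(\mathcal{C}_{\mathcal{L}}(D,G))$ inherited from that theorem. The anticipated obstacle is confined to the rationality count over $x=1$ described above; everything else is bookkeeping against results already established.
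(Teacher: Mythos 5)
Your proposal is correct and follows essentially the same route as the paper: identify Equation (\ref{other_F}) as an instance of (\ref{kum_diff}) with $v=0$, $w=1$, $\lambda_0=m+1$, verify $s_t\ge 0$ via Lemma \ref{ceil_prop} and Equation (\ref{s_t_v=0}), determine $n$ from the rationality of the places over $x=1$ using \cite[Prop. 1.4(ii)]{kawakitaKummerCurvesTheir2003a}, and apply Theorem \ref{con_v=0}. Your explicit justification that in the non-$m'$-th-power case \emph{none} of the places over $x=1$ is rational (via $\mu_{m'}\subseteq\mathbb{F}_q$) is a welcome detail that the paper leaves implicit but that is needed both for the degree count and for the disjointness of $\mathrm{supp}(D)$ and $\mathrm{supp}(H)$.
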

\begin{proof}
    Keep the notations in Theorem \ref{con_v=0}. Let 
    \[
        R = \sum_{Q\in\mathbb{P}_F,Q\mid P_1} Q,
    \]
    where $P_1$ represents the zero of $x-1$ in $\mathbb{P}_{\mathbb{F}_q(x)}$. By \cite[Prop. 1.4(ii)]{kawakitaKummerCurvesTheir2003a}, all of the places lying over $P_1$ are rational if and only if $f(1)$ is an $m^\prime$-th power in $\mathbb{F}_q$. Therefore, if we can set 
    \[
        D = \begin{cases}
            \sum_{\substack{P\in\mathbb{P}_{F}\backslash\{Q_0,Q_i,Q_\infty,Q\}\\  P \text{ is rational}}}P, &\text{if $f(1)$ is an $m^\prime$-th power in $\mathbb{F}_q$},\\
            \sum_{\substack{P\in\mathbb{P}_{F}\backslash\{Q_0,Q_i,Q_\infty\}\\  P \text{ is rational}}}P , &\text{otherwise}.
        \end{cases}
    \]
    So we have $\deg D = n$. By Lemma \ref{ceil_prop} and Equation (\ref{s_t_v=0}), we can conclude that $s_t\ge 0$\,(defined in Theorem \ref{non-special_divisors_cons}) for $1\le t\le m-1$ which satisfies the assumption of Theorem \ref{non-special_divisors_cons}. Hence, we can apply Theorem \ref{con_v=0} to the Kummer extension $F/\mathbb{F}_q(x)$ and then we obtain our result. The proof is completed. 
\end{proof}

At last, we provide a specific example about LCP of AG Codes over the function field $F/\mathbb{F}_q$ defined by Equation (\ref{other_F}). 

\begin{example}\label{LCP_w_neq0_ex}
    Let $q = 11^2$ and $\mathbb{F}_q = \mathbb{F}_{11}(u)$. Let $F/\mathbb{F}_q$ denote the function field defined by 
    \[
        y^4 = x \cdot(x-1)^2\cdot(x-u-5)\cdot(x - 10u-9). 
    \]
    By Sagemath, we have $N = 210$. That is to say, $F/\mathbb{F}_q$ is a maximal function field. Moreover, we have $f(1) = (1-u-5)\cdot(1 - 10u-9) = 1$. It is obviously that $f(1)$ is a $2$-th power in $\mathbb{F}_q$. Hence by Theorem~\ref{LCP_w_neq0}, for $4\le s \le40$, the LCP $(\mathcal{C}_{\mathcal{L}}(D,G),\mathcal{C}_{\mathcal{L}}(D,H))$ with parameters
        \begin{gather*}
            [204,204-5s,\ge 5s-3]_{11^2}\text{ and }[204,5s,\ge 201-5s]_{11^2}
        \end{gather*}
    has the security parameter $d(\mathcal{C}_{\mathcal{L}}(D,G))\ge 5s-3$. 
\end{example}

\section{Conclusion}\label{conclusion}

In this paper, we introduce several methods for constructing LCPs of AG Codes via Kummer extensions. The key aspect of our work is that we determine explicit non-special divisors of degree $g$ and $g-1$ on Kummer extensions with specific properties. Consequently, we obtain many families of LCPs from subcovers of the BM curve, elliptic function fields, hyperelliptic function fields and other function fields. In addition, we obtain two families of LCD codes from certain maximal elliptic function fields and hyperelliptic function fields. Our results have significantly enriched the construction of LCPs of AG Codes. Moreover, we explicitly derive some families of LCPs of MDS AG Codes from elliptic function fields in Corollary~\ref{LCP_EC}, which demonstrates that their security parameters are optimal.

\section*{Acknowledgement}

The work was supported by Guangdong Basic and Applied Basic Research Foundation~ (Grant No. 2025A1515011764), the National Natural Science Foundation of China (No. 12441107),  Guangdong Major Project of Basic and Applied Basic Research (No. 2019B030302008) and Guangdong Provincial Key Laboratory of Information Security Technology (No. 2023B1212060026).

\bibliographystyle{ieeetr}
\bibliography{reference}

\end{document}